\documentclass[12pt]{report}
\usepackage{setspace}
\usepackage{subfigure}
\usepackage{xcolor}
\usepackage{xspace}
\usepackage{todo}
\usepackage{enumitem}
  \usepackage{forest}
  \definecolor{foldercolor}{RGB}{124,166,198}
  \definecolor{filecolor}{RGB}{78,78,78}
  \definecolor{changedfile}{RGB}{255,165,0}
\definecolor{newfile}{RGB}{34,139,34}
\usepackage{amssymb,graphicx,color}
\usepackage{amsfonts}
\usepackage{latexsym}
\usepackage{a4wide}
\usepackage{amsmath}
\usepackage{amsthm}
\usepackage{mathtools}
\usepackage{listings}
\usepackage{pgfplots}
\usepackage{geometry}
\usepackage{algorithm}
\usepackage[noend]{algpseudocode}
\makeatletter
\def\BState{\State\hskip-\ALG@thistlm}
\makeatother
\usepackage{tikz}
\usetikzlibrary{arrows.meta,positioning,decorations.markings,shapes,fit,backgrounds, decorations.pathreplacing}

\usetikzlibrary{arrows.meta,positioning,decorations.markings,shapes,fit,backgrounds}
\usetikzlibrary{decorations.pathreplacing}
\usepackage{xcolor}
\usepackage{caption}
\usepackage[
backend=biber,
style=numeric,
sorting=ynt
]{biblatex}
\usepackage{array}
\usepackage{booktabs}
\usepackage[T1]{fontenc}
\usepackage{fullpage}
\usepackage{lmodern}
\usepackage{graphicx}
\usepackage{fancyhdr}
\usepackage{multicol}
\usepackage{adjustbox}
\usepackage{float}
\usepackage{changepage}

\usepackage{multicol}

\usepackage[colorlinks=true,
            citecolor=blue,
            linkcolor=blue,
            urlcolor=blue,
            pdfstartview=FitH,
            bookmarks=true,
            bookmarksopen=true,
            bookmarksdepth=2,
            ]{hyperref}
            
\usepackage{cleveref} 
\definecolor{ForestGreen}{RGB}{34,139,34} 

\newtheorem{theorem}{Theorem}
\newtheorem{lemma}{Lemma}

\newtheorem{observation}{Observation}

\usepackage{listings,xcolor}

\definecolor{rustcomment}{RGB}{106,135,89}
\definecolor{rustkeyword}{RGB}{86,156,214}
\definecolor{ruststring}{RGB}{206,145,120}
\definecolor{rustnumber}{RGB}{181,206,168}
\definecolor{rusttype}{RGB}{78,201,176}

\lstdefinelanguage{Rust}{
    keywords={
        as, break, const, continue, crate, else, enum, extern, false, fn,
        for, if, impl, in, let, loop, match, mod, move, mut, pub, ref,
        return, self, Self, static, struct, super, trait, true, type, unsafe,
        use, where, while, async, await, dyn, abstract, become, box, do,
        final, macro, override, priv, typeof, unsized, virtual, yield
    },
    keywordstyle=\color{rustkeyword}\bfseries,
    ndkeywords={bool, u8, u16, u32, u64, u128, i8, i16, i32, i64, i128, f32, f64, char, str, String, Vec, Option, Result},
    ndkeywordstyle=\color{rusttype}\bfseries,
    sensitive=true,
    comment=[l]{//},
    commentstyle=\color{rustcomment}\itshape,
    string=[b]",
    stringstyle=\color{ruststring},
    numbers=left,
    numberstyle=\tiny\color{gray},
    numbersep=8pt,
}

\makeatletter
\renewcommand{\@makechapterhead}[1]{%
  \vspace*{20pt}
  {\parindent \z@ \raggedright \normalfont
    \ifnum \c@secnumdepth >\m@ne
        \huge\bfseries \@chapapp\space \thechapter
        \par\nobreak
        \vskip 10pt 
    \fi
    \interlinepenalty\@M
    \Huge \bfseries #1\par\nobreak
    \vskip 20pt 
  }}
\makeatother

\newcommand{\tarpon}{\textsf{Sharpedo}\xspace}

\definecolor{primaryblue}{RGB}{41, 128, 185}
\definecolor{secondaryblue}{RGB}{52, 152, 219}
\definecolor{lightgray}{RGB}{236, 240, 241}
\definecolor{darkgray}{RGB}{52, 73, 94}
\definecolor{successgreen}{RGB}{39, 174, 96}
\definecolor{warningred}{RGB}{231, 76, 60}

\tikzset{
    process/.style={
        rectangle, 
        rounded corners=2pt,
        minimum width=2.2cm, 
        minimum height=0.9cm,
        text centered,
        font=\footnotesize\sffamily,
        draw=primaryblue,
        fill=lightgray,
        line width=1pt,
        drop shadow={opacity=0.2, shadow xshift=1pt, shadow yshift=-1pt}
    },
    decision/.style={
        diamond, 
        aspect=1.8,
        minimum width=2cm, 
        minimum height=1.2cm,
        text centered,
        font=\footnotesize\sffamily,
        draw=secondaryblue,
        fill=white,
        line width=1pt,
        drop shadow={opacity=0.2, shadow xshift=1pt, shadow yshift=-1pt}
    },
    terminal/.style={
        rectangle,
        rounded corners=5pt,
        minimum width=2.2cm,
        minimum height=0.9cm,
        text centered,
        font=\footnotesize\sffamily\bfseries,
        draw=successgreen,
        fill=successgreen!20,
        line width=1.5pt,
        drop shadow={opacity=0.3, shadow xshift=1pt, shadow yshift=-1pt}
    },
    arrow/.style={
        -Stealth,
        line width=1pt,
        color=darkgray
    },
}

\usepackage{chngcntr}

\title{  	{ \includegraphics[scale=.06]{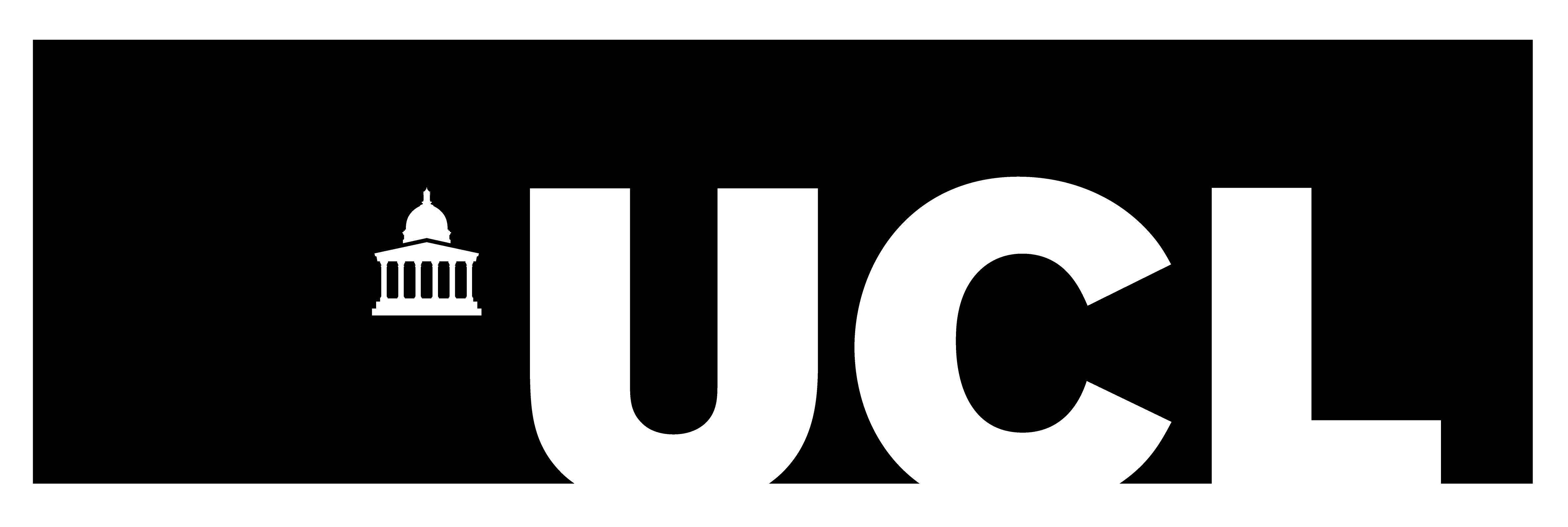}}\\
{{\Huge \tarpon: Dual-Mode Uncertified DAG-Based Consensus Protocol}}\\
		}
\date{Submission date: 15 September 2025}
\author{Zeno de Angeli\thanks{
{\bf Disclaimer:}
This report is submitted as part requirement for the Information Security MSc at UCL. It is
substantially the result of my own work except where explicitly indicated in the text. The report may be freely copied and distributed provided the source is explicitly acknowledged}
\\ \\
Information Security MSc\\ \\
Philipp Jovanovic\\Alberto Sonnino}

\begin{document}
\onehalfspacing
\maketitle
\chapter*{Acknowledgements}
\quad I would like to acknowledge both Philipp Jovanovic and Alberto Sonnino for their continued support throughout the making of this scientific report. I am particularly grateful for their help in guiding me to research such an interesting topic. Lastly, I would like to thank my peers for helping me throughout the academic year.
\begin{abstract}
    The increasing adoption of decentralized systems, in particular blockchain technology, has spurred research efforts in such distributed architecture field. Notably, consensus protocols have become a focal point of study for researches, with the goal of creating a protocol that would be both secure and performant. Consensus protocols have undergone a number of changes in recent years, going from classical linear-based consensus protocols to the most recent, highly performant, Directed Acyclic Graph based consensus protocols. In particular, the \textsf{Mysticeti}\xspace and \textsf{Mahi-Mahi}\xspace protocols both represent leading approaches to consensus protocols, leveraging a novel \textit{uncertified} Directed Acyclic Graph data structure to achieve substantial performance benefits compared to prior work\cite{MYSTICETI, MAHIMAHI}. \textsf{Mysticeti}\xspace and \textsf{Mahi-Mahi}\xspace differ in their underlying network assumptions, where the former makes assumptions about timing bounds for communication while the latter does not. \textsf{Mysticeti}\xspace trades robustness for lower levels of latency in ideal conditions, whereas \textsf{Mahi-Mahi}\xspace favours increased latency in order to ensure stronger progress guarantees in asynchronous settings. Although dual-mode protocols exist (i.e., protocols that can adapt do different network assumptions), none have been implemented to take full advantage of the novel data structure seen in the aforementioned protocols. Thus, our research project aims to trace the progression of consensus protocols, analyse how they address network assumptions constraints, and explore dual-mode protocols. Then, we present \tarpon, the first dual-mode \textit{uncertified} Directed Acyclic Graph consensus protocols which aims to achieve the best of both worlds: the high-performance of partially-synchronous protocols, while providing guarantees of progress even in asynchronous settings. Furthermore, \tarpon integrates a novel dynamic scheduling mechanism resulting in improved system efficiency and performance optimization. We conclude by providing formal proofs of \textit{safety} and \textit{liveness} for \tarpon, ensuring the correctness of the protocol.
\end{abstract}
\footnotesize
\tableofcontents
\normalsize
\setcounter{page}{1}

\chapter{Introduction}
\label{sec:introduction}
\quad The increase in popularity with blockchains and decentralised systems has lead researchers to increasingly dedicate their resources to address the technological challenges and limitations inherent in these systems, working to enhance scalability, security, and efficiency. Specifically, consensus protocols (i.e., protocols that allow a network of computers to come to an agreement on a common value) have become the centre of many research efforts. Consensus protocols, often, represent the bottleneck of blockchains, and distributed systems in general. There are two main metrics which determine the capabilities of a distributed system's latency and throughput. In the context of blockchains, we may have a user $A$ that decides to create transaction $tx_0$ to the system, which states that they are passing ownership of some value to another user $B$. The latency between submission of $tx_0$ till its finalisation (i.e., until consensus has been reached on $tx_0$), is dictated, in part, by the system's underlying consensus protocol. Moreover, the amount of transactions that a system can handle, typically measured in $tx/s$, is also bottle-capped by the consensus protocol. Underlying blockchains of famous distributed systems, such as Bitcoin and Ethereum, utilise a linear blockchain \cite{BITCOIN, ETHEREUM}, where a block holds a number of transactions. Meaning that a block is connected to the previously added block, creating a sequential list of blocks (i.e., a block chain). This structure, however, creates an inverse relationship between the two metrics discussed. For instance, decreasing the amount of transactions a block can hold decreases its latency, but, in turn, it also decreases the $tx/s$ (throughput). Furthermore, increasing participants of a consensus protocols does not improve either throughput or latency, as the blockchain remains linear, thus, only one block can be added per consensus. Recent blockchains have begun to adopt Directed Acyclic Graph (DAG) consensus protocols, which have been a breakthrough in terms of amount of transactions processed per second (throughput) and safety against faults \cite{BULLSHARK}. Instead of creating a sequence of blocks, DAG-based consensus protocols create a graph of blocks, which, after consensus, can be linearised. Throughput increases naturally, as each participant of consensus can propose their own block. However, these protocols retained a relatively high time-delay from the submission of a transaction to the finalisation of that same transaction, as each block requires a number of communication-delays between participants before it can be inserted into the DAG \cite{BULLSHARK,TUSK,HOTSTUFF}. This fact motivated the design of \textit{uncertified} DAG-based consensus protocols: protocols that achieved higher-throughput of "\textit{certified}" DAG-based consensus protocols, while reaching lower levels of latency \cite{MYSTICETI, MAHIMAHI}. These types of protocols take full advantage of the DAG structure to reach consensus, discarding the need of any additional communication between participants. Due to their novelty, \textit{uncertified} DAG-based consensus protocols present numerous research opportunities. Specifically, since consensus protocols are not only classified by their underlying data structure, but also by the different assumptions they make about their working environment, there is currently a lack of \textit{uncertified} DAG-based consensus protocols that take advantage of more flexible assumptions.

\section{Research Objectives}
\quad The design of consensus protocols requires assumptions to be made on the type of environment they should be made to operate. Beyond a threat model (i.e., what a malicious actor can do to disrupt the system), consensus protocols must also determine which network condition they can work in. These assumptions often force protocols to make sacrifices either in performance (such as latency) to provide more robust security, or in security to increase performance. However, there exist protocols that aim to combine network assumptions to achieve the best of both worlds, called \textbf{dual-mode consensus protocols}. Given the rapid evolution of this field, the \textit{uncertified} data structure employed in current state-of-the-art protocols that achieve optimal performance today lacks dual-mode equivalents. Therefore, the objective of this research is to build a consensus protocol that can adapt to different network settings, by combining the two state-of-the-art \textit{uncertified} DAG-based consensus protocols, \textsf{Mysticeti}\xspace and \textsf{Mahi-Mahi}\xspace, to create the first dual-mode \textit{uncertified} DAG-based consensus protocol \cite{MYSTICETI, MAHIMAHI}.
\newpage
\section{Contribution}
\quad This research project aims to fill the gap in current literature by providing the following contributions:
\begin{itemize}
    \item We present \tarpon, the first dual-mode \textit{uncertified} DAG-based consensus protocol, providing the best of both worlds: the low latency of partially-synchronous consensus protocols, combined with the added resilience intrinsic to asynchronous consensus protocols.
    \item We incorporate a dynamically changing scheduling system for \tarpon's dual-mode component, which will improve latency in periods of synchrony and asynchrony, accordingly. 
    \item We provide detailed algorithms and implementations, alongside extensive testing proving the functionality of \tarpon.
    \item We provide formal proofs that demonstrate the \textit{safety} and \textit{liveness} under byzantine assumptions for \tarpon and our dynamic dual-mode schedule. 
\end{itemize}
\section{General Overview}
\begin{itemize}
    \item Chapter~\ref{chap:LitReview}: We present an overview of consensus protocols, how they handle different restrictions, and how newer types of consensus protocols present clear opportunities for advancement.
    \item Chapter~\ref{chap:dual_mode_protocol}: This chapter introduces \tarpon, the solution to our problem statement.
    \item Chapter~\ref{chap:implementation}: This chapter further extends \tarpon, giving an overview of the practical implementations, alongside results obtained from the simulations. 
    \item Chapter~\ref{chap:Proofs}: We finish our presentation of \tarpon by providing formal proofs for the protocol, ensuring its correctness. 
    \item Chapter~\ref{chap:conclusion}: Lastly, we provide a conclusion to our project, alongside how the project can be advanced further.
\end{itemize}
\chapter{Literature Review}
\label{chap:LitReview}
\section{Consensus Protocols}
\quad Distributed systems are a collection of computers, commonly mentioned as nodes, who communicate with each other over a network, acting as one system to achieve a goal. The \textit{consensus} problem is defined as the requirement for all nodes in a system to agree on a value \cite{CFT,BYZAGREEMENT}. A consensus protocol aims to solve the \textit{consensus} problem which is required to achieve State Machine Replication (SMR) \cite{SMR}. This means that a machine that begins in the same state as another machine, given the same set of inputs in the same order, will give the same output. To achieve SMR, all the computers in the system must agree (i.e. reach consensus) on what inputs and in what order they should be processed. Consensus protocols are the core which resides in distributed networks allowing for SMR to happen \cite{SMR1}. They represent the protocols that are utilised to make sure that all participants of a network are able to achieve consensus on decisions, thus enabling the progress of the system. Consensus protocols aim to solve the fundamental issue of agreement in distributed systems. To be more specific, consensus protocols must satisfy the following properties:
\begin{itemize}
    \item \textit{Safety}: Safety means that "bad things will never happen". It is important to note that safety can \textbf{never} be violated, or else the entire system may enter an undesirable state (e.g., two different computers in a distributed system could have a different view of the system).
    \item \textit{Liveness}: Liveness means that "good things will, eventually, happen". Similarly to safety, an important factor of liveness is that progress will \textbf{eventually} be achieved. It is possible that during asynchronous periods, messages may take longer to reach other computers part of the distributed system, but, eventually (i.e., when the messages reach the other computers), progress will be achieved.
\end{itemize}
\subsection{Problem Specification} 
\label{sec:foundations}
\quad Reliable broadcast is a distributed systems primitive that guarantees the following: once an honest process broadcasts a message, all other honest processes will eventually receive the message \cite{ReliableBroadcast}. However, the reliable broadcast abstraction is not enough for distributed systems, as, given by their communicative nature, the ordering of messages is just as important. Atomic broadcast, is an abstraction that extends reliable broadcast by also guaranteeing messages will be received by other participants in the same order \cite{AB}. Atomic broadcast is equivalent to the \textit{consensus} problem, as participants of a system that solve atomic broadcast must agree on the order of messages, thus, \textit{consensus} can be reduced to atomic broadcast \cite{REDUCEDTO}. If a protocol solves atomic broadcast, they solve consensus, thus, sequentially, SMR can be achieved \cite{SMR}. Assuming that each process has the capability to broadcast messages to all other processes and to deliver messages that have been broadcast by any process in the system, a protocol that implements atomic broadcast guarantees the following properties \cite{NEWAB}:
\begin{itemize}
    \item \textit{Integrity}: For any message $m$, every process will, at most, deliver $m$ once, and only if $m$ has been broadcast by a process.
    \item \textit{Validity}: If process $p_i$ broadcasts a message $m$, then every process will eventually deliver $m$.
    \item \textit{Agreement}: If a process $p   _i$ delivers $m$, then all other processes eventually deliver $m$. 
    \item \textit{Total Order:} If a process $p_i$ delivers a message $m$ before another message $m'$, then no other process delivers $m'$ before $m$. 
\end{itemize}
\quad Thus, the common aim of consensus protocols is to solve atomic broadcast \cite{TUSK,BULLSHARK,MYSTICETI,MAHIMAHI}. This means consensus protocols must achieve the aforementioned properties to solve atomic broadcast. Together, when a protocol satisfies atomic broadcast, it also achieves the \textit{safety} and \textit{liveness} properties. While atomic broadcast is easy to achieve in ideal settings, the challenge lies in solving the problem under various network and threat models.
\section{Network Models}
\label{sec:network_models}
\quad Given the nature of distributed systems, communication uncertainty is modelled after the amount of power an adversary may have on delaying messages \cite{SYNCHRONY}. Therefore, consensus protocols are built with three network models in mind.
\paragraph{Synchronous Model} The synchronous model makes the assumption that there exists an upper bound, usually defined as $\Delta$, on the delivery of messages. The bound $\Delta$ can be set to any value, but once chosen, the system must adhere to this assumption throughout its operation \cite{SYNCHRONY}. There are advantages and disadvantages to such approach. If the synchronous assumptions hold, then they become easier to reason with. However, synchronous assumptions require precise timing assumptions because if $\Delta$ is set too high, then timeouts can potentially slow down the system, and if $\Delta$ is set too low, then, depending on network conditions, it will be hard to achieve progress. Lastly, there is also the possibility of a scenario where a validator sends a message to two distinct validators, and one may receive the message within $\Delta$ and another may receive the message after $\Delta$. This can break the \textit{safety} aspect of the model. 
\paragraph{Asynchronous Model}  Compared to the synchronous model, the asynchronous network model does not make \textbf{any} assumptions on the timing of messages. Instead, we assume that, eventually, messages sent out by honest participants are guaranteed to be received by other honest participants. A clear advantage of such approach is that we completely discard the need of calculating an upper bound $\Delta$ assumption. Furthermore, in periods of asynchrony, the system is guaranteed to make progress. However, as defined by the impossibility result of Fischer, Lynch, and Paterson (FLP) \cite{FLP}, it is impossible for a deterministic consensus protocol to reach consensus in the presence of even one faulty validator. If a node in a system is faulty, or is Byzantine, and it stops communicating with the remainder of the participants, then we are at risk of breaking either \textit{safety} or \textit{liveness} properties. Due to the asynchronous assumptions, we cannot utilise timeouts to "skip/ignore" malicious/faulty nodes. Imagine a scenario where a node is experiencing high levels of asynchrony, and it takes a long time for them to receive and send a message. We have made the assumption that the message sent, if the node is honest, will eventually be received, thus, skipping such node would break such assumptions. So, in an asynchronous consensus protocol, if a node is non-responsive, skipping the node could risk violating the \textit{safety} properties of the protocol, but if we wait for a response, we risk never making progress again, breaking the \textit{liveness} properties. 
\\
\null \quad Therefore, all protocols built upon asynchronous assumptions utilise some form of randomness in their consensus processes, as it has been proven that randomness can circumvent the results reached by FLP \cite{FLP,FLP1,FLP2}. This results from the fact that asynchronous consensus protocols, rather than guaranteeing termination, achieve termination with probability of 1. A popular method utilised in asynchronous consensus protocols requires the selection of leaders to be randomised (further defined in Section~\ref{sec:prop_slot_leader}). Each round of consensus selects a leader randomly from all participants, and there is a $p > 0$ probability of electing an honest leader. In such protocols, the choice of leader is often done retroactively: once enough participants propose a message at the end of the round (usually $2f+1$), a random leader is elected amongst all proposers. If the selected leader's proposal satisfies the protocol conditions, consensus is achieved. However, if the selected leader's proposal is invalid, incomplete, or if the leader fails to propose entirely, the protocol moves to the next round without progress. After enough rounds, the probability of making progress approaches $1$, ensuring \textit{liveness}.
\paragraph{Partially-Synchronous Model} Similarly to the synchronous model, partially-synchronous models make the assumptions that messages will be delivered within time bound $\Delta$, however, they differ in that they require certain assumptions to be satisfied. This assumption is called the \textit{Global Stabilisation Time} ($GST$), and it is a theoretical concept that represents a point in time where network conditions are ideal. Intuitively, this means that \textbf{after $GST$} has been reached (i.e., after the network conditions have stabilised), then messages will be received within the time bound $\Delta$. Similarly to synchronous network assumptions, these assumptions are easier to reason with, and allow for the utilisation of timeouts (often described as $GST + \Delta$) to guarantee \textit{liveness} to the system. The work by Dwork, Lynch and Stockmeyer (DLS) has been seminal for the creation of these "in-between" assumptions \cite{SYNCHRONY}. As proven by DLS, while \textit{liveness} is guaranteed only after $GST$ has been reached, \textit{safety} is always guaranteed in both periods of synchrony and asynchrony. 

\quad Consensus protocols are built with these assumptions in mind, thus each protocol will specify which assumption the protocol is built to endure. However, there also exist dual-mode protocols, which, as the name suggest, are protocols that can run under multiple network assumptions. These protocols, usually, utilise some mechanism to determine the condition of the network to dynamically switch mode \cite{DITTO, BULLSHARK}. Under ideal conditions, the protocol may run under synchronous assumptions, allowing for faster consensus (thus lower latency). However, if network conditions deteriorate, the protocol may decide to change to an asynchronous consensus protocol capable of making progress, albeit slower, under asynchronous conditions. As stated by many researches, these protocols, although providing the best of both worlds; are usually more complex, as they need to handle multiple distinct consensus protocols \cite{MAHIMAHI, MYSTICETI}. However, as we will see in Section~\ref{chap:dual_mode_protocol}, we address these complexities by providing a simplified approach to dual-mode mechanisms.
\section{Threat Models}
\quad Threat models represent a structured way to model possible threats that may target some systems. Given the unpracticality of protecting a system against all threats that one may think of, models have been created to simplify the process of protection against malicious entities.
\paragraph{Crash Fault Tolerance} A system deemed to be crash fault tolerant (CFT) is a system that is capable of handling a fixed amount of crash/faulty nodes in its system, where a faulty node stops communicating with the system \cite{CFT}. CFT protocols have a bound of $n \geq 2f+1$ nodes in the system, where $f$ represents the number of faulty nodes in the system. If all $f$ nodes are faulty, then there still exists $f+1$ (a majority of nodes, $> 50\%$, in the protocol that are able to reach consensus). Centralised protocols, such as Paxos \cite{PAXOS} and Raft \cite{RAFT}, run under CFT assumptions. Since these protocols are centralised, the main cause of concern is of nodes to crash during consensus, thus, CFT assumptions are enough to secure the system. However, decentralised systems introduce the risk of possible malicious participants during consensus, thus, such protocols must operate under stricter assumptions.
\paragraph{Byzantine Fault Tolerance} Similarly to CFT assumptions, Byzantine fault tolerant (BFT) protocols are capable of running under the presence of both faulty \textbf{and} malicious (also known as Byzantine) participants. More specifically, BFT protocols aim to solve the Byzantine agreement problem, a problem closely related to the consensus problem; however, it differs by the fact that processes must agree on a value in the presence of Byzantine processes \cite{BYZAGREEMENT}. Compared to CFT assumptions, BFT systems have a bound of $n\geq 3f+1$ total participants, where $f$ represents the number of faulty/Byzantine nodes \cite{BFT}. The bound of required participants increases due to the fact that Byzantine nodes, besides not communicating with the rest of the system, can also lie (sending false messages to participants) and equivocate (sending conflicting messages to different participants). Thus, a BFT system with $3f+1$ nodes ensures that even if $f$ participants are malicious, the remaining $2f+1$ honest participants can still outvote them. Furthermore, quorums are required to be of size $2f+1$, ensuring that any two different quorums will overlap with at least $f+1$ participants, and, since at most $f$ participants are faulty, both quorums will reach the same conclusion given the one honest participant. This is also known as \textbf{quorum intersection}, and is going to be relevant in the protocols that we will be analysing throughout this research paper. When dealing with Byzantine threats, we also have to strengthen our definition of atomic broadcast. As such, all BFT consensus protocols try to solve Byzantine Atomic Broadcast (BAB), which, compared to atomic broadcast, makes stronger assumptions that require all participants to be honest \cite{AB}. Both network models and threat models are assumptions critical to the development of consensus protocols. However, beyond these assumptions, the construction of the underlying blockchain data-structure itself is equally important. 
\section{Linear-chain Consensus Protocols} 
\quad In this section, we will show a brief overview of traditional linear-chain consensus protocols. The linear-chain aspect of consensus protocols refers to the underlying data structure that a distributed ledger can be based on. Each chain begins with a \textit{genesis} block, and future blocks get added onto each other (each block referencing a previous block), thus creating a "chain" of blocks. But how do blocks get proposed and added to the chain? We have popular chains, such as \textsf{Bitcoin}\xspace's, where blocks can be added by any individuals who are part of the network, and via the longest chain rule, the chain is created \cite{BITCOIN}. The longest chain rule dictates that honest nodes will always work on the longest chain they have available. However, such protocols only offer probabilistic finality, since a new longest chain could potentially exclude transactions that were accepted in a previous chain. Therefore, many other blockchains decided to utilise a form of blockchain that instead utilizes a committee to decide on the next block, ensuring the existence of a singular chain. Leader-based consensus protocols are consensus protocols that leverage the idea of a \textbf{quorum}. In such protocols, in every view (i.e., a point in time where a specific process is responsible for progressing the chain) a leader is elected (e.g., for example, via the amount of stakes that the participants have staked), and said leader proposes a block. Then, the remainder of the participants will vote on said blocks to determine whether it will be used to extend the block chain, as exemplified in Figure~\ref{fig:linear-chain}.
\vspace{2em}
\begin{figure}[H]
    \centering
    \makebox[\textwidth][c]{
\begin{tikzpicture}[
    scale=1.5,
    node distance=1.8cm,
    validator/.style={draw, minimum size=0.8cm},
    >=Stealth
]
\footnotesize
\node at (-5.5, 0.7) {\textbf{Leader}};
\node (l0) at (-4, 0.7) {$L_0$};
\node (l1) at (-2.5, 0.7) {$L_1$};
\node (l2) at (-1, 0.7) {$L_2$};
\node (l3) at (0.5, 0.7) {$L_3$};
\node (l4) at (2, 0.7) {$L_4$};
\node (l5) at (3.5, 0.7) {$L_5$};
\node[dashed] (l_cont) at (5, 0.7) {$L_k$};
\node[validator] (g) at (-5.5, 0) {$G$};
\node[validator] (b0) at (-4, 0) {$B_0$};
\node[validator] (b1) at (-2.5, 0) {$B_1$};
\node[validator] (b2) at (-1, 0) {$B_2$};
\node[validator] (b3) at (0.5, 0) {$B_3$};
\node[validator] (b4) at (2, 0) {$B_4$};
\node[validator] (b5) at (3.5, 0) {$B_5$};
\node[validator, dashed] (b_cont) at (5, 0) {$B_k$};
\draw[<-] (g.east) -- (b0.west);
\draw[<-] (b0.north) -- (l0.south);
\draw[<-] (b0.east) -- (b1.west);
\draw[<-] (b1.east) -- (b2.west);
\draw[<-] (b2.east) -- (b3.west);
\draw[<-] (b3.east) -- (b4.west);
\draw[<-] (b4.east) -- (b5.west);

\draw[<-] (b1.north) -- (l1.south);
\draw[<-] (b2.north) -- (l2.south);
\draw[<-] (b3.north) -- (l3.south);
\draw[<-] (b4.north) -- (l4.south);
\draw[<-] (b5.north) -- (l5.south);
\draw[<-, dashed] (b_cont.north) -- (l_cont.south);
\draw[<-, dashed] (b5.east) -- (b_cont.west);
\end{tikzpicture}}
    \caption{Illustration of a linear-chain blockchain, where each block $B$ is proposed by a leader $L$. A blockchain always begins with a hardcoded genesis block, denoted as $G$.}
    \label{fig:linear-chain}
\end{figure}
\quad These protocols require what is known as a \textbf{quorum}, which represents the minimum amount (i.e., $2f+1$) of participants that "accept" a block to add it \textit{safely} to the block chain. \textsf{Practical Byzantine Fault Tolerant}\xspace (\textsf{PBFT}\xspace) is a seminal protocol which demonstrates that it is possible to achieve consensus in a decentralised system \cite{PBFT}. \textsf{PBFT}\xspace works as follows: once process $p$ receives message $m$ from a client, the protocol begins with a request to other processes (\textbf{pre-prepare}). Then, the processes verify the request content and form a \textbf{quorum certificate} (\textbf{prepare}). Finally, they \textbf{commit} to executing the operation, which they will then reply to the client \cite{PBFT}. \textsf{PBFT}\xspace represents the first instance of a practical implementation of a BFT consensus protocol, however, its implementation requires a message complexity of $O(n^2)$, as all processes must communicate with one another to achieve consensus. Figure~\ref{fig:PBFT}, provided in Appendix~\ref{Appendix:ConsProtStruct}, presents a visualization of the \textsf{PBFT}\xspace protocol.
\subsection{Hotstuff} 
\quad \textsf{Hotstuff}\xspace, a linear-chain consensus protocol designed by Yin et al. \cite{HOTSTUFF}, is a protocol that addresses the shortcomings of \textsf{PBFT}\xspace \cite{PBFT}. \textsf{Hotstuff}\xspace improves on the original \textsf{PBFT}\xspace protocol by utilising a leader-centric communication pattern, achieving a linear message complexity $O(n)$ (as shown by Figure~\ref{fig:HotStuff}, presented in Appendix~\ref{Appendix:ConsProtStruct}). Furthermore, the protocol itself runs under partially-synchronous assumptions, thus it can leverage timeouts to ensure \textit{liveness}. If a leader fails to create a \textbf{quorum certificate} before time $\Delta$ is reached, a \textbf{view-change} process is called. All participants of the system communicate with one another to move onto the next view, and change the elected leader. This ensures that a malicious leader cannot stall the system forever (which would cause the protocol to lose \textit{liveness}). However, \textit{liveness} is lost during periods of asynchrony as messages can be consistently delayed for longer than $\Delta$. Therefore, the system will be unable to reach consensus, and the \textit{view-change} process will be called continuously.  
\subsection{Validated Asynchronous Byzantine Agreement (VABA)} 
\quad As we have mentioned, asynchronous protocols cannot utilise timeouts due to the different assumptions made, thus, to circumvent the FLP impossibility result, such protocols must rely on a form of randomness to achieve \textit{liveness}. Abraham et al. \cite{VABA} propose a solution to \textsf{VABA}\xspace by creating a Multi-world \textsf{VABA}\xspace approach \cite{MOREVABA}. This Multi-world \textsf{VABA}\xspace approach combines two primitives to circumvent the FLP impossibility result \cite{FLP}:
\begin{itemize}
    \item A \textbf{randomness beacon}, often called a common coin or a global perfect coin \cite{MAHIMAHI}. Each process holds a beacon share that, when combined with enough other beacon shares (usually $2f+1$), reveals the beacon value.
    \item \textbf{Multi-World Instances}: Running $n$ instances of a view-based validated Byzantine consensus protocol (for instance, both mentioned \textsf{Hotstuff}\xspace and \textsf{PBFT}\xspace would qualify as such) \cite{HOTSTUFF, PBFT}.
\end{itemize}
\quad Multi-world \textsf{VABA}\xspace works as follows: $n$ instances of the view-based validated Byzantine consensus protocols are run, where for each instance $i$, the proposer is party $i$. Each instance is run until proposer $i$ acquires a commit certificate, proof that the proposer has reached consensus for the current view. Proposer $i$ will broadcast the certificate, and once $n-f$ proposers (i.e., $3f+1-f=2f+1$) have also broadcast their own certificate, all participants of the consensus protocol will stop. The randomness beacon is then utilised to randomly select one instance, which will also act as a view-change trigger. Then, the protocol will act like \textsf{Hotstuff}\xspace, or any other view-based validated Byzantine consensus protocol, where the selected process $i$ will be used as the main proposer for view $v$. Thus, multi-world \textsf{VABA}\xspace circumvents the FLP impossibility result by revealing only after the fact whose instance will be selected. A malicious participant, thanks to the properties of the randomness beacon, cannot know in advance who the proposer of view $v$ is going to be, thus, they cannot slow down the system. Intuitively, all $n-1$ instances act as decoys for the one instance that will be chosen after the $n-f$ finished processes; however, this also means that the $n-1$ instances are discarded after a view-change. Hence, Multi-world \textsf{VABA}\xspace achieves liveness at the cost of requiring a message-complexity of $O(n^2)$ as all parties will run their own version of \textsf{Hotstuff}\xspace, subsequentially increasing latency \cite{MOREVABA,HOTSTUFF}. This naturally leads to the question of how can we create a protocol that can take advantage of the low-latency given by partially-synchronous protocols, while maintaining the resilience of asynchronous protocols. 
\subsection{Ditto} 
\quad \textsf{Ditto}\xspace is a dual-mode consensus protocol, providing liveness in both synchronous and asynchronous network environments \cite{DITTO}. Unlike \textsf{Hotstuff}\xspace, \textsf{Ditto}\xspace replaces the view-synchronization with an asynchronous fallback \cite{HOTSTUFF}. While on its happy-path, Ditto runs an improved 2-chain version of \textsf{Hotstuff}\xspace, named \textsf{Jolteon}\xspace, thus running under the ideal linear communication complexity ($O(n)$). However, contrary to how \textsf{Hotstuff}\xspace works, when the network deteriorates, the \textsf{Multi-Valued Validated Byzantine}\xspace agreement (\textsf{MVBA}\xspace) protocol is run \cite{DITTO, MVBA, HOTSTUFF}. Furthermore, \textsf{Ditto}\xspace introduces the problem of fragmentation, where some processes might switch to the asynchronous fallback while others remain in the happy-path. To circumvent this, a process in \textsf{Ditto}\xspace requires to have at least $2f+1$ timeout messages before switching to the asynchronous fallback, acting as a synchronisation phase.
\\
\null \quad There are clear benefits by utilising partially-synchronous protocols such as \textsf{Hotstuff}\xspace \cite{HOTSTUFF}. First and foremost, these protocols run under linear communication complexity ($O(n)$), decreasing latency and increasing throughput. Asynchronous protocols, such as the Multi-World \textsf{VABA}\xspace we have seen, instead, require a quadratic communication cost $O(n^2)$, since each instance $n$ runs an instance of a linear communication complexity ($O(n)$) protocol \cite{VABA, MOREVABA}. \textsf{Ditto}\xspace, instead, runs with a linear communication cost $O(n)$ during its happy-path, and, during periods of asynchrony, resorts to an asynchronous protocol, running under quadratic communication cost $O(n^2)$ \cite{DITTO}. Clearly, \textsf{Ditto}\xspace provides the best of both worlds, however, we have also seen that \textsf{Ditto}\xspace introduces a number of complexities that require a careful planning to ensure that \textit{safety} is preserved. We shall now examine the next evolutionary step in consensus protocols, which, beyond achieving reduced latency and increased throughput, provides simplified requirements for dual-mode protocols.
\section{DAG-Based Consensus Protocols}
\quad Till now, we have seen consensus protocols that work by ordering blocks proposed by participants of the system, creating a linear blockchain. As stated, this approach can lead to severe limits on throughput and latency as there can only be one block accepted per view. Instead, newly researched blockchain protocols allow each participant to propose blocks that reference multiple blocks from a previous round \cite{DAGRIDER,TUSK,BULLSHARK}. When connected, these blocks form what is known as a Directed Acyclic Graph (DAG). A DAG, as the name suggests, is a graph that flows only in one direction, and thus, does not form cycles. This means that, when traversing a DAG, you can never end up where you started \cite{FORMALDAG,FORMALDAG2}. DAG-based consensus protocols utilise this property to organize DAGs into rounds and waves. Rounds (denoted as $r$) represent logical sequences, while waves (denoted as $w$) represent logical groupings of consecutive rounds (i.e., a wave is analogous to a view). 

\quad In the context of blockchains, each vertex in a DAG represents a block in a blockchain, and, as we will see, each block references a number of blocks of a previous round (i.e., a reference is an edge to a previous vertex). Creating a blockchain that forms a DAG can greatly increase throughput and reduce latency compared to linear blockchains. Looking at Figure~\ref{illustration_of_DAG}, we can see that during each round a validator can propose a block. When compared to linear blockchains, as seen in Figure~\ref{fig:linear-chain}, in a DAG-based blockchain each validator proposes blocks concurrently, creating interconnected parallel chains. Furthermore, when a leader proposal is committed in DAG-based consensus protocols, not only is the related block committed, but also its entire causal history. This means that all blocks that the leader proposal reference directly and indirectly are also committed, thus increasing throughput dramatically, since each round can include as many as $n$ blocks, where $n$ is the number of participants in the consensus protocol. This removes the linear bottleneck of linear-chain consensus protocols. Intuitively, while linear-chain consensus protocols allow only one leader per view to propose a block, in DAG-based consensus protocols each validator proposes a block; however, not every validator block is considered a leader block. We will see how \textit{safety} and \textit{liveness} are preserved in DAG-based consensus protocols. 
\begin{figure} [H]
    \centering
    \begin{tikzpicture}[scale=1,
    node distance=1.5cm and 2.5cm,
    validator/.style={draw, minimum size=0.8cm},
    propose/.style={validator},
    boost/.style={validator, draw=black},
    vote/.style={validator, draw=orange, thick},
    certify/.style={validator, draw=green!60!black, thick},
    supporting_validator/.style={validator, fill=green!20},
    support/.style={->, green!60!black, thick},
    >=Stealth,
    leader/.style={validator, fill=green!20},
    vote_support/.style={->, orange, thick},
    cert_support/.style={->, green!60!black, thick},
    >=Stealth
]
\footnotesize
\node at (0, 5) {$r$};
\node at (2.5, 5) {$r+1$};
\node at (5, 5) {$r+2$};
\node at (-1.5, 4) {$v_0$};
\node at (-1.5, 3) {$v_1$};
\node at (-1.5, 2) {$v_2$};
\node at (-1.5, 1) {$v_3$};
\node[propose] (v0r) at (0, 4) {};
\node[propose] (v1r) at (0, 3) {};
\node[propose] (v2r) at (0, 2) {};
\node[propose] (v3r) at (0, 1) {};
\node[propose] (v0r1) at (2.5, 4) {};
\node[propose] (v1r1) at (2.5, 3) {};
\node[propose] (v2r1) at (2.5, 2) {};
\node[propose] (v3r1) at (2.5, 1) {};
\node[propose] (v0r2) at (5, 4) {};
\node[propose] (v1r2) at (5, 3) {};
\node[propose] (v2r2) at (5, 2) {};
\node[propose] (v3r2) at (5, 1) {};
\draw[->] (v0r1.west) -- (v0r);
\draw[->] (v0r1.west) -- (v1r);
\draw[->] (v0r1.west) -- (v2r);
\draw[->] (v1r1.west) -- (v0r);
\draw[->] (v1r1.west) -- (v1r);
\draw[->] (v1r1.west) -- (v2r);
\draw[->] (v2r1.west) -- (v0r);
\draw[->] (v2r1.west) -- (v1r);
\draw[->] (v2r1.west) -- (v2r);
\draw[->] (v3r1.west) -- (v1r);
\draw[->] (v3r1.west) -- (v2r);
\draw[->] (v3r1.west) -- (v3r);
\draw[->] (v0r2.west) -- (v0r1);
\draw[->] (v0r2.west) -- (v1r1);
\draw[->] (v0r2.west) -- (v2r1);
\draw[->] (v1r2.west) -- (v0r1);
\draw[->] (v1r2.west) -- (v1r1);
\draw[->] (v1r2.west) -- (v2r1);
\draw[->] (v2r2.west) -- (v0r1);
\draw[->] (v2r2.west) -- (v1r1);
\draw[->] (v2r2.west) -- (v2r1);
\draw[->] (v3r2.west) -- (v3r1);
\draw[->] (v3r2.west) -- (v1r1);
\draw[->] (v3r2.west) -- (v2r1);
\node[draw, dashed, rounded corners, fit=(v0r) (v3r) (v0r2) (v3r2), inner sep=8pt, label=below:$w$] {};
\end{tikzpicture}
    \caption{Illustration of a DAG structure, where $w$ represents a wave, and $r$ is the round.}
    \label{illustration_of_DAG}
\end{figure}
\subsection{DAG-Rider \& Tusk}
\quad Keidar et al. \cite{DAGRIDER} propose \textsf{DAG-Rider}\xspace, the first practical DAG-based asynchronous consensus protocol. \textsf{DAG-Rider}\xspace's objective is to commit one proposed block per wave. Once a block is committed, itself and all of its causal history (until the last previously committed block) are considered to be committed. \textsf{DAG-Rider}\xspace is structured to be an asynchronous consensus protocol, thus, just like we have seen in \textsf{VABA}\xspace and \textsf{Ditto}\xspace, this protocol introduces some elements of randomness to circumvent the FLP impossibility result \cite{VABA,DITTO,FLP}. Each block in \textsf{DAG-Rider}\xspace includes a share of the common coin, and, once we reach the last round of a wave, $2f+1$ shares are combined to reveal the leader of the first round of the wave, retroactively. Due to how the protocol is constructed, compared to the multi-world \textsf{VABA}\xspace approach, which uses $n-1$ processes as decoys, blocks proposed by non-leader validators will still be included in the final sequence of blocks. This is due to the fact that the causal history of a committed block is also committed, reducing computational waste while increasing throughput. 
\\
\null\quad \textsf{Tusk}\xspace, a protocol created by Danezis et al. \cite{TUSK}, represents the first practical implementation of a DAG-based consensus protocol, and improves upon the design of \textsf{DAG-Rider}\xspace (see Figure~\ref{fig:TUSK}). Furthermore, Danezis et al. \cite{TUSK} propose \textsf{Narwhal}\xspace, a mempool layer protocol that separates the concerns of ordering data (consensus layer) from the dissemination and gathering of transactions (mempool layer). Thus, \textsf{Narwhal}\xspace gathers and constructs certificates of availability (batches of transactions), and  \textsf{Tusk}\xspace constructs the DAG out of the metadata obtained from \textsf{Narwhal}\xspace. Most importantly, this separation of concerns allows \textsf{Tusk}\xspace to reach consensus with a zero-message overhead beyond constructing the DAG itself. However, it is important to note that for a certificate of availability to be created, Narwhal requires a block to be signed by a quorum ($2f+1$ signatures), which results in a 3-message delay between certifications and addition to the DAG. 
\subsection{BullShark}
\quad Spiegelman et al. \cite{PARTIALLYSYNCHRONOUSBULL} present \textsf{Bullshark}\xspace, a consensus protocol that presents the first dual-mode solution for DAG-based consensus protocols. Given its dual-mode capabilities, \textsf{Bullshark}\xspace has been presented in two versions: a partially-synchronous version and a dual-mode version.

\quad Spiegelman et al. \cite{PARTIALLYSYNCHRONOUSBULL} present the first version of a DAG-based consensus protocol that runs under partially-synchronous assumptions (as defined in Section~\ref{sec:network_models}). The protocol works as follows: \textit{odd rounds} represent proposal rounds, while \textit{even rounds} represent voting round. Thus, a wave is composed of simply two rounds, and, since there is no requirement for any sort of randomness beacon, a block can be committed within a 6-message delay. Just like \textsf{Tusk}\xspace, \textsf{Bullshark}\xspace runs with \textsf{Narwhal}\xspace in the background; thus, each proposed block has to be certified before being broadcast (propose-certify-announce), creating the 3-message delay per round we see. Figure~\ref{fig:bullshark} visualizes this version of \textsf{Bullshark}\xspace, found in Appendix~\ref{Appendix:ConsProtStruct}. The complete version of \textsf{Bullshark}\xspace does not have the problems associated with partially-synchronous models, as it constructs the DAG to both enjoy fast commitments and maintain liveness in periods of asynchrony. A wave in \textsf{Bullshark}\xspace is composed of four rounds, where every odd round a \texttt{steady-state} leader proposes a block, which can be committed just like its partially-synchronous counterpart. However, if the system enters periods of asynchrony, \textsf{Bullshark}\xspace resorts to a \texttt{fallback} leader that is elected retro-actively in the last round of a wave, which is why \textsf{Bullshark}\xspace waves are four rounds long. In short, the protocol will utilise the fallback leader if the second \texttt{steady-state} leader of the wave failed to be committed, or if the \texttt{fallback} leader also failed to be committed. A \texttt{fallback} leader, elected via a common coin released in the last round of a wave, must be referenced $2f+1$ times in the round right after to be committed. When a \texttt{fallback} leader is committed, the next wave will go back to utilise the \texttt{steady-state} leaders. \textsf{Bullshark}\xspace achieves this dual-mode functionality by introducing voting types to the DAG edges. A validator can only vote for one type of leader per wave, and the voting type of a validator for a wave is dependent on what has happened the previous wave. Figure~\ref{fig:bullshark_both_ways}, found in Appendix~\ref{Appendix:ConsProtStruct}, 
visualizes the dual-mode component of \textsf{Bullshark}\xspace. This approach, however, increases complexity, as each validator must also keep track of what mode must be used for each wave.
\\
\null \quad \textsf{Bullshark}\xspace is, at the time of writing, the only dual-mode DAG-based consensus protocol, thus also the protocol from such class to be evaluated using a geo-distributed cloud infrastructure. Of importance, \textsf{Bullshark}\xspace achieves $\approx 100,000 - 130,000$ transactions per second independently on the size of the committee. Albeit less than \textsf{Tusk}\xspace, \textsf{Bullshark}\xspace achieves a stable latency of $2s$ compared to the $3s$ from \textsf{Tusk}\xspace \cite{TUSK, BULLSHARK}. This result is very important, as it shows that dual-mode protocols can achieve lower latency compared to fully asynchronous protocols, while guaranteeing eventual liveness in asynchronous settings. Furthermore, compared to \textsf{Ditto}\xspace, \textsf{Bullshark}\xspace maintains an optimal amortized communication complexity ($O(n)$) through either \texttt{steady-state} waves and \texttt{fallback} waves. 
\section{Uncertified DAG-Based Consensus Protocol}
\label{sec:uncertified_DAG}
\quad Research in DAG-based consensus protocols has led to the utilisation of a new data structure, \textit{uncertified} DAGs. Structure wise, \textit{uncertified} DAGs are similar to certified DAGs, as seen in Figure~\ref{illustration_of_DAG}; however, they differ in how a block is broadcast. In certified DAGs, before a block is added to the DAG, it has to be broadcast to each validator to collect enough signatures to certify it, and then, once certified, the block is broadcast again to all validators. This certification process increases latency, as broadcasting a certified block requires a 3 message exchange. First, validator $v$ broadcast an uncertified block to all validators in the system. Then, each validator signs the block and sends $v$ their respective signature. Finally, with enough unique signatures ($2f+1$), $v$ can broadcast the certified block to be added to the DAG. Thus, while \textsf{Bullshark}\xspace can commit a block in 2 rounds, the actual commitment process requires 6-message delays \cite{MYSTICETI,BULLSHARK}. \textit{Uncertified} DAGs remove this requirement to certify a block before broadcast, as the DAG itself provides validation. \cite{CORDIALMINERS,MYSTICETI,MAHIMAHI}. 
\subsection{Cordial Miners}
\quad  Keidar et al. \cite{CORDIALMINERS} present the first \textit{uncertified} DAG-based consensus protocol, called \textsf{Cordial Miners}\xspace. As aforementioned, an \textit{uncertified} DAG-based consensus protocol does not require a block to be certified before being added to the DAG. Instead, they prove how, simply by interpreting the DAG in a specific way, certificates can be formed explicitly. The protocol works as follows: A block $B$, for round $r$, must contain references to $2f+1$ unique valid blocks of the previous round $r-1$. Then, for block $B$ to be committed, it requires a supermajority of approval ($2f+1$ unique references to block $B$ in round $r+1$), and a further $2f+1$ blocks in round $r+2$ that observes, in its causal history, this pattern. This process, intuitively, is similar to how \textsf{Narwhal}\xspace requires a quorum to certify a block, however it is completely done in the DAG itself, removing the need to certify blocks. Keidar et al. \cite{CORDIALMINERS} present both an asynchronous and a partially-synchronous version of the protocol, proving that both settings work under an \textit{uncertified} DAG data structure. Interestingly, Keidar et al. \cite{CORDIALMINERS} suggest that \textsf{Cordial Miners}\xspace, and \textit{uncertified} DAG-based consensus protocol in the same vein, can be extended, similarly to \textsf{Bullshark}\xspace, to achieve dual-mode functionality. However, Keidar et al. \cite{CORDIALMINERS} suggest to utilise two leaders, like \textsf{Bullshark}\xspace, one deterministic (partially-synchronous mode) and one random (asynchronous mode) if the deterministic leader fails to commit \cite{BULLSHARK}. Such approach, however, can be rather complicated, and hard to implement.
\subsection{Mysticeti}
\quad Babel et al. \cite{MYSTICETI} introduce \textsf{Mysticeti}\xspace, a family of DAG-based consensus protocols. We focus specifically on \textsf{Mysticeti-C}\xspace, which represents the first implementation of an \textit{uncertified} DAG-based consensus protocol (referred to simply as \textsf{Mysticeti}\xspace throughout this paper). This protocol runs under partially-synchronous assumptions, thus does not require any form of randomness beacon to circumvent the FLP impossibility result. Instead, just like the partially-synchronous version of Bullshark, leverages timeouts to guarantee liveness after $GST$ has been reached \cite{FLP,BULLSHARK}. \textsf{Mysticeti}\xspace runs on what is knowns as a threshold logical clock (TLC) \cite{TLC}. Essentially, rounds advancement in \textsf{Mysticeti}\xspace are dependent on some sort of rules imposed by the system. Thus, \textsf{Mysticeti}\xspace can guarantee that at least one block (usually known as a primary block) can be committed per wave by leveraging timeouts. Compared to the partially-synchronous version of Bullshark, \textsf{Mysticeti}\xspace runs waves consisting of 3-rounds (propose-vote-certify); however, \textsf{Mysticeti}\xspace's blocks do not need to be certified before being proposed. This means that proposing a block is simply a 1-message delay, and to commit a block, it only requires 3-messages delay (compared to the 6-message delay found in the partially-synchronous version of Bullshark). Furthermore, \textsf{Mysticeti}\xspace introduces two additional enhancements that allow for lower latency and higher throughput. \textsf{Mysticeti}\xspace allows up to $n$ (where $n$ is the total amount of validators in the systems) validators to be considered leaders in a round. This approach was inspired by Multi-Paxos \cite{MULTIPAXOS}. Then, \textsf{Mysticeti}\xspace allows for pipe-lining. This means that every round can mark the start of a wave, allowing for a proposed leader block to be committed every round. Both of these enhancements are visualized in Figure~\ref{fig:}, detailed in Appendix~\ref{Appendix:ConsProtStruct}. We will see in Section~\ref{sec:structure_protocol_DAG} how \textsf{Mysticeti}\xspace, and other \textit{uncertified} DAG-based consensus protocols, leverage the DAG structure to make safe commit decision. Not only is \textsf{Mysticeti}\xspace, at the time of writing, the protocol with the highest throughput and lowest latency, but the Sui blockchain has also adopted \textsf{Mysticeti}\xspace as its consensus protocol \cite{SUI}. 
\subsection{Mahi-Mahi}
\quad \textsf{Mahi-Mahi}\xspace is an asynchronous \textit{uncertified} DAG-based consensus protocol proposed by Jovanovic et al. \cite{MAHIMAHI}. Just like \textsf{Mysticeti}\xspace, \textsf{Mahi-Mahi}\xspace introduces \textbf{Multi-Leader} and \textbf{Pipelining} to increase throughput and latency. However, unlike \textsf{Mysticeti}\xspace, waves consist of either 4 or 5 rounds, depending on network assumptions (we will go further in-depth in Section~\ref{sec:structure_protocol_DAG}). The extra rounds are required to increase propagation of a proposed leader block, since, without timeouts, we cannot guarantee that enough validators will see the proposed block before advancing to the next round. By increasing the propagation of a block, we can then, with high-probability, elect a leader in retrospect by utilising a random global coin, as seen with other similar protocols \cite{TUSK,BULLSHARK}. That said, unlike \textsf{Bullshark}\xspace, during the use of its \texttt{fallback} leader, \textsf{Mahi-Mahi}\xspace can commit with either a 4/5-message delay, a significant decrease from the 12-message delay given by \textsf{Bullshark}\xspace's \texttt{fallback} mode \cite{BULLSHARK}. 
\\
\null \quad We have seen incremental improvements in the evolution from linear-chain consensus protocols to DAG-based ones. \textsf{Ditto}\xspace represents a functional example of a dual-mode linear-chain consensus protocol, which achieves results comparable to similar state-of-the-art linear-chain consensus protocols \cite{DITTO}. \textsf{Bullshark}\xspace, likewise, shows that the same dual-mode functionality can be applied to DAG-based consensus protocols, while inheriting all the benefits inherent to DAG-based consensus protocols \cite{BULLSHARK}. \textit{Uncertified} DAG-based consensus protocols currently lack a dual-mode counterpart capable of utilising the inherent advantages of the \textit{uncertified} DAG data structure. Our work aims to address this by presenting the first dual-mode \textit{uncertified} DAG-based consensus protocol based on \textsf{Mysticeti}\xspace and \textsf{Mahi-Mahi}\xspace \cite{MYSTICETI,MAHIMAHI}.
\chapter{The \tarpon Protocol}
\label{chap:dual_mode_protocol}
\quad As we have talked about in Section~\ref{sec:uncertified_DAG}, while both \textsf{Mysticeti}\xspace and \textsf{Mahi-Mahi}\xspace are the first uncertified DAG-based consensus protocols that leverage the DAG structure itself to form certificates, both have clear strengths and weaknesses given their initial assumptions. Thus, we introduce \tarpon, the first dual-mode \textit{uncertified} DAG-based consensus protocol that combines \textsf{Mysticeti}\xspace and \textsf{Mahi-Mahi}\xspace to achieve liveness in both asynchronous and partially synchronous environments, while maintaining low latency comparable to \textsf{Mysticeti}\xspace. \tarpon differs from other dual-mode protocols by having a simplified approach to running its fallback mode (i.e., the asynchronous mode). As we have seen, whenever a block is committed in a DAG-based consensus protocol, its entire causal history is also committed. Thus, we simply need to run the asynchronous mode periodically at fixed, or changing, intervals, to achieve liveness in asynchronous settings. 
\section{System Design and Core Assumptions}
\quad We consider a communication network where $n=3f+1$ validators process transactions using the \tarpon protocol. An adversary can corrupt up to a set of $f$ validators, which we will call \textit{Byzantine}. These malicious validators can act arbitrarily (i.e can deviate from the protocol). \textit{Honest} validators, instead, are the remaining non-corrupted validators who will follow the protocol. The protocol makes the assumption that the links between validators are reliable and authenticated. This means that messages sent by honest validators will, eventually, reach all other honest parties, and that the identity of the sender can be easily verified by the receiver. Furthermore, the adversary is computationally bounded. Thus, the standard cryptographic properties utilised in the protocol (i.e. hash functions, digital signatures etc...) will hold. Under these assumptions we can say that \tarpon is \textit{safe} (this is proven in Chapter~\ref{chap:Proofs}) because no two honest validators will ever commit contradictory transactions. The dual-mode protocol runs under an adaptive network setting, meaning that it can make progress ( \textit{liveness}) in both an asynchronous setting, and a partially-synchronous setting. Before $GST$ is reached, messages may be delayed arbitrary, but will be delivered eventually with probability 1. Thus, \tarpon achieves \textit{liveness} during periods before $GSTS$ is reached. Furthermore, after $GST$, \tarpon achieves \textit{liveness} as transactions are guaranteed to commit within $\Delta$. Both these statements are proven in Section~\ref{chap:Proofs}. 
\section{Structure of the Dual-Mode Protocol DAG}
\quad As with \textsf{Mysticeti}\xspace and \textsf{Mahi-Mahi}\xspace, \tarpon works on a series of \textit{logical rounds} \cite{MYSTICETI, MAHIMAHI}. During every round, a validator proposes a unique, signed block (i.e. a single block per round) which contains a number of transactions; however, malicious validators can equivocate, meaning that they propose more than one block per round. Throughout a round, validators receive transactions from \textit{users} and blocks from other validators. Together, a validator creates a block that consists of \textit{fresh transactions} (i.e. transactions that have not been included in previous blocks), alongside hash references to previous blocks. The newly created block is broadcast when the honest validator includes at least $2f+1$ hash references to blocks from the previous round, after which the validator signs the block. There can be scenarios where the inclusion of a transaction, sent by a user to a validator, fails (e.g. a malicious validator does not want to include this specific transaction into the block). In this case, the user, after a fixed amount of time, will send the transaction to a different validator.  
\label{sec:structure_protocol_DAG}
\subsection{Block Creation} 
\quad A block proposed by an honest validator is structured as follows: 
\begin{figure}[H]
    \centering
\[
\begin{array}{c}
\textbf{Block Structure} \\
\hline
\begin{array}{ll}
\text{ Author }  & : \text{ Validator ID and signature on block contents} \\
\text{ Round }  & : \text{ Round number} \\
\text{ Transactions} & : \text{ List of transactions} \\
\text{ References} & : \text{ Minimum } 2f+1 \text{ distinct hashes from round } r-1 \\
\text{ Coin Share} & : \text{ Share of the global perfect coin}
\end{array}
\end{array}
\]
\end{figure}
\quad A block is considered to be valid if: the signature of validator $v$ is valid, and $v$ is part of the validator set; all references point to different valid blocks from round $r-1$, and the first referenced block points to $v$'s block from round $r-1$; lastly, the share of the global coin is valid, and as long as the global coin is implemented via a threshold signature scheme, the shares of a global coin can be verified.  
\subsection{Round and Wave} 
\label{subsection:roundandwave}
\quad \tarpon defines two types of wave structures, where the wave length is denoted as $w$, a partially-synchronous one and an asynchronous one, where rounds are given specific roles respective of the wave structure. 
\\
\null \quad The partially-synchronous wave consists of 3 rounds, which, just like \textsf{Mysticeti}\xspace, allows for the lowest bound, of a 3-message delay, for a block to be committed \cite{PBFT}. A leader of a wave first proposes a block (\textbf{propose} round), then all other validators vote on the block (\textbf{vote} round), and lastly, they collect certificates (\textbf{certify} round).

\quad The asynchronous wave, instead, consists of either 4 or 5 rounds. As explained by Jovanovic et al. \cite{MAHIMAHI}, a wave length of 5 maximises resilience against an active asynchronous adversary, meaning that it is more likely to commit a block if a wave is composed of five rounds. However, under more realistic asynchronous scenarios, a wave length of 4 allows for lower latency while maintaining sufficient commit probability. Due to asynchronous assumptions, we cannot rely on timeouts to guarantee that a leader proposal will reach all honest validators in round $r+1$. Boost rounds act as propagation rounds for the proposal before the voting round, where longer wavelengths create more boost round. Since \tarpon's asynchronous mode closely follows \textsf{Mahi-Mahi}\xspace's, it inherits analysis under both asynchronous and random network models. The \textbf{asynchronous network model} makes the worst-case assumption (i.e. $f$ nodes are always faulty), while the \textbf{random network model} makes average-case assumptions. This means that when a validator proposes a block $b$ at round $r$, $b$ will include $2f+1$ references from round $r-1$ uniformly-chosen at random. The proofs of \textit{liveness}, provided by Jovanovic et al. \cite{MAHIMAHI}, demonstrate how \textit{liveness} is achieved both when $w=4$ and $w=5$. Specifically, when $w=4$, there are weaker \textit{liveness} guarantees, especially under the \textit{asynchronous network model}; however, under the \textit{random network model}, the protocol has a high-probability of committing each wave. 
\subsection{DAG Patterns} 
\label{sec:DAG_Patterns}
\quad With the structure of waves described in the earlier section, a validator can identify DAG patterns to decide the outcome of proposed blocks:
\begin{itemize}
    \item \textit{Certificate Pattern:} A certificate pattern is identified when there are at least $2f+1$ blocks in round $r+w-1$ that support a block proposed in round $r$. This means that $2f+1$ blocks in round $r+w-1$ have a reference to block $B$. Visualized in Figure~\ref{fig:mysticeti_direct_commit}.
    \item \textit{Skip Pattern:} A skip pattern is identified when there are at least $2f+1$ blocks in round $r+w-1$ (Voting round) that do not support a block proposed in round $r$. This means that $2f+1$ blocks in round $r+w-1$ do not have a reference to block $B$. Visualized in Figure~\ref{fig:mysticeti_direct_skip}.
\end{itemize}
\quad While \textsf{Bullshark}\xspace only allows certified blocks to be proposed every round, in the case of \tarpon analysing the DAG by identifying the aforementioned patterns allows us to determine certificates implicitly. As mentioned, this is one of the major innovations that has led protocols such as \textsf{Mysticeti}\xspace and \textsf{Mahi-Mahi}\xspace to achieve such low levels of latency \cite{MYSTICETI,MAHIMAHI}. Essentially by moving the certification process of \textsf{Narwhal}\xspace into the DAG itself, removing the 3-message delay that was required to certify blocks, and instead, certifying blocks directly by analysing the DAG \cite{TUSK}. 
\begin{figure}[H]
    \begin{adjustwidth}{-1cm}{-1cm} 
\captionsetup{width=0.4\textwidth, font=small}
    \centering
    \begin{minipage}{0.45\textwidth}
        \centering
\begin{figure}[H]
\centering
\begin{tikzpicture}[
    node distance=1.5cm and 2.5cm,
    validator/.style={draw, minimum size=0.8cm},
    propose/.style={validator, thick},
    boost/.style={validator, draw=black},
    vote/.style={validator, draw=orange, thick},
    certify/.style={validator, draw=green!60!black, thick},
    supporting_validator/.style={validator, fill=green!20},
    support/.style={->, green!60!black, thick},
    >=Stealth,
    leader/.style={validator, fill=green!20},
    vote_support/.style={->, orange, thick},
    skipping_validator/.style={validator, fill=pink!40},
    cert_support/.style={->, green!60!black, thick},
    >=Stealth
]
\footnotesize
\node at (0, 5) {$r$};
\node at (2.5, 5) {$r+1$};
\node at (5, 5) {$r+2$};

\node at (-1.5, 4) {$v_0$};
\node at (-1.5, 3) {$v_1$};
\node at (-1.5, 2) {$v_2$};
\node at (-1.5, 1) {$v_3$};

\node[leader] (v0r) at (0, 4) {$L1_{a}$};
\node[boost] (v1r) at (0, 3) {};
\node[boost] (v2r) at (0, 2) {};
\node[boost] (v3r) at (0, 1) {$L1_{b}$};

\node[supporting_validator] (v0r1) at (2.5, 4) {};
\node[supporting_validator] (v1r1) at (2.5, 3) {};
\node[supporting_validator] (v2r1) at (2.5, 2) {};
\node[boost] (v3r1) at (2.5, 1) {};

\node[supporting_validator] (v0r2) at (5, 4) {};
\node[supporting_validator] (v1r2) at (5, 3) {};
\node[supporting_validator] (v2r2) at (5, 2) {};
\node[supporting_validator] (v3r2) at (5, 1) {};

\draw[support] (v0r1.west) -- (v0r);
\draw[->] (v0r1.west) -- (v1r);
\draw[->] (v0r1.west) -- (v2r);

\draw[support] (v1r1.west) -- (v0r);
\draw[->] (v1r1.west) -- (v1r);
\draw[->] (v1r1.west) -- (v2r);

\draw[support] (v2r1.west) -- (v0r);
\draw[->] (v2r1.west) -- (v1r);
\draw[->] (v2r1.west) -- (v2r);

\draw[->] (v3r1.west) -- (v1r);
\draw[->] (v3r1.west) -- (v2r);
\draw[->] (v3r1.west) -- (v3r);

\draw[support] (v0r2.west) -- (v0r1);
\draw[support] (v0r2.west) -- (v1r1);
\draw[support] (v0r2.west) -- (v2r1);

\draw[support] (v1r2.west) -- (v0r1);
\draw[support] (v1r2.west) -- (v1r1);
\draw[support] (v1r2.west) -- (v2r1);

\draw[support] (v2r2.west) -- (v0r1);
\draw[support] (v2r2.west) -- (v1r1);
\draw[support] (v2r2.west) -- (v2r1);

\draw[support] (v3r2.west) -- (v1r1);
\draw[support] (v3r2.west) -- (v2r1);
\draw[->] (v3r2.west) -- (v3r1);

\node[draw, dashed, rounded corners, fit=(v0r) (v3r) (v0r2) (v3r2), inner sep=8pt, label={[align=center]below:$w$}] {};
\end{tikzpicture}
\caption{Certificate Pattern: $L1_a$ is marked as \texttt{to-commit} because validator $v_0$, $v_1$ and $v_2$ voted for the proposed block in round $r+1$, and $2f+1$ constrain in their history this pattern in round $r+2$.}
\label{fig:mysticeti_direct_commit}
\end{figure}
    \end{minipage}
    \hspace{0.08\textwidth}
    \begin{minipage}{0.45\textwidth}
        \centering
    \begin{figure}[H]
    \centering
    \begin{tikzpicture}[
    node distance=1.5cm and 2.5cm,
    validator/.style={draw, minimum size=0.8cm},
    propose/.style={validator, thick},
    boost/.style={validator, draw=black},
    vote/.style={validator, draw=orange, thick},
    certify/.style={validator, draw=green!60!black, thick},
    supporting_validator/.style={validator, fill=green!20},
    support/.style={->, green!60!black, thick},
    >=Stealth,
    leader/.style={validator, fill=red!20},
    vote_support/.style={->, orange, thick},
    skipping_validator/.style={validator, fill=pink!40},
    cert_support/.style={->, green!60!black, thick},
    >=Stealth
]
\footnotesize
\node at (0, 5) {$r$};
\node at (2.5, 5) {$r+1$};
\node at (5, 5) {$r+2$};

\node at (-1.5, 4) {$v_0$};
\node at (-1.5, 3) {$v_1$};
\node at (-1.5, 2) {$v_2$};
\node at (-1.5, 1) {$v_3$};

\node[boost] (v0r) at (0, 4) {$L1_{a}$};
\node[boost] (v1r) at (0, 3) {};
\node[boost] (v2r) at (0, 2) {};
\node[boost, leader] (v3r) at (0, 1) {$L1_b$};

\node[skipping_validator] (v0r1) at (2.5, 4) {};
\node[skipping_validator] (v1r1) at (2.5, 3) {};
\node[skipping_validator] (v2r1) at (2.5, 2) {};
\node[boost] (v3r1) at (2.5, 1) {};

\node[boost] (v0r2) at (5, 4) {};
\node[boost] (v1r2) at (5, 3) {};
\node[boost] (v2r2) at (5, 2) {};
\node[boost] (v3r2) at (5, 1) {};

\draw[->] (v0r1.west) -- (v0r);
\draw[->] (v0r1.west) -- (v1r);
\draw[->] (v0r1.west) -- (v2r);

\draw[->] (v1r1.west) -- (v0r);
\draw[->] (v1r1.west) -- (v1r);
\draw[->] (v1r1.west) -- (v2r);

\draw[->] (v2r1.west) -- (v0r);
\draw[->] (v2r1.west) -- (v1r);
\draw[->] (v2r1.west) -- (v2r);

\draw[->] (v3r1.west) -- (v1r);
\draw[->] (v3r1.west) -- (v2r);
\draw[->] (v3r1.west) -- (v3r);

\draw[->] (v0r2.west) -- (v0r1);
\draw[->] (v0r2.west) -- (v1r1);
\draw[->] (v0r2.west) -- (v2r1);

\draw[->] (v1r2.west) -- (v0r1);
\draw[->] (v1r2.west) -- (v1r1);
\draw[->] (v1r2.west) -- (v2r1);

\draw[->] (v2r2.west) -- (v0r1);
\draw[->] (v2r2.west) -- (v1r1);
\draw[->] (v2r2.west) -- (v2r1);

\draw[->] (v3r2.west) -- (v1r1);
\draw[->] (v3r2.west) -- (v2r1);
\draw[->] (v3r2.west) -- (v3r1);

\node[draw, dashed, rounded corners, fit=(v0r) (v3r) (v0r2) (v3r2), inner sep=8pt, label={[align=center]below:$w$}] {};

\end{tikzpicture}
    \caption{Skip Pattern: Leader block $L1_d$ is marked as \texttt{to-skip} because validator $v_0$, $v_1$ and $v_2$ did not vote for $L1_d$ (i.e. $L1_d$ has $2f+1$ blames) in round $r+1$, thus creating a \textit{skip pattern}.}
    \label{fig:mysticeti_direct_skip}
\end{figure}
    \end{minipage}
    \end{adjustwidth}
\captionsetup{width=0.9\textwidth, font=small}
\end{figure}
\vspace{1em}
\subsection{Liveness} 
\quad Due to \tarpon's properties, \textit{liveness} can be achieved in two different scenarios: before  $GST$ has been reached, and after. Chapter~\ref{chap:Proofs} provides formal proofs for both scenarios, but, intuitively, \textit{liveness} is achieved as follows:
\begin{itemize}
    \item \textbf{Case 1 (before $GST$)}: In \tarpon's asynchronous mode we make no timing assumptions, but instead we leverage a perfect global coin to thwart possible malicious participants from manipulating the formation of certificates. As stated earlier, the FLP impossibility result states that deterministic protocols cannot ensure liveness \textbf{and} safety in the presence of even one faulty validator \cite{FLP}. By using the perfect global coin, no validator will know whether they are the leader of a specific round until enough rounds have passed to form a certificate for said round. Thus, probabilistically there is a non zero chance of committing a leader block, ensuring \textit{liveness}, eventually \cite{MAHIMAHI}. 
    \item \textbf{Case 1 (after $GST$)}: In the partially-synchronous mode we rely on timeouts to maintain \textit{liveness}. We know that all honest validators will receive messages sent by other honest participants within $\Delta+max(GST,t)$, where $t$ represents the time a message has been sent by an honest validator. Every round a proposer is deterministically selected to be considered the primary block of the round. For example, assume a primary block $B$ has been proposed in round $r$. Other honest validators will either need to include this block's reference or wait a timeout (i.e. $\Delta$) before being allowed to propose a block in round $r+1$. Furthermore, if an honest validator includes a reference to the primary block in the block proposed in $r+1$, they have to wait an additional timeout before being allowed to propose a block for $r+2$. This ensures that there will always exist a certificate for the primary block, after $GST$ has been reached \cite{MYSTICETI}.
\end{itemize}
\subsection{Equivocation Tolerance} \quad Equivocation is handled by construction. When a validator equivocates by proposing multiple blocks for the same round, honest validators may receive all equivocating blocks. However, each honest validator can only support at most one block per slot $s$ when constructing their own block proposals. Due to the $2f+1$ votes required for certification and quorum intersection, at most one of the equivocating blocks can gather sufficient support to be certified and marked as \texttt{to-commit}. This is exemplified in Figure~\ref{fig:equivocation}. Although the figure demonstrates the partially-synchronous mode, the asynchronous mode handles equivocation in the same manner. 
\begin{figure}[H]
    \centering
    \begin{tikzpicture}[
    node distance=1.5cm and 2.5cm,
    validator/.style={draw, minimum size=0.8cm},
    propose/.style={validator},
    boost/.style={validator, draw=black},
    vote/.style={validator, draw=orange, thick},
    certify/.style={validator, draw=green!60!black, thick},
    supporting_validator/.style={validator, fill=green!20},
    support/.style={->, green!60!black, thick},
    >=Stealth,
    leader/.style={validator, fill=green!20},
    vote_support/.style={->, orange, },
    skipping_validator/.style={validator, fill=pink!40},
    cert_support/.style={->, green!60!black, thick},
    equivocating/.style={validator, fill=red!20},
    >=Stealth,
equivocatingcommit/.style={validator, fill=green!20},
    >=Stealth
]
\footnotesize
\node at (0, 5.4) {$r$};
\node at (2.5, 5.4) {$r+1$};
\node at (5, 5.4) {$r+2$};
\node at (-1.5, 4) {$v_0$};
\node at (-1.5, 3) {$v_1$};
\node at (-1.5, 2) {$v_2$};
\node at (-1.5, 1) {$v_3$};
\node[equivocatingcommit] (v0r) at (0, 4) {$L1_{a}$};
\node[equivocating] (v3r_R) at (0.4, 4.5) {\scriptsize$L1_{a'}$};
\node[propose] (v1r) at (0, 3){};
\node[propose] (v2r) at (0, 2){};
\node[propose] (v3r_L) at (0, 1){};
\node[propose] (v0r1) at (2.5, 4) {};
\node[propose] (v1r1) at (2.5, 3) {};
\node[propose] (v2r1) at (2.5, 2) {};
\node[propose] (v3r1) at (2.5, 1) {};
\node[propose] (v0r2) at (5, 4) {};
\node[propose] (v1r2) at (5, 3) {};
\node[propose] (v2r2) at (5, 2) {};
\node[propose] (v3r2) at (5, 1) {};
\draw[->] (v0r1.west) -- (v1r);
\draw[->] (v0r1.west) -- (v3r_R);
\draw[->] (v0r1.west) -- (v2r);
\draw[->] (v1r1.west) -- (v0r);
\draw[->] (v1r1.west) -- (v1r);
\draw[->] (v1r1.west) -- (v2r);
\draw[->] (v2r1.west) -- (v1r);
\draw[->] (v2r1.west) -- (v2r);
\draw[->] (v2r1.west) -- (v0r);
\draw[->] (v3r1.west) -- (v1r);
\draw[->] (v3r1.west) -- (v3r_L); 
\draw[->] (v3r1.west) -- (v0r); 
\draw[->] (v0r2.west) -- (v0r1);
\draw[->] (v0r2.west) -- (v1r1);
\draw[->] (v0r2.west) -- (v2r1);
\draw[->] (v1r2.west) -- (v0r1);
\draw[->] (v1r2.west) -- (v1r1);
\draw[->] (v1r2.west) -- (v2r1);
\draw[->] (v2r2.west) -- (v0r1);
\draw[->] (v2r2.west) -- (v1r1);
\draw[->] (v2r2.west) -- (v2r1);
\draw[->] (v3r2.west) -- (v1r1);
\draw[->] (v3r2.west) -- (v2r1);
\draw[->] (v3r2.west) -- (v3r1);
\node[draw, dashed, rounded corners, fit=(v0r) (v3r_L) (v3r_R) (v0r2) (v3r2), inner sep=8pt, label={[align=center]below:$w$}] {};
\end{tikzpicture}
    \caption{Example of equivocation tolerance found in \tarpon. Validator $v_0$ is the equivocating validators (Blocks $L1_a$ and $L'1_a$ are proposed during the same round $r$). However, since the remaining honest validators will only support one block per validator per round, only block $L1_a$ receives enough votes to be eventually committed.}
    \label{fig:equivocation}
\end{figure}
\vspace{1em}
\section{The \tarpon Consensus Protocol}
\subsection{Proposer Slot and Leader Selection}
\label{sec:prop_slot_leader}
\paragraph{Proposer Slot} Just like \textsf{Mysticeti}\xspace and \textsf{Mahi-Mahi}\xspace, \tarpon utilises proposer slots. Proposers slots can either be empty, or include the block proposed by a validator for a specific round (forming a tuple (validator, round)). Each proposer slot can have the following states: \texttt{to-commit}, \texttt{to-skip} and \texttt{undecided}. We have seen how a proposer slot can be given the state of the former two, by identification of DAG Patterns. However, the \texttt{undecided} state is what differentiates \tarpon (including \textsf{Mysticeti}\xspace and \textsf{Mahi-Mahi}\xspace) from previous works, such as \textsf{Bullshark}\xspace \cite{MYSTICETI,MAHIMAHI,BULLSHARK}. This state forces all following slots to wait for this one to be marked, ensuring a deterministic sequence of proposers for all validators. Furthermore, each round can have up to $n=3f+1$ proposer slots, allowing each block to be directly committed, decreasing latency in ideal settings. However, as we will see in the decision rules, having too many proposer slots can also increase latency when proposers are either slow (e.g., during periods of asynchrony) or act maliciously (e.g. they equivocate) \cite{MYSTICETI}.  
\paragraph{Leader selection} Leader selection acts differently depending on the mode being run. During the partially-synchronous mode, where we have the ability to use timeouts, leaders are selected in a deterministic fashion (i.e. via a round-robin algorithm). However, during \tarpon's asynchronous mode, we cannot have a deterministic approach in leader selection, or and adversary may be able to exploit the lack of timeouts to halt the system. Thus, as we have said in Section~\ref{sec:network_models}, we utilise $2f+1$ shares of the global perfect coin to decide the leader after the fact. In \tarpon's asynchronous mode, assume that $r$ is the propose round, the global perfect coin is reconstructed in either $r+4$ or $r+3$ depending on \tarpon's wave length configuration. It is important to note that the global perfect coin can elect multiple leaders, up to $3f+1$ leaders, and can also select slots whose validator has not proposed any blocks. 
\subsection{Decision Rules}
\label{subsection:decision_rules}
\quad We will now go over the \tarpon's decision rules. \tarpon follows closely the decision rules found in both \textsf{Mysticeti}\xspace and \textsf{Mahi-Mahi}\xspace, aiming to try and classify every proposer slot as either \texttt{to-commit} or \texttt{to-skip}, where all proposer slots start in an \texttt{undecided} state. 
\subsubsection{Direct Rule}
\quad \tarpon begins by applying the direct rule on a DAG, starting from the latest proposer slot. The direct rule tries to mark a slot as either \texttt{to-commit} or \texttt{to-skip} by either observing $2f+1$ \textit{commit patterns} or a \textit{skip pattern} for that slot. For example, as seen in Figure~\ref{fig:mysticeti_direct_commit}, $L1_a$ is initially marked as \texttt{undecided}, but, via the direct rule, can be marked as \texttt{to-commit} because of the $2f+1$ certificates present. In Figure~\ref{fig:mysticeti_direct_skip}, instead, $L1_d$ is marked as \texttt{to-skip}, via the direct rule, as a result of a \textit{skip pattern}. the direct rule works comparably during \tarpon's asynchronous mode. If \tarpon observes neither $2f+1$ \textit{commit patterns} nor a \textit{skip pattern}, then the slot is left as \texttt{undecided}, and the protocol will require the \textit{indirect rule} to mark the slot accordingly. Examples of \tarpon's direct rule for its asynchronous mode can be found in Appendix~\ref{tarpon_direct_rule}.
\subsubsection{Indirect Rule}
\quad We now move onto the cases where the direct rule fails, thus \tarpon has to rely on the indirect rule to mark a proposer slot. The indirect rule works by looking at future slots to determine whether the \texttt{undecided} slot can be marked as \texttt{to-commit} or \texttt{to-skip}. Assume a slot $s$, in round $r$, has been marked as \texttt{undecided} via the direct rule. The indirect rule first needs to locate an \textit{anchor}, meaning a slot where its round $r'$ is greater than the decision round of slot $s$ (i.e. $r'>r+w-1$). If the anchor is marked as \texttt{undecided} then slot $s$ is also marked as such. If the \textit{anchor} is marked as \texttt{to-commit}, then the rule checks that the anchor contains at least $1$ certificate for slot $s$. If the certificate link exists, then slot $s$ is marked as \texttt{to-commit}, if it does not, then the slot is marked as \texttt{to-skip}. If an encountered \textit{anchor} has been marked as \texttt{to-skip}, the rule moves onto the next available \textit{anchor}. 
\begin{figure}[H]
\centering
\begin{tikzpicture}[
    node distance=1.5cm and 2.5cm,
    validator/.style={draw, minimum size=0.8cm},
    propose/.style={validator, thick},
    boost/.style={validator, draw=black},
    vote/.style={validator, draw=orange, thick},
    certify/.style={validator, draw=green!60!black, thick},
    supporting_validator/.style={validator, fill=green!20},
    support/.style={->, green!60!black, thick},
    >=Stealth,
    leader/.style={validator, fill=green!20},
    vote_support/.style={->, orange, thick},
    skipping_validator/.style={validator, fill=pink!40},
    cert_support/.style={->, green!60!black, thick},
    >=Stealth
]
\footnotesize
\node at (0, 5) {$r$};
\node at (2.5, 5) {$r+1$};
\node at (5, 5) {$r+2$};
\node at (7.5, 5) {$r+3$};
\node at (10, 5) {$r+4$};
\node at (12.5, 5) {$r+5$};

\node at (-1.5, 4) {$v_0$};
\node at (-1.5, 3) {$v_1$};
\node at (-1.5, 2) {$v_2$};
\node at (-1.5, 1) {$v_3$};
\node[boost, leader] (v0r) at (0, 4) {$L1_{a}$};
\node[boost] (v1r) at (0, 3) {};
\node[boost] (v2r) at (0, 2) {};
\node[boost] (v3r) at (0, 1) {};

\node[supporting_validator] (v0r1) at (2.5, 4) {};
\node[boost] (v1r1) at (2.5, 3) {};
\node[boost] (v2r1) at (2.5, 2) {};
\node[boost] (v3r1) at (2.5, 1) {};

\node[supporting_validator] (v0r2) at (5, 4) {};
\node[boost] (v1r2) at (5, 3) {};
\node[boost] (v2r2) at (5, 2) {};
\node[boost] (v3r2) at (5, 1) {};

\node[supporting_validator] (v0r3) at (7.5, 4) {};
\node[supporting_validator] (v1r3) at (7.5, 3) {};
\node[supporting_validator] (v2r3) at (7.5, 2) {};
\node[boost] (v3r3) at (7.5, 1) {};

\node[supporting_validator] (v0r4) at (10, 4) {};
\node[boost] (v2r4) at (10, 2) {};
\node[boost] (v3r4) at (10, 1) {};

\node[boost] (v0r5) at (12.5, 4) { };
\node[boost, leader] (v1r5) at (12.5, 3) {$L2_a$};
\node[boost] (v2r5) at (12.5, 2) {};
\node[boost] (v3r5) at (12.5, 1) {};

\draw[support] (v0r1.west) -- (v0r);
\draw[->] (v0r1.west) -- (v1r);
\draw[->] (v0r1.west) -- (v2r);

\draw[->] (v1r1.west) -- (v0r);
\draw[->] (v1r1.west) -- (v1r);
\draw[->] (v1r1.west) -- (v2r);

\draw[->] (v2r1.west) -- (v0r);
\draw[->] (v2r1.west) -- (v1r);
\draw[->] (v2r1.west) -- (v2r);

\draw[->] (v3r1.west) -- (v1r);
\draw[->] (v3r1.west) -- (v2r);
\draw[->] (v3r1.west) -- (v3r);

\draw[support] (v0r2.west) -- (v0r1);
\draw[->] (v0r2.west) -- (v1r1);
\draw[->] (v0r2.west) -- (v2r1);

\draw[->] (v1r2.west) -- (v0r1);
\draw[->] (v1r2.west) -- (v1r1);
\draw[->] (v1r2.west) -- (v2r1);

\draw[->] (v2r2.west) -- (v0r1);
\draw[->] (v2r2.west) -- (v1r1);
\draw[->] (v2r2.west) -- (v2r1);

\draw[->] (v3r2.west) -- (v1r1);
\draw[->] (v3r2.west) -- (v2r1);
\draw[->] (v3r2.west) -- (v3r1);

\draw[support] (v0r3.west) -- (v0r2);
\draw[->] (v0r3.west) -- (v1r2);
\draw[->] (v0r3.west) -- (v2r2);

\draw[support] (v1r3.west) -- (v0r2);
\draw[->] (v1r3.west) -- (v1r2);
\draw[->] (v1r3.west) -- (v2r2);

\draw[support] (v2r3.west) -- (v0r2);
\draw[->] (v2r3.west) -- (v1r2);
\draw[->] (v2r3.west) -- (v2r2);

\draw[->] (v3r3.west) -- (v1r2);
\draw[->] (v3r3.west) -- (v2r2);
\draw[->] (v3r3.west) -- (v3r2);

\draw[support] (v0r4.west) -- (v0r3);
\draw[support] (v0r4.west) -- (v1r3);
\draw[support] (v0r4.west) -- (v2r3);

\draw[->] (v2r4.west) -- (v0r3);
\draw[->] (v2r4.west) -- (v1r3);
\draw[->] (v2r4.west) -- (v2r3);

\draw[->] (v3r4.west) -- (v1r3);
\draw[->] (v3r4.west) -- (v2r3);
\draw[->] (v3r4.west) -- (v3r3);

\draw[->] (v0r5.west) -- (v0r4);
\draw[->] (v0r5.west) -- (v3r4);
\draw[->] (v0r5.west) -- (v2r4);

\draw[support] (v1r5.west) -- (v0r4);
\draw[->] (v1r5.west) -- (v3r4);
\draw[->] (v1r5.west) -- (v2r4);

\draw[->] (v2r5.west) -- (v0r4);
\draw[->] (v2r5.west) -- (v3r4);
\draw[->] (v2r5.west) -- (v2r4);

\draw[->] (v3r5.west) -- (v0r4);
\draw[->] (v3r5.west) -- (v2r4);
\draw[->] (v3r5.west) -- (v3r4);

\node[draw, dashed, rounded corners, fit=(v0r) (v3r) (v0r4) (v3r4), inner sep=8pt, label={[align=center]below:$w$}] {};

\end{tikzpicture}
\caption{Example of the indirect rule marking $L1_a$ as \texttt{to-commit} in the asynchronous mode (with wave length $w=5$). While $L1_a$ does have the required $2f+1$ votes from blocks in round $r+3$, it is missing the necessary $2f+1$ certificates that are formed when observing the $2f+1$ votes. Since the anchor $L2_a$ does observe a certificate link between $L2_a$ and $L1_a$, $L1_a$ can be marked as \texttt{to-commit}.} 
\label{fig:mahi_mahi_indirect_commit}
\end{figure}
\begin{figure}[H]
\centering
\begin{tikzpicture}[
    node distance=1.5cm and 2.5cm,
    validator/.style={draw, minimum size=0.8cm},
    propose/.style={validator, thick},
    boost/.style={validator, draw=black},
    vote/.style={validator, draw=orange, thick},
    certify/.style={validator, draw=green!60!black, thick},
    supporting_validator/.style={validator, fill=green!20},
    support/.style={->, green!60!black, thick},
    nosupport/.style={->, red!60!black, thick},
    >=Stealth,
    leader/.style={validator, fill=green!20},
    vote_support/.style={->, orange, thick},
    skipping_validator/.style={validator, fill=pink!40},
    cert_support/.style={->, green!60!black, thick},
    >=Stealth
]
\footnotesize
\node at (0, 5) {$r$};
\node at (2.5, 5) {$r+1$};
\node at (5, 5) {$r+2$};
\node at (7.5, 5) {$r+3$};
\node at (10, 5) {$r+4$};
\node at (12.5, 5) {$r+5$};

\node at (-1.5, 4) {$v_0$};
\node at (-1.5, 3) {$v_1$};
\node at (-1.5, 2) {$v_2$};
\node at (-1.5, 1) {$v_3$};
\node[boost] (v0r) at (0, 4) {};
\node[boost] (v1r) at (0, 3) {};
\node[boost] (v2r) at (0, 2) {};
\node[skipping_validator] (v3r) at (0, 1) {$L1_a$};

\node[boost] (v0r1) at (2.5, 4) {};
\node[boost] (v1r1) at (2.5, 3) {};
\node[boost] (v2r1) at (2.5, 2) {};
\node[skipping_validator] (v3r1) at (2.5, 1) {};

\node[boost] (v0r2) at (5, 4) {};
\node[boost] (v1r2) at (5, 3) {};
\node[boost] (v2r2) at (5, 2) {};
\node[skipping_validator] (v3r2) at (5, 1) {};

\node[boost] (v0r3) at (7.5, 4) {};
\node[boost] (v1r3) at (7.5, 3) {};
\node[skipping_validator] (v2r3) at (7.5, 2) {};
\node[skipping_validator] (v3r3) at (7.5, 1) {};

\node[boost] (v0r4) at (10, 4) {};
\node[boost] (v1r4) at (10, 3) {};
\node[boost] (v2r4) at (10, 2) {};
\node[skipping_validator] (v3r4) at (10, 1) {};

\node[boost] (v0r5) at (12.5, 4) {};
\node[boost] (v1r5) at (12.5, 3) {};
\node[skipping_validator] (v2r5) at (12.5, 2) {$L2_a$};
\node[boost] (v3r5) at (12.5, 1) {};

\draw[->] (v0r1.west) -- (v0r);
\draw[->] (v0r1.west) -- (v1r);
\draw[->] (v0r1.west) -- (v2r);

\draw[->] (v1r1.west) -- (v0r);
\draw[->] (v1r1.west) -- (v1r);
\draw[->] (v1r1.west) -- (v2r);

\draw[->] (v2r1.west) -- (v0r);
\draw[->] (v2r1.west) -- (v1r);
\draw[->] (v2r1.west) -- (v2r);

\draw[->] (v3r1.west) -- (v1r);
\draw[->] (v3r1.west) -- (v2r);
\draw[nosupport] (v3r1.west) -- (v3r);

\draw[->] (v0r2.west) -- (v0r1);
\draw[->] (v0r2.west) -- (v1r1);
\draw[->] (v0r2.west) -- (v2r1);

\draw[->] (v1r2.west) -- (v0r1);
\draw[->] (v1r2.west) -- (v1r1);
\draw[->] (v1r2.west) -- (v2r1);

\draw[->] (v2r2.west) -- (v0r1);
\draw[->] (v2r2.west) -- (v1r1);
\draw[->] (v2r2.west) -- (v2r1);

\draw[->] (v3r2.west) -- (v1r1);
\draw[->] (v3r2.west) -- (v2r1);
\draw[nosupport] (v3r2.west) -- (v3r1);

\draw[->] (v0r3.west) -- (v0r2);
\draw[->] (v0r3.west) -- (v1r2);
\draw[->] (v0r3.west) -- (v2r2);

\draw[->] (v1r3.west) -- (v0r2);
\draw[->] (v1r3.west) -- (v1r2);
\draw[->] (v1r3.west) -- (v2r2);

\draw[->] (v2r3.west) -- (v0r2);
\draw[->] (v2r3.west) -- (v1r2);
\draw[->] (v2r3.west) -- (v2r2);
\draw[nosupport] (v2r3.west) -- (v3r2);

\draw[->] (v3r3.west) -- (v1r2);
\draw[->] (v3r3.west) -- (v2r2);
\draw[nosupport] (v3r3.west) -- (v3r2);

\draw[->] (v0r4.west) -- (v0r3);
\draw[->] (v0r4.west) -- (v1r3);
\draw[->] (v0r4.west) -- (v2r3);

\draw[->] (v1r4.west) -- (v0r3);
\draw[->] (v1r4.west) -- (v1r3);
\draw[->] (v1r4.west) -- (v2r3);

\draw[->] (v2r4.west) -- (v0r3);
\draw[->] (v2r4.west) -- (v1r3);
\draw[->] (v2r4.west) -- (v2r3);

\draw[->] (v3r4.west) -- (v1r3);
\draw[nosupport] (v3r4.west) -- (v2r3);
\draw[nosupport] (v3r4.west) -- (v3r3);

\draw[->] (v0r5.west) -- (v0r4);
\draw[->] (v1r5.west) -- (v1r4);
\draw[->] (v0r5.west) -- (v2r4);

\draw[->] (v1r5.west) -- (v0r4);
\draw[->] (v1r5.west) -- (v1r4);
\draw[->] (v1r5.west) -- (v2r4);

\draw[->] (v2r5.west) -- (v1r4);
\draw[->] (v2r5.west) -- (v2r4);
\draw[nosupport] (v2r5.west) -- (v3r4);

\draw[->] (v3r5.west) -- (v1r4);
\draw[->] (v3r5.west) -- (v2r4);
\draw[->] (v3r5.west) -- (v3r4);


\node[draw, dashed, rounded corners, fit=(v0r) (v3r) (v0r4) (v3r4), inner sep=8pt, label={[align=center]below:$w$}] {};

\end{tikzpicture}
\caption{Example of the indirect rule marking $L1_a$ as \texttt{to-skip} in the asynchronous mode ($w=5$). While $L1_a$ does not have the required $2f+1$ votes from blocks in round $r+3$, it also does not have $2f+1$ blames required to directly skip the block. Since the anchor $L2_a$ does not observe a certificate link between $L2_a$ and $L1_a$, $L1_a$ is skipped.}
\label{fig:mahi_mahi_indirect_skip}
\end{figure}
\quad Figure~\ref{fig:mahi_mahi_indirect_commit} and Figure~\ref{fig:mahi_mahi_indirect_skip} are examples of both outcomes of the indirect rules during \tarpon's asynchronous mode (while examples of the indirect rules during \tarpon's partially-synchronous mode can be found in Appendix~\ref{tarpon_indirect_rule}). This is where the number of proposer slots can cause slow-downs during periods of asynchrony (or when Byzantine validators equivocate). Since it increases the probability of the direct rule failing, \texttt{undecided} slots may accumulate. This will cause the protocol to utilise the indirect rule more often, subsequentially delaying the committent of proposed blocks. For example, in Figure~\ref{fig:mahi_mahi_indirect_commit}, assume a scenario where $L1_a$ has yet to be marked as \texttt{to-commit} via the indirect rule, all subsequent slots must wait for $L1_a$ to become decided before they can also become either skipped or committed. Thus, all leader slots from $L1_a$ upward remain \texttt{undecided}, which stalls the commit sequence and impacts performance \cite{MYSTICETI}. 
\subsection{Sequence} 
\label{subsection:sequence}
\quad Once all the slots in the DAG have been processed, all validators will extract, deterministically, the sequence of the proposers blocks. This newly extracted sequence is ordered (e.g., while the decision rules begins with the latest slot, the derived sequence begins with the oldest slot). The validators then iterate over this sequence, committing slots marked as \texttt{to-commit} and ignoring slots marked as \texttt{to-skip}. Such process will continue until the first \texttt{undecided} slots is encountered. To illustrate this, imagine a sequence of decided leaders $ [\textcolor{green!60!black}{L_{1a}},\textcolor{gray}{L_{2a}},\textcolor{red}{L_{3a}},\textcolor{green!60!black}{L_{4a}},\textcolor{red}{L_{5a}},\textcolor{green!60!black}{L_{6a}}]$ (obtained after running the decision rule on a DAG), where green, red, and grey represent leader slots marked as \texttt{to-commit}, \texttt{to-skip}, and \texttt{undecided}, respectively. The leader slots marked as \texttt{to-skip} are removed (thus we have the following sequence:  $ [\textcolor{green!60!black}{L_{1a}},\textcolor{gray}{L_{2a}},\textcolor{green!60!black}{L_{4a}},\textcolor{green!60!black}{L_{6a}}]$). However, since $L_{2a}$ is marked as \texttt{undecided}, \tarpon's final sequence is $[\textcolor{green!60!black}{L_{1a}}]$. Once the sequence is finalised, the validator will linearize the subDAG of each committed block, by carrying out a depth-first search, and ignore already included blocks. Since the validator processes leader blocks deterministically and subDAGs sequentially, the final commit sequence maintains a deterministic ordering (proven in Chapter~\ref{chap:Proofs}).
\section{Dual-Mode}
\quad We now introduce the dual-mode component of \tarpon. As shown in Chapter~\ref{chap:LitReview}, dual-mode protocols often introduce complexities to achieve dual-mode functionality, by either having to maintain two differing protocols or by increasing the complexity of the DAG construction \cite{DITTO,BULLSHARK}. With \tarpon, we introduce the following solution: instead of introducing complex mechanisms that change the structure of the DAG, the protocol will, instead, switch to its asynchronous mode every $K$ rounds, for a single wave. Intuitively, this means that in periods of asynchrony, eventually, a leader of one of these $K$-th rounds will be committed, committing its causal history alongside it. \tarpon can, therefore, achieve latency levels closer to those of partially-synchronous protocols, while inheriting the liveness robustness of asynchronous ones \cite{MYSTICETI, MAHIMAHI}.

\quad Similarly to the \textsf{Mysticeti}\xspace and \textsf{Mahi-Mahi}\xspace protocols, \tarpon can be configured to run with a pipeline in mind, and to allow for multiple proposers per round. This means that each round will act as the start of a new wave, and that each round can have multiple leaders $l$, where $l \leq n$. Figure~\ref{fig:dualmode_wave_structure} shows how the asynchronous mode and partially synchronous mode intertwine together. Of note, we can see that, irregardless of whether the pipeline mode is active, some decision rounds will overlap (e.g. given Figure~\ref{fig:dualmode_wave_structure}, round $r+5$ is the decision round for both propose rounds $r+1$ and $r+2$). Furthermore, some partially-asynchronous rounds may reach the decision round before a decision round for the previous asynchronous round has been reached (e.g. given Figure~\ref{fig:dualmode_wave_structure}, round $r+4$ is the decision round for round $r+2$, while round $r+1$ still has not reached its designated decision round $r+5$). Intuitively, \textit{safety} is preserved due to how leaders are sequenced. A leader sequence will return upon reaching the first undecided leader, thus, a leader block proposed in round $r+2$ will have to wait for leader blocks in round $r+1$ to be decided first. 
\begin{figure}[H]
\centering
\begin{tikzpicture}[
    node distance=1.5cm and 2.5cm,
    validator/.style={draw, minimum size=0.8cm},
    propose/.style={validator, thick},
    boost/.style={validator, draw=black},
    vote/.style={validator, draw=orange, thick},
    certify/.style={validator, draw=green!60!black, thick},
    supporting_validator/.style={validator, fill=green!20},
    support/.style={->, green!60!black, thick},
    >=Stealth,
    leader/.style={validator, fill=green!20},
    vote_support/.style={->, orange, thick},
    skipping_validator/.style={validator, fill=pink!40},
    cert_support/.style={->, green!60!black, thick},
    >=Stealth
]
\footnotesize

\node at (0, 4.8) {$r$};
\node at (2.5, 4.8) {$r+1$};
\node at (5, 4.8) {$r+2$};
\node at (7.5, 4.8) {$r+3$};
\node at (10, 4.8) {$r+4$};
\node at (12.5, 4.8) {$r+5$};

\node at (-1.5, 4) {$v_0$};
\node at (-1.5, 3) {$v_1$};
\node at (-1.5, 2) {$v_2$};
\node at (-1.5, 1) {$v_3$};

\node[boost] (v0r.west) at (0, 4) {$L1_{a}$};
\node[boost] (v1r) at (0, 3) {};
\node[boost] (v2r) at (0, 2) {$L1_b$};
\node[boost] (v3r) at (0, 1) {};

\node[boost] (v0r1) at (2.5, 4) {};
\node[boost] (v1r1) at (2.5, 3) {$L2_{b}$};
\node[boost] (v2r1) at (2.5, 2) {$L2_{a}$};
\node[boost] (v3r1) at (2.5, 1) {};

\node[boost] (v0r2) at (5, 4) {$L3_{b}$};
\node[boost] (v1r2) at (5, 3) {};
\node[boost] (v2r2) at (5, 2) {$L3_{a}$};
\node[boost] (v3r2) at (5, 1) {};

\node[boost] (v0r3) at (7.5, 4) {};
\node[boost] (v1r3) at (7.5, 3) {$L4_{b}$};
\node[boost] (v2r3) at (7.5, 2) {};
\node[boost] (v3r3) at (7.5, 1) {$L4_{a}$};

\node[boost] (v0r4) at (10, 4) {$L5_{a}$};
\node[boost] (v1r4) at (10, 3) {};
\node[boost] (v2r4) at (10, 2) {$L5_{b}$};
\node[boost] (v3r4) at (10, 1) {};

\node[boost] (v0r5) at (12.5, 4) {};
\node[boost] (v1r5) at (12.5, 3) {$L6_{a}$};
\node[boost] (v2r5) at (12.5, 2) {};
\node[boost] (v3r5) at (12.5, 1) {$L6_{b}$};

\draw[->] (v0r1.west) -- (v0r);
\draw[->] (v0r1.west) -- (v1r);
\draw[->] (v0r1.west) -- (v2r);

\draw[->] (v1r1.west) -- (v0r);
\draw[->] (v1r1.west) -- (v1r);
\draw[->] (v1r1.west) -- (v2r);

\draw[->] (v2r1.west) -- (v0r);
\draw[->] (v2r1.west) -- (v1r);
\draw[->] (v2r1.west) -- (v2r);

\draw[->] (v3r1.west) -- (v1r);
\draw[->] (v3r1.west) -- (v2r);
\draw[->] (v3r1.west) -- (v3r);

\draw[->] (v0r2.west) -- (v0r1);
\draw[->] (v0r2.west) -- (v1r1);
\draw[->] (v0r2.west) -- (v2r1);

\draw[->] (v1r2.west) -- (v0r1);
\draw[->] (v1r2.west) -- (v1r1);
\draw[->] (v1r2.west) -- (v2r1);

\draw[->] (v2r2.west) -- (v0r1);
\draw[->] (v2r2.west) -- (v1r1);
\draw[->] (v2r2.west) -- (v2r1);

\draw[->] (v3r2.west) -- (v1r1);
\draw[->] (v3r2.west) -- (v2r1);
\draw[->] (v3r2.west) -- (v3r1);

\draw[->] (v0r3.west) -- (v0r2);
\draw[->] (v0r3.west) -- (v1r2);
\draw[->] (v0r3.west) -- (v2r2);

\draw[->] (v1r3.west) -- (v0r2);
\draw[->] (v1r3.west) -- (v1r2);
\draw[->] (v1r3.west) -- (v2r2);

\draw[->] (v2r3.west) -- (v0r2);
\draw[->] (v2r3.west) -- (v1r2);
\draw[->] (v2r3.west) -- (v2r2);

\draw[->] (v3r3.west) -- (v1r2);
\draw[->] (v3r3.west) -- (v2r2);
\draw[->] (v3r3.west) -- (v3r2);

\draw[->] (v0r4.west) -- (v0r3);
\draw[->] (v0r4.west) -- (v1r3);
\draw[->] (v0r4.west) -- (v2r3);

\draw[->] (v1r4.west) -- (v0r3);
\draw[->] (v1r4.west) -- (v1r3);
\draw[->] (v1r4.west) -- (v2r3);

\draw[->] (v2r4.west) -- (v0r3);
\draw[->] (v2r4.west) -- (v1r3);
\draw[->] (v2r4.west) -- (v2r3);

\draw[->] (v3r4.west) -- (v1r3);
\draw[->] (v3r4.west) -- (v2r3);
\draw[->] (v3r4.west) -- (v3r3);

\draw[->] (v0r5.west) -- (v0r4);
\draw[->] (v0r5.west) -- (v1r4);
\draw[->] (v0r5.west) -- (v2r4);

\draw[->] (v1r5.west) -- (v0r4);
\draw[->] (v1r5.west) -- (v1r4);
\draw[->] (v1r5.west) -- (v2r4);

\draw[->] (v2r5.west) -- (v0r4);
\draw[->] (v2r5.west) -- (v1r4);
\draw[->] (v2r5.west) -- (v2r4);

\draw[->] (v3r5.west) -- (v1r4);
\draw[->] (v3r5.west) -- (v2r4);
\draw[->] (v3r5.west) -- (v3r4);

\node[anchor=east] at (-0.1,0.4-0) {$\scriptstyle w$};
\node[anchor=east] at (2.4,0.4-0.2) {$\color{blue}\scriptstyle w+1$};
\node[anchor=east] at (4.9,0.4-0.4) {$\scriptstyle w+2$};
\node[anchor=east] at (7.4,0.4-0.6) {$\scriptstyle w+3$};
\node[anchor=east] at (9.9,0.4-0.8) {$\scriptstyle w+4$};
\node[anchor=east] at (11.9,0.4-1) {$\scriptstyle w+5$};
\draw[black, line width=1pt] (0,0.4-0) -- (5,0.4-0);
\draw[blue, line width=1pt] (2.5,0.4-0.2) -- (12.5,0.4-0.2);
\draw[black, line width=1pt] (5,0.4-0.4) -- (10,0.4-0.4);
\draw[black, line width=1pt] (7.5,0.4-0.6) -- (12.5,0.4-0.6);
\draw[black, line width=1pt] (10,0.4-0.8) -- (12.5,0.4-0.8);
\draw[black, line width=1pt, dotted] (12.5,0.4-0.8) -- (13,0.4-0.8);
\draw[black, line width=1pt] (12,0.4-1) -- (12.5,0.4-1);
\draw[black, line width=1pt, dotted] (12.5,0.4-1) -- (13,0.4-1);
\end{tikzpicture}
\caption{Wave patterns of \tarpon's dual-mode protocol ($2$ leaders per round, pipe-lining enabled). Round $r+1$ represents the asynchronous mode of \tarpon with a wave length $w$ set to $5$. The remainder of the rounds are run through the partially-synchronous mode. Furthermore, we can see that, since $L2_a$ does not have enough votes to be directly committed, the anchor used to determine the status of the slot must come from the first round \textbf{after} wave $w+1$ has finished, thus $r+6$.  }
\label{fig:dualmode_wave_structure}
\end{figure}
\subsection{Dynamic Scheduling}
\label{subsection:dynamic_scheduling}
\quad As we have seen, periodically switching to Mahi-mahi will grant us the liveness property even under asynchronous settings; however, such approach can be further improved. For example, imagine a scenario where $GST$ is perennially achieved, meaning that \tarpon's network connectivity is working as intended. In such scenario, it would be ideal to run the asynchronous round less often, while in the opposite scenario, one where connectivity is causing the direct-rule to reach an \texttt{undecided} conclusion often, we would want to run the asynchronous run more often. Thus, how do we allow each validator to make a deterministic decision to change the interval rate of \tarpon's asynchronous mode? 

\quad \textsf{Hammerhead}\xspace is a consensus protocol, built over the partially-synchronous version of \textsf{Bullshark}\xspace, which presents a suitable approach to what we are trying to achieve \cite{BULLSHARK,HAMMERHEAD}. In the case of \textsf{Hammerhead}\xspace, Tsimos et. al \cite{HAMMERHEAD} present a reputation-based leader-election mechanism for DAG-based consensus protocols. There are cases where leaders might crash, thus, if a validator is selected as leader for a round, in such round no progress can be achieved. By creating a reputation-based leader-election mechanism, leaders who are consistently voted for will be selected more often, decreasing the possibility of electing a faulty leader. They utilise the DAG itself to determine how many times a leader is voted for because the causal history of a committed leader will be identical for everyone. As we have also said, once a leader block is committed by a validator, all other honest validators will reach the same conclusion. This allows each validator to make deterministic choices by observing the subDAG (i.e the causal history of a block) of a committed leader block.

\quad We can follow a similar approach to create a dynamic-frequency mechanism to optimise the performance of \tarpon. While \textsf{Hammerhead}\xspace utilises the subDAGs of each committed leader to determine the reputation of all leaders, we simply require the subDAG of a singular leader block to determine how we should update the schedule. The dynamic schedule works as follows:
\begin{itemize}
    \item Assume that every $K$th leader has to be decided through the asynchronous rules.
    \item Once leader $K$ is marked as \texttt{to-commit}, we analyse its subDAG  up to but not including the previous $K$th leader (i.e. we analyse the subDAG until the round after the the last asynchronous committed leader).
    \item Within this same subDAG, we determine how many of the leaders in \tarpon's committed leader sequence were direct commits, that is, leader proposals that have been committed via the direct rule (with the rules specified in Section~\ref{subsection:decision_rules}). Given the ratio of observed direct commits versus the amount of expected commits, we can change the scheduling accordingly. The expected commits include both leader slots marked as \texttt{to-commit} and \texttt{to-skip}, as skipped leaders are also a sign of asynchrony.
\end{itemize}
\begin{figure}[H]
\centering
\begin{tikzpicture}[
    node distance=1.5cm and 2.5cm,
    validator/.style={draw, minimum size=0.8cm},
    propose/.style={validator, thick},
    boost/.style={validator, draw=black},
    vote/.style={validator, draw=orange, thick},
    certify/.style={validator, draw=green!60!black, thick},
    supporting_validator/.style={validator, fill=green!20},
    support/.style={->, green!60!black, thick},
    >=Stealth,
    leader/.style={validator, fill=green!20},
    vote_support/.style={->, orange, thick},
    skipping_validator/.style={validator, fill=pink!40},
    cert_support/.style={->, green!60!black, thick},
    transparent/.style={opacity=0.3},
    transparent_edge/.style={->, opacity=0.3, gray},
    highlighted/.style={fill=yellow!30},
    anchor/.style={fill=blue!30, draw=blue, ultra thick},
    subdag_edge/.style={->, black},
    >=Stealth
]
\footnotesize
\node at (0, 4.8) {$r$};
\node at (2.5, 4.8) {$r+1$};
\node at (5, 4.8) {$r+2$};
\node at (7.5, 4.8) {$r+3$};
\node at (10, 4.8) {$r+4$};

\node at (-1.5, 4) {$v_0$};
\node at (-1.5, 3) {$v_1$};
\node at (-1.5, 2) {$v_2$};
\node at (-1.5, 1) {$v_3$};
\node[boost, highlighted] (v0r) at (0, 4) {$L1_a$};
\node[boost, highlighted] (v1r) at (0, 3) {};
\node[boost, highlighted] (v2r) at (0, 2) {$L1_b$};
\node[boost, highlighted] (v3r) at (0, 1) {};

\node[supporting_validator, highlighted] (v0r1) at (2.5, 4) {};
\node[supporting_validator, highlighted] (v1r1) at (2.5, 3) {$L2_b$};
\node[boost, highlighted] (v2r1) at (2.5, 2) {};
\node[boost, highlighted] (v3r1) at (2.5, 1) {$L2_a$};

\node[supporting_validator, highlighted] (v0r2) at (5, 4) {$L3_b$};
\node[boost, highlighted] (v1r2) at (5, 3) {};
\node[supporting_validator, highlighted] (v2r2) at (5, 2) {$L3_a$};
\node[boost, transparent] (v3r2) at (5, 1) {};

\node[boost, highlighted] (v0r3) at (7.5, 4) {};
\node[supporting_validator, highlighted] (v1r3) at (7.5, 3) {$L4_a$};
\node[supporting_validator, highlighted] (v2r3) at (7.5, 2) {};
\node[boost, transparent] (v3r3) at (7.5, 1) {$L4_b$};

\node[supporting_validator] (v0r4) at (10, 4) {$L5_a$};
\node[boost, transparent] (v1r4) at (10, 3) {};
\node[boost, transparent] (v2r4) at (10, 2) {$L5_b$};
\node[boost, transparent] (v3r4) at (10, 1) {};

\draw[subdag_edge] (v0r1.west) -- (v0r);
\draw[subdag_edge] (v0r1.west) -- (v1r);
\draw[subdag_edge] (v0r1.west) -- (v2r);

\draw[subdag_edge] (v1r1.west) -- (v0r);
\draw[subdag_edge] (v1r1.west) -- (v1r);
\draw[subdag_edge] (v1r1.west) -- (v2r);

\draw[subdag_edge] (v3r1.west) -- (v1r);
\draw[subdag_edge] (v3r1.west) -- (v2r);
\draw[subdag_edge] (v3r1.west) -- (v3r);

\draw[subdag_edge] (v2r1.west) -- (v0r);
\draw[subdag_edge] (v2r1.west) -- (v1r);
\draw[subdag_edge] (v2r1.west) -- (v2r);

\draw[subdag_edge] (v0r2.west) -- (v0r1);
\draw[subdag_edge] (v0r2.west) -- (v1r1);
\draw[subdag_edge] (v0r2.west) -- (v2r1);

\draw[subdag_edge] (v2r2.west) -- (v0r1);
\draw[subdag_edge] (v2r2.west) -- (v1r1);
\draw[subdag_edge] (v2r2.west) -- (v3r1);

\draw[transparent_edge] (v3r2.west) -- (v0r1);
\draw[transparent_edge] (v3r2.west) -- (v1r1);
\draw[transparent_edge] (v3r2.west) -- (v3r1);

\draw[subdag_edge] (v1r2.west) -- (v1r1);
\draw[subdag_edge] (v1r2.west) -- (v2r1);
\draw[subdag_edge] (v1r2.west) -- (v3r1);

\draw[subdag_edge] (v1r3.west) -- (v0r2);
\draw[subdag_edge] (v1r3.west) -- (v1r2);
\draw[subdag_edge] (v1r3.west) -- (v2r2);

\draw[subdag_edge] (v2r3.west) -- (v0r2);
\draw[subdag_edge] (v2r3.west) -- (v2r2);

\draw[transparent_edge] (v3r3.west) -- (v0r2);
\draw[transparent_edge] (v3r3.west) -- (v2r2);
\draw[transparent_edge] (v3r3.west) -- (v3r2);

\draw[subdag_edge] (v0r3.west) -- (v0r2);
\draw[subdag_edge] (v0r3.west) -- (v1r2);
\draw[subdag_edge] (v0r3.west) -- (v2r2);

\draw[subdag_edge] (v0r4.west) -- (v0r3);
\draw[subdag_edge] (v0r4.west) -- (v1r3);
\draw[subdag_edge] (v0r4.west) -- (v2r3);

\draw[transparent_edge] (v1r4.west) -- (v0r3);
\draw[transparent_edge] (v1r4.west) -- (v1r3);
\draw[transparent_edge] (v1r4.west) -- (v2r3);
\draw[transparent_edge] (v2r4.west) -- (v0r3);
\draw[transparent_edge] (v2r4.west) -- (v1r3);
\draw[transparent_edge] (v2r4.west) -- (v2r3);
\draw[transparent_edge] (v3r4.west) -- (v1r3);
\draw[transparent_edge] (v3r4.west) -- (v2r3);
\draw[transparent_edge] (v3r4.west) -- (v3r3);
\end{tikzpicture}
\caption{Example of subDAG analysis of a committed anchor. Assume that $L5_a$ is the $K$th leader that will trigger a schedule update. Leaders from round $r+4$, $r+3$ and $r+2$ will not be considered for the analysis, as the subDAG does not give enough information to determine whether these proposals have been directly committed or not. However, we can see how leader $L2_a$, while having enough votes to be directly committed, was not marked as \texttt{to-commit} directly by $L5_a$, as $L5_a$ causal history does not include the additional necessary vote provided by $L3_b$.}
\label{fig:dualmode_subdag_analysis}
\end{figure}
\subsection{Sequence (Extended)}
\label{sequenceextended}
\quad There is one specific scenario that can endanger the \textit{safety} property of \tarpon. Specifically, imagine a scenario where we have a validator $v_i$ who is experiencing high-levels of network turbulence between the other validators. In such scenario, $v_i$ is unable to reach either \texttt{to-skip} or \texttt{to-commit} decisions for leader blocks, and is left with a large number of \texttt{undecided} slots. Assume that we have enough \texttt{undecided} slots to encompass two asynchronous waves ($L$ and $L2$). Once $v_i$ regains an appropriate level of connectivity, they will begin working through all their \texttt{undecided} slots. Since decision rules are made top to bottom, $v_i$ will decide $L2$ before $L1$. However, $L2$'s schedule may change once $L1$ is processed, and if this happens, then $v_i$ will have a different schedule amongst all other validators. To stop this from happening, we need to make sure that $v_i$ can retroactively change the schedule, and reanalyse the remaining leaders under the new schedule. Thus, the flow is such that, once a validator runs the decision rule and obtains a list of leader slots marked as \texttt{to-commit}, we also cut-off the sequence up to, and including, the first asynchronous leader slot marked as \texttt{to-commit}. Lastly, the algorithms for \tarpon can be found in Appendix~\ref{Appendix:Algorithms}.
\chapter{Implementation}
\label{chap:implementation}
\quad \tarpon was developed as a network, multi-core dual-mode protocol validator that has been implemented in Rust \cite{RUST}. The protocol builds upon a forked codebase of the \textsf{Mysticeti}\xspace protocol, which comprises of approximately $14,000$ lines of code. \tarpon has both been open-sourced\footnote{\url{https://github.com/zenodea/dualmode_mysticeti}}, and provided alongside the submission of this project. \tarpon leverages \texttt{tokio}, which allows for asynchronous networking operations, and communications are established through raw TCP sockets \cite{TOKIO}. The protocol utilizes the following cryptographic components for security: \texttt{ed25519-consensus} for asymmetric cryptographic operations (enabling validators to uniquely sign blocks and allowing other validators to verify these signatures), and \texttt{blake2} for cryptographic hashing functions (used to create reference hashes of blocks) \cite{ED25519,BLAKE}. The architecture follows a hybrid threading model where network communications operate asynchronously while the core dual-mode consensus logic executes synchronously within a dedicated thread. This design choice enhances testing reliability, reduces race condition risks, and enables precise profiling of the critical consensus pathway.

\quad In this section, we present a detailed examination of the core implementation components that directly correspond to the theoretical algorithms presented in the referenced work. \tarpon's implementation demonstrates the practical realization of the dual-mode protocol's key mechanisms, providing concrete evidence of the feasibility and performance characteristics of the proposed approach. We structure this analysis by focusing on two fundamental aspects of \tarpon that represent the most significant contributions of our work: the decision rule and the dual-mode scheduler. 
\section{Decision Rule}
\quad The implementation of \tarpon works as follows: whenever a validator receives a block, they run an idempotent function (handled by the \texttt{core} module) that tries to commit as many blocks that have not been committed yet (i.e, marked as \texttt{undecided}). We have a universal committer (representing Algorithm~\ref{alg:main}) that controls a number of base\_committers. The universal committer also handles the pipeline, and the multiple leaders per round aspect (where each extra leader represents a base\_committer instance). We will now examine the main changes made to these structures to accommodate for \tarpon's dual-mode property.
\subsection{Universal Committer} 
\quad The UniversalCommitter module manages multiple BaseCommitters with different \texttt{leader\_offset} and \texttt{round\_offset} configurations to support multi-leader consensus and pipelining. The UniversalCommitter, similarly to \textsf{Mysticeti}\xspace and \textsf{Mahi-Mahi}\xspace, calls the \texttt{try\_commit} function (represented by the \textsc{TryDecide} process in Algorithm~\ref{alg:main}). The \texttt{try\_commit} function takes the highest leader proposal round and highest committed leader proposal round, and tries to decide as many as possible within this range. However, now we also pass an \texttt{AsyncState} structure, explained more in-depth in Section~\ref{sec:impl_dual_mode}, that determines whether the round in question should be interpreted via the partially-synchronous or asynchronous mode, just like the \textsc{TryDecide} process in Algorithm~\ref{alg:main}. The complete changes made to the module can be observed in Appendix~\ref{Appendix:UniversalCommitter}. 
\subsection{Base Committer} 
\quad The base committer contains the core commit logic for the consensus protocol, providing methods like \texttt{try\_direct\_decide} and \texttt{try\_indirect\_decide} that can be called to determine whether a leader block can be committed or should be skipped (representing the entirety of Algorithm~\ref{alg:decider}). The base committer also handles the pipeline and multi-proposer settings of \tarpon. The primary change made to the BaseCommitter module is the calculations done when interpreting the DAG. Specifically, the voting and decision rounds change depending on the current mode. As the decision round is used to calculate the voting round, changing the round wave length consequently places the voting round correctly. The calculation of the decision round is now $r+w-1$, where $w$ is chosen based on the current \texttt{AsyncState} (either $w=3$ or $w=4/w=5$). The second major change resides in the indirect rule, specifically, we require the anchor's round $r'$ to take into consideration which mode of operation the leader proposal, which requires the anchor, utilised. Similarly as before, the anchor round is now calculated as $r'\geq r+w$, where $w$ is chosen based on the current \texttt{AsyncState}. Further changes can be seen in Appendix~\ref{Appendix:BaseCommitter}.
\newpage
\quad Figure~\ref{fig:flowchartConsensusProtocol} represents the modified workflow of \tarpon's decider instance. Specifically of note is the extra argument beyond the anchor $v$. $\mathcal{A}$ represents the current asynchronous state, storing information which allows the decider to decide whether anchor $v$ should be decided via \tarpon's partially-synchronous or asynchronous mode. The \texttt{try\_commit} function of \tarpon's decider will run for all undecided anchors from the highest committed slot, thus, the process seen in Figure~\ref{fig:flowchartConsensusProtocol} is repeated as many times as \texttt{undecided} slots there are.
\vspace{2em}
\begin{figure}[H]
\centering
\makebox[\textwidth][c]{
\begin{tikzpicture}[scale=0.8, transform shape, node distance=0.6cm and 1.8cm]
\tikzstyle{startstop} = [rectangle, draw=black, minimum width=1.8cm, minimum height=0.6cm, text centered, font=\scriptsize, fill=green!20]
\tikzstyle{process} = [rectangle, draw=black, minimum width=1.6cm, minimum height=0.6cm, text centered, font=\scriptsize]
\tikzstyle{decision} = [diamond, draw=orange, thick, minimum width=1.2cm, minimum height=0.6cm, text centered, font=\scriptsize, inner sep=1pt]
\tikzstyle{decisionsmall} = [diamond, aspect=2, draw=orange, thick, minimum width=1.2cm, minimum height=0.3cm, text centered, font=\scriptsize, inner sep=1pt]
\footnotesize
\tikzstyle{arrow} = [->, thick]


\node (start) [startstop, align=center] {\texttt{try\_decide}($v$,$\mathcal{A}$)};
\node (loop) [process, above=of start, align=center] {$\forall v \in \text{\texttt{undecidedLeader}}$};
\node (trycommit) [startstop, above=of loop, align=center] {\texttt{try\_commit}($\mathcal{A}$)};
\node (asynccheck) [decision, right=of start, align=center] {$\mathcal{A}$.\texttt{isAsync}($v_{\text{round}}$)};
\node (asynctrue) [process, above right=0.4cm and 1.5cm of asynccheck, align=center] {\texttt{async = true }};
\node (asyncfalse) [process, below right=0.4cm and 1.5cm of asynccheck, align=center] {\texttt{async = false}};
\node (directdecisions) [decisionsmall, right=4cm of asynccheck, align=center] {\texttt{direct\_decide(}$v$\texttt{,async)}};
\node (indirectdecisions) [decisionsmall, right=1cm of directdecisions, align=center] {\texttt{\texttt{indirect\_decide(}$v$\texttt{,async)}}};
\node (return) [startstop, above=of directdecisions, align=center] {\texttt{Return} $v_\texttt{decision}$};
\node (returnindirect) [startstop, above=of indirectdecisions, align=center] {\texttt{Return} $v_\texttt{decision}$};
\node (failed) [startstop, below=of indirectdecisions, align=center] {$v_\texttt{decision} ==$ \texttt{undecided}};
\draw [arrow] (start) -- (asynccheck);
\draw [arrow] (trycommit) -- (loop);
\draw [arrow] (loop) -- (start);
\draw [arrow] (asynccheck) |- node[above, pos=0.5] {\scriptsize \texttt{true}} (asynctrue);
\draw [arrow] (asynccheck) |- node[below, pos=0.5] {\scriptsize \texttt{false}} (asyncfalse);
\draw [arrow] (asynctrue.south) |- (directdecisions.west);
\draw [arrow] (asyncfalse.north) |- (directdecisions.west);
\draw [arrow] (directdecisions) -- node[right, pos=0.1] {\scriptsize \texttt{decided}} (return);
\draw [arrow] (directdecisions) -- node[above, pos=0.1] {\small \texttt{$\perp$}} (indirectdecisions);
\draw [arrow] (indirectdecisions) -- node[right, pos=0.1] {\scriptsize \texttt{decided}} (returnindirect);
\draw [arrow] (indirectdecisions) -- node[right, pos=0.1] {\small \texttt{$\perp$}} (failed);
\end{tikzpicture}}
    \caption{Flowchart decision rule followed by \tarpon. $v$ represents the anchor that is being decided on, while $\mathcal{A}$ represents the current asynchronous state of the system. We run the \texttt{try\_decide(}$v,\mathcal{A}$\texttt{)} function for each available undecided leader slot, from the highest proposer slot, to the lowest, undecided, proposer slot.}
    \label{fig:flowchartConsensusProtocol}
\end{figure}
\section{Dual-Mode Scheduler}
\label{sec:impl_dual_mode}
\quad The \texttt{DualModeScheduler} structure handles all the logic required to update the asynchronous mode frequency accordingly. The structure incorporates the following key components:
\begin{itemize}
    \item \texttt{AsyncState:} The \texttt{AsyncState} is a structure that holds the information necessary for the committer to interpret the DAG correctly. It holds the previously committed asynchronous mode round, alongside the current frequency that the asynchronous mode is run at. This structure is given to the \texttt{UniversalCommitter} to run the correct decision rules depending on the round. The implementation of this structure can be viewed in Appendix~\ref{appendix:AsyncState}.
    \item \textbf{DualModeConfig}: The \texttt{DualModeConfig} structure defines configuration parameters for the dual-mode consensus scheduler, including: wave length, leader count, and adaptive interval bounds that control the balance between synchronous and asynchronous consensus modes. The modularity of this structure also simplifies testing matters, as we can easily create new configurations for specific scenarios. The implementation of this structure can be viewed in Appendix~\ref{appendix:DualModeConfig}.
    \item Furthermore, \tarpon's \texttt{DualModeScheduler} holds the sequence of leaders marked as \texttt{to-commit} and a \texttt{sequence\_length} which holds the total amount of decided leaders. When we run the \texttt{try\_commit} function we also store the total amount of decided leaders (both leader slots marked as either \texttt{to-commit} or \texttt{to-skip}). Then we send both values to the \texttt{DualModeScheduler} to be utilised whenever we trigger a schedule update (i.e., once we mark an asynchronous leader as \texttt{to-commit}). 
\end{itemize}
\quad With these structures, \tarpon's \texttt{DualModeScheduler} can handle all the logic required to perform safe schedule updates. Once we commit an asynchronous leader, we call the \texttt{update\_dualmode\_schedule} which pops the first leader from the sequence of decided leaders stored in the scheduler (which will always be an asynchronous leader when the function is called). Appendix~\ref{appendix:DualModeScheduler} has an in-depth explanation of each of the functions used by the \texttt{DualModeScheduler} structure, however, two functions merit particular focus as the most challenging to implement and the most significant for the research contributions presented herein: the subDAG collection and the direct commits counter. Both of these functions are run once \tarpon protocol triggers a schedule update.  
\subsection{Collect Subdag Function} 
\quad Once a schedule update is triggered, \tarpon begins by collecting the subDAG of the newest leader block, via the \texttt{collect\_subdag()} (see Listing~\ref{listing:collectsubdag} in Appendix~\ref{appendix:DualModeScheduler}). The function follows similarly to how the linearizer collects the blocks from the subDAG of newly committed leaders. However, unlike the linearizer, we utilise the anchor's causal history to collect all the blocks included until we reach the oldest committed leader in the sequence of leaders stored in the dual-mode scheduler. By doing so, we ensure that the collected subDAGs will be identical across all validators, guaranteeing they reach the same consensus decision. It is important to note that even if some blocks are directly committed by a majority of validators, if the selected validator only indirectly committed those blocks, this still counts as a "bad network signal." 
\subsection{Count Direct Commits Function} 
\quad Once we obtain the subDAG, we count the direct commits within said subDAG (see Listing~\ref{listing:countdirectcommits} and Listing~\ref{listing:isdirectlycommitted} in Appendix~\ref{appendix:DualModeScheduler}). While very similar to how \tarpon's base committer handles the direct rule, compared to them we utilise a vector of \texttt{blocks} obtained via the \texttt{collect\_subdag()} function. This is a crucial difference because it means we exclusively use the causal history of the anchor used for the schedule update. As stated earlier, this means that, while some other honest validators may have marked a proposed block as \texttt{to-commit} via the direct rule, we only care to see how the anchor in question ruled the specific leader block.
\\
\null \quad Once the number of direct commits has been collected, we utilise this number and compare it to the expected number of direct commits (i.e., all proposers slots should be directly committed), and modify the asynchronous interval via a deterministic function it (seen in Appendix~\ref{appendix:DualModeScheduler}). Lastly, just like we have seen in Figure~\ref{fig:dualmode_subdag_analysis}, we make sure that the number of expected commits includes only the amount of commits that can be selected given a subDAG. For example, imagine a sequence $\mathcal{L} =  [L_{1a},L_{2a},L_{3a},L_{4a},L_{5a}]$, where they have all been marked as \texttt{to-commit} and $L_{5a}$ is a leader that has been committed under the asynchronous mode. Furthermore, assume that pipeline mode is turned on. Given the subDAG of $L_{5a}$, proposals $[L_{2a},L_{3a},L_{4a}]$ slots cannot be determined with the given information, thus, the number of expected commits from $\mathcal{L}$ would be $1$, that is $L_{1a}$. This is equivalent to the grayed out blocks as displayed in Figure~\ref{fig:dualmode_subdag_analysis}, where $L5_b$ and $L4_b$ are not included in the expected commits due as the DAG does not provide enough information to make a decision.  
\section{Sequence Cutoff} 
\quad Once we finish running the decision rule, as explained in Section~\ref{subsection:sequence} and Section~\ref{sequenceextended}, \tarpon iterates through the ordered sequence of leaders, and tries to commit as many as possible. However, given the protocol's unique dynamic schedule, extra security measures are put in place to ensure its \textit{safety}. Once the \texttt{try\_commit()} function is run, we iterate through the leader sequence, committing as many blocks as possible. However, the iteration will halt in two scenarios: first if an \texttt{undecided} leader is encountered, and then if leaders who's both been marked as \texttt{to-commit}, and that has been committed with \tarpon's asynchronous mode. If both are true, the sequence is truncated up to and including that said leader, ensuring schedule changes are made consistently throughout all validators. The implementation of the sequence cutoff mechanism is further presented in Appendix~\ref{appendix:safetymechanism}.
\\
\null \quad Figure~\ref{fig:flowchart_leader} depicts the complete flow of \tarpon. Whenever a validator receives a block, the idempotent \texttt{try\_commit()} function is run trying to decide as many leaders as possible with the current \texttt{AsyncState}. The protocol, then, loops through the ordered sequence of leaders, returned by \texttt{try\_commit()}, and tries to commit as many as possible. At the same time, if either the leader is \texttt{undecided} or the most recent committed leader has been committed through the asynchronous mode, then we stop iterating through the leader sequence, and discard subsequent leaders, including the one that made the iteration stop. In the latter case, however, we also update the asynchronous interval before running \texttt{try\_commit} again.
\begin{figure}[H]
\centering
\begin{tikzpicture}[scale=0.9, transform shape, node distance=1cm and 4cm]
\tikzstyle{startstop} = [rectangle, draw=black, minimum width=1.8cm, minimum height=0.7cm, text centered, font=\scriptsize, fill=green!20]
\tikzstyle{process} = [rectangle, draw=black, minimum width=1.8cm, minimum height=0.7cm, text centered, font=\scriptsize]
\tikzstyle{decision} = [diamond, draw=orange, thick, minimum width=2.7cm, minimum height=0.7cm, text centered, font=\scriptsize, inner sep=1pt]
\tikzstyle{decisionsmall} = [diamond, draw=orange, thick, minimum width=1.4cm, minimum height=0.7cm, text centered, font=\scriptsize, inner sep=1pt]
\tikzstyle{arrow} = [->, thick]
\node (start) [startstop, align=center] {Validator\\Receives Block};
\node (trycommit) [process, right=of start, align=center] {Run\\\texttt{try\_commit()}};
\node (getleaders) [process, right=of trycommit] {Get \texttt{LeadersSequence}};

\node (nextleader) [process, below=of getleaders] {\texttt{Leaders Sequence.pop}};
\node (checkstatus) [decisionsmall, left=of nextleader, align=center] {\texttt{Leader.Status}};
\node (commit) [process, left=of checkstatus] {\texttt{Commit(Leader)}};
\node (final1) [startstop, below=of checkstatus] {\texttt{Return}};
\node (checkasync) [decision, below=of commit] {\texttt{is\_async(Leader.round)}};
\node (return2) [startstop, below=of checkasync, align=center] {\texttt{Return} \&\\\texttt{UpdateSchedule}};

\node (morecheck) [decision, below=of nextleader] {\texttt{LeadersSequence.empty}};
\node (end) [startstop, below=of morecheck] {\texttt{Return}};

\draw [arrow] (start) -- (trycommit);
\draw [arrow] (trycommit) -- (getleaders);
\draw [arrow] (getleaders) -- (nextleader);
\draw [arrow] (nextleader) -- (checkstatus);
\draw [arrow] (checkstatus) -- node[right, pos=0] {\scriptsize \texttt{undecided}} (final1);
\draw [arrow] (checkstatus) -- node[above,pos=0.2] {\scriptsize \texttt{decided}} (commit);
\draw [arrow] (commit) -- (checkasync);
\draw [arrow] (checkasync) -- node[left,pos=0] {\scriptsize \texttt{true}} (return2);
\draw [arrow] (checkasync) -- node[pos=0.05, above] {\scriptsize \texttt{false}} (morecheck);
\draw [arrow] (morecheck) -- node[right, pos=0] {\scriptsize \texttt{true}} (end);

\draw [arrow] (morecheck) |- node[pos=0, right] {\scriptsize \texttt{false}} ++(0,2) -| (nextleader);

\end{tikzpicture}
\caption{Flowchart of the \tarpon protocol. The function \texttt{try\_commit} represents the function found in Figure~\ref{fig:flowchartConsensusProtocol}. The remainder of the flowchart visualizes how \tarpon handles the processing of the returned sequence of leaders. Of note, if the \texttt{LeaderSequence} is prematurely cut short via the \texttt{is\_async} function, the unprocessed leaders remaining in the \texttt{LeaderSequence} are discarded.}
\label{fig:flowchart_leader}
\end{figure}
\section{Tests}
\label{section:tests}
\quad To confirm the correctness of \tarpon's implementation, rigorous testing has been performed on all components that relate to either the decision rule or the dual-mode scheduler. This section will give a brief overview of how tests were conducted, however, additional information is presented in Appendix~\ref{appendixL:testConsensusProtocol}.
\newpage
\subsection{Decision Rule} 
\quad First and foremost, \tarpon's implementation of the decision rule has been tested under a large number of scenarios:
\begin{itemize}
    \item \textbf{Partially-Synchronous Mode:} \tarpon has been tested to work correctly under the partially-synchronous mode. This simply means that \tarpon's tests where run in such a way that the asynchronous mode was never reached. Replicating the results obtained from the \textsf{Mysticeti}\xspace protocol \cite{MYSTICETI}.
    \item \textbf{Asynchronous Mode:} \tarpon has been tested to work correctly under the asynchronous mode. Just like in the previous scenario, the tests here make sure that the protocol works when the \texttt{asynchronous\_wave} is set to either $4$ or $5$. 
    \item \textbf{Dual-Mode:} \tarpon has been tested to work correctly when switching between modes. This means making sure that the protocol correctly switches between partially-synchronous mode and asynchronous mode all in one run. As in the previous scenario, all the tests pass whether \texttt{asynchronous\_wave} is set to either $4$ or $5$
\end{itemize}
\quad As \tarpon supports a number of configurations, it has been tested under a combination of scenarios (e.g., \texttt{pipeline = true} and number of leaders \texttt{= 1}, etc...). This combination of configurations allows us to make sure that \tarpon works correctly under all settings. Due to the substantial amount of similar code across tests, we present only the implementation details for the dual-mode scenario, found in Appendix~\ref{appendixL:testConsensusProtocol}. In particular, we present our contributed tests, which verify that \tarpon's asynchronous mode operates correctly, and that it is capable of transitioning between modes correctly.
\subsection{Dual-Mode Scheduler} 
\quad Once the implementation for the consensus protocol has been properly tested, further tests have been created for the dual-mode scheduler. These tests mainly aimed at making sure that the deterministic algorithms (e.g. updating the rate for the asynchronous mode) worked as intended. We focused \tarpon's dual-mode scheduler testing on the more complex functions; specifically, the functions that allowed the dynamic portion of the scheduler to work as intended. This meant making sure that \tarpon's \texttt{is\_directly\_committed()} and \texttt{collectSubdag()} functions worked as intended under different scenarios. 
\begin{itemize}
    \item \texttt{collectSubdag()}: These tests made sure that a subDAG was collected appropriately. This meant making sure that, in scenarios where a subDAG of an anchor was disconnected from many blocks (i.e., in case of asynchrony), then, the collected subDAG only included blocks that could be accessed through the anchor's causal history. 
    \item \texttt{is\_directly\_committed()}: Just like the tests conducted on the consensus protocol, this function was tested to work correctly under both multi-leader and pipeline modes, as well as correctly ignoring leader slots marked as \texttt{to-commit} through the indirect rules. 
\end{itemize}
\quad Additionally, the correctness of the leader sequence truncation has been confirmed through tests performed in the core consensus module (\texttt{core.rs}). This test verifies that the system properly identifies asynchronous rounds within committed sequences and truncates at the appropriate boundary, ensuring the correct re-calculation of the asynchronous schedule.
\section{Simulations}
\label{section:simulations}
\quad The implementation of the dual-mode protocol has been rigorously tested via a simulation layer, originally introduced by Babel et al. \cite{MYSTICETI} for the \textsf{Mysticeti}\xspace protocol.. The provided simulation layer allowed us to replicate the functionality of \texttt{tokio} runtime and TCP networking. This simulation layer not only allowed us to simulate real-life WAN latencies, but to also mimic the passage of time. In this way, we were able to test the dual-mode functionality of \tarpon in numerous scenarios, allowing us to verify that \tarpon's dual-mode scheduler was working as intended. Like stated in Section~\ref{sec:impl_dual_mode}, \tarpon's simulation tests utilise a different \texttt{DualModeConfig} then the default preset, to allow the simulation test to hit the asynchronous mode more often then in a normal use case. The discrete event simulator tests confirmed that \tarpon's dual-mode scheduler correctly increases and decreases depending on network conditions, which was achieved by creating the following scenarios for the simulations (further detail can be seen in Appendix~\ref{appendix:simulation}):
\begin{itemize}
    \item \textbf{Ideal Case:} Latency range for all validators is within the timeouts set by the protocol; thus, all leaders proposal are successfully committed. Since the amount of direct commits is $100\%$, the scheduler does not detect any issues, thus it decides to run the asynchronous mode less frequently, decreasing the intervals frequency.
    \item \textbf{Worst Case:} $f$ validators are given a higher latency range then the remaining $2f+1$ validators. This will cause them to miss their leader slots in rounds where they are elected leaders, consequently making their proposal marked as \texttt{to-skip}. The test shows that frequency of \tarpon's asynchronous mode increases, as there are not enough commits to reach the required ratio. Interestingly, we cannot set more than $f$ validators to have higher latency due to the construction of \tarpon. When there are more than $f$ slow validators, the scheduler actually does not detect any problems with the system and subsequently decreases the rate of the asynchronous mode. Since validators need $2f+1$ references to propose a block, if $f+1$ validators are slow, the other validators must wait for at least one slow validator to complete the $2f+1$ references required. Thus, they all slow down to the same high-latency as the slow validators, thus allowing slow validators to directly commit their blocks. 
\end{itemize}
\quad Lastly, as stated by Babel et al. \cite{MYSTICETI}, the simulation layer appears to provide similar results when compared to geo-distributed cloud infrastructure tests. Specifically, we can obtain simulated latencies for the time it takes for a transaction to be committed. To understand the benefits of \tarpon's dual-mode functionality, it is beneficial to test the protocol under a number of different settings:
\begin{itemize}
    \item \textbf{Partially-Synchronous mode only:} This setting will reflect the latency of commitments under partially-synchronous assumptions (i.e., emulating how \textsf{Mysticeti}\xspace performs). We run \tarpon exclusively under its partially-synchronous mode, thus we will call this the \textit{p-sync} mode.
    \item \textbf{Asynchronous mode only:} This setting emulates the latency of commitments under asynchronous assumptions (i.e., it emulates how \textsf{Mahi-Mahi}\xspace performs). We run \tarpon exclusively under its asynchronous mode, thus we will call this the \textit{async} mode. 
    \item \textbf{Dual-Mode:} This setting will reflect how \tarpon performs when the dual-mode scheduler is turned on (with an initial asynchronous interval $K$ for every 30 rounds). We will call this the \textit{dual} mode. Of note, $K$ has been chosen so that, in our simulated setting, the protocol can reach multiple asynchronous-rounds. 
\end{itemize}
\quad Both the \textit{async} and \textit{dual} mode tests will be run utilising both configurable wavelengths ($w=4$ and $w=5$). While the simulation layer provides a number of metrics, we will only be focusing on the following, due to its relevance to the project: \texttt{tx commit(ms)}. 
The \texttt{tx commit(ms)} metric measures the transaction commit latency, defined as the time from when a transaction is received by a validator until it is committed in \tarpon.
\begin{table}[H]
    \centering
\begin{tabular}{lcc}
\toprule
\textbf{Operation Mode} & \textbf{Latency} ($ms$) & \textbf{vs. Baseline} \\
\midrule
\textit{p-sync} mode &  350.8 & {\textbf{baseline}} \\
\midrule
\textit{dual} mode ($w=4$) & 355.2 & {+1.3\%} \\
\textit{dual} mode ($w=5$) & 362.2 & {+3.3\%} \\
\midrule
\textit{async} mode ($w=4$) & 439.5 &{+25.3\%} \\
\textit{async} mode ($w=5$) & 530.9 &{+51.3\%}  \\
\bottomrule
\end{tabular}
\caption{Results for average latencies (in milliseconds). These results are visualized in Figure~\ref{visualizationtarpon}, found in Appendix~\ref{appendix:TARPADDITIONAL}}
\label{tab:latency_results}
\end{table}

\quad Table~\ref{tab:latency_results} present the results of our simulation tests on \tarpon. As expected, when \tarpon is run exclusively under its partially-synchronous mode, it performs best compared to both of the two other modes. Furthermore, we can see that, if \tarpon were to be run solely under its asynchronous mode, the latency would increase substantially. Thus, here we reach the true benefit of \tarpon's dual-mode functionality. We can clearly see that \tarpon, when run under its intended mode (that is with its dual-mode functionality), exhibits a minimal increase in latency when compared to the \textit{p-sync} mode. In compensation, however, \tarpon gains the "ability" to achieve liveness in asynchronous settings, achieving greater flexibility for a marginal increment in latency. Furthermore, \tarpon's latency increase between $w=4$ and $w=5$ is minimal compared to the $\approx 20\%$ increase we see in the \textit{async} mode. This means we can sacrifice a minimal amount of latency to gain stronger guarantees of \textit{liveness} in periods of asynchrony. That said, it is important to note that these simulation results cannot replace tests performed via geo-distributed cloud infrastructures, as done by prior works \cite{BULLSHARK,TUSK,MYSTICETI,MAHIMAHI}. Such tests would also help us determine what interval value $K$ would be the most optimal for \tarpon. Nonetheless,
Section~\ref{section:furtherwork} presents the possible next steps that can be taken beyond the scope of this project.

\quad We have not only presented \tarpon, the first dual-mode \textit{uncertified} DAG-based protocol, but also demonstrated its functionality. Following this, we conclude this project by presenting the formal proofs of \tarpon's correctness.
\chapter{Dual-Mode Protocol Security}
\label{chap:Proofs}
\quad We shall now prove that \tarpon never violates the \textit{safety} property, and that, eventually, achieves \textit{liveness} both before and after $GST$ is reached. We will do so by proving that \tarpon solves BAB, which allows validators to reach consensus on a sequence of messages, achieving State Machine Replication (SMR) \cite{BAB,SMR}. While similar to the atomic broadcast protocol described in Section~\ref{sec:foundations}, BAB imposes stricter requirements that are formally defined as follows \cite{BAB}:
\begin{itemize}
    \item \textbf{Integrity:} For an honest validator $v_i$, and every round number $r \in \mathbb{N}$, an honest validator $v_j$ will output \textit{a\_deliver$_j \; (b,r,v_i)$} at most once, independent of $b$. 
    \item \textbf{Total Order:} If an honest validator $v_i$ outputs \textit{a\_deliver$_i \; (b,r,v_k)$} before \textit{a\_deliver$_i \; (b',r',v_k')$}, then all honest validators $v_j$ output \textit{a\_deliver$_j \; (b,r,v_k)$} before \textit{a\_deliver$_j \; (b',r',v_k')$}.
    \item \textbf{Validity:} If an honest validator $v_i$ calls \textit{a\_broadcast$_i \; (b,r)$}, then, every other honest validator $v_j$ will eventually output \textit{a\_deliver$_j \; (b,r,v_i)$}. 
    \item \textbf{Agreement:} If an honest validator $v_i$ outputs \textit{a\_deliver$_i \; (b,r,v_k)$}, then every other honest validator $v_j$ will eventually output \textit{a\_deliver$_j \; (b,r,v_k)$}. 

\end{itemize}
\quad By proving that \tarpon meets the aforementioned properties, we also prove indirectly that it also achieves the \textit{safety} and \textit{liveness} properties. As we will see, due to the similarity between the protocols, the proofs revolving around safety are reproduced from those presented in the work of Jovanovic et al. \cite{MAHIMAHI} for the \textsf{Mahi-Mahi}\xspace protocol, with slight modifications to fit \tarpon's distinct dual-mode mechanism. The proofs of \textit{liveness}, instead, are identical to the proofs as seen in both \textsf{Mysticeti}\xspace and \textsf{Mahi-Mahi}\xspace, depending on whether $GST$ has been reached or not \cite{MYSTICETI,MAHIMAHI}. The primary contributions are the proofs that \tarpon's dynamic schedule does not interfere with the aforementioned proofs, but instead, maintains both \textit{safety} and \textit{liveness}. These proofs have been adapted from the works of Tsimos et al. \cite{HAMMERHEAD}, modified to take advantage of \tarpon's stronger \textit{liveness} guarantees.
\section{Total Order and Integrity Proofs}
\quad We will begin by proving that \tarpon satisfies the \textit{total order} and \textit{integrity} properties of BAB \cite{BAB}. The initial proofs of \textit{total order} and \textit{integrity}, as stated, are reproduced  from the work by Jovanovic et al. \cite{MAHIMAHI}. These are included both for completeness and because they are required to prove the security of the dynamic scheduler.
\begin{lemma}
\label{lem:certificate_propagation}
If in round $r$, $2f+1$ blocks from distinct validators certify a block $b$, then all blocks at future rounds $r'>r$ will have a path to a certificate for $b$ from round $r$.
\end{lemma}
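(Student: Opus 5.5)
The plan is a quorum-intersection argument followed by induction on $r'$, using only the block validity rules from Section~\ref{sec:structure_protocol_DAG}. Every valid block at round $\rho$ contains hash references to at least $2f+1$ distinct valid blocks from round $\rho-1$, each authored by a distinct validator. Let $C$ be the set of $2f+1$ blocks at round $r$, from distinct validators, that certify $b$.

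\textbf{Base case $r'=r+1$.} Any valid block $b'$ at round $r+1$ references a set $R$ of at least $2f+1$ round-$r$ blocks from distinct authors. Since $n=3f+1$, the author sets of $R$ and $C$ intersect in at least $(2f+1)+(2f+1)-(3f+1)=f+1\ge 1$ validators.

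The main obstacle is equivocation. A shared author $v$ does not by itself mean that $b'$ references a block \emph{in} $C$, because a Byzantine $v$ may have produced two distinct round-$r$ blocks.

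To handle this, I restrict attention to honest authors. At most $f$ of the $f+1$ shared authors are Byzantine, so at least one shared author $v$ is honest. An honest validator signs exactly one block per round. Since $b'$ is valid, all of its references are valid signed blocks, and the cryptographic assumptions give unforgeability. Hence the round-$r$ block of $v$ referenced by $b'$ must be the same block that lies in $C$. So $b'$ has a direct edge to a block of $C$, which is a certificate for $b$.

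\textbf{Inductive step.} Suppose every block at round $r'-1 \ge r+1$ has a path to some block of $C$. Any valid block at round $r'$ references at least one round-$(r'-1)$ block, since $2f+1\ge 1$. Concatenating that reference with the path given by the hypothesis yields a path to $C$. This step is routine.

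I also plan to state explicitly that a path is taken along hash references. Hash references are collision-resistant, so the path is well defined. The argument is purely structural on the DAG, so it holds independently of timing and of which mode (partially-synchronous or asynchronous) is in use. That independence is exactly what lets the same lemma serve both wave lengths $w=3$ and $w\in\{4,5\}$ later on.
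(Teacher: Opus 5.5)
Your proof is correct and follows the paper's argument: induction on $r'$, with quorum intersection between $b'$'s $2f+1$ round-$r$ references and the $2f+1$ certificates giving the base case. Both you and the paper implicitly need those references to come from distinct authors, which the validity rule as written only states as ``distinct hashes''. Your version is in fact tighter than the paper's in two places. First, you pick an honest validator among the $f+1$ shared authors so that equivocation cannot break the intersection, where the paper just says ``by quorum intersection''. Second, your inductive step uses the full hypothesis that every round-$(r'-1)$ block already has a path, so a single reference suffices; the paper instead weakens this to $2f+1$ blocks and re-invokes quorum intersection.
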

\begin{proof}
We prove the lemma by induction on $r'$. For the base case $r'= r + 1$. Let $b'$ be a block at round $r'$. Since $b'$ points to $2f + 1$ blocks at round $r$, by quorum intersection, $b'$ must point to at least one of the certificates for $b$. For the induction case, assume the lemma holds up to round $r'$ and consider the case of round $r' + 1$. Let $b'$ be a block at round $r' + 1$. By the induction hypothesis, $2f + 1$ blocks at round $r'$ have paths to round-$r$ certificates for $b$. Since $b'$ points to $2f + 1$ blocks from round $r'$, by quorum intersection, $b'$ must point to at least one block that has a path to a round-$r$ certificate for $b$ \cite{MAHIMAHI}.
\end{proof}
\begin{observation}
\label{obv:only_one_vote}
A block cannot vote for more than one block proposal from a given validator, in a given round.    
\end{observation}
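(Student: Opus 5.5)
The plan is to argue directly from the block validity rules of Section~\ref{sec:structure_protocol_DAG} and the definition of a vote. In \textsf{Mahi-Mahi}\xspace and \tarpon, a block $b'$ votes for a proposal $B$ of validator $v$ at round $r$ when $B$ is the \emph{first} block from slot $(v,r)$ reached by a deterministic traversal of $b'$'s causal history. This is the depth-first search used by the linearizer, starting from $b'$'s own previous block, which is its first reference. The aim is to show that, for a fixed $b'$ and a fixed slot $(v,r)$, this traversal singles out at most one block, so $b'$ can support at most one of any set of equivocating proposals from $v$ at round $r$.

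\textbf{Step 1: fix the slot and the block.} Fix a block $b'$ at round $r' > r$, together with a validator $v$ and a round $r$. The references of $b'$ are an ordered list of hashes, and block validity requires the first reference to be the author's own block from round $r'-1$. Hence the traversal order of $b'$'s causal history is a function of $b'$'s contents alone. Since hashes are collision resistant under the computational bound on the adversary, these contents fix the entire causal history.

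\textbf{Step 2: conclude at most one vote.} Define the vote of $b'$ for slot $(v,r)$ as the first round-$r$ block authored by $v$ that the traversal encounters, or $\perp$ if there is none. This is a deterministic function of $b'$, so it returns at most one block. Therefore $b'$ votes for at most one proposal in that slot, even if its history contains several equivocating blocks $B, B'$ from $v$ at round $r$. The argument holds even when the author of $b'$ is Byzantine, because every honest validator evaluates the same function on the same, hash-fixed block. It is also independent of the mode: the partially-synchronous and asynchronous rules differ only in which rounds count as voting rounds, not in how a vote is extracted.

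\textbf{Main obstacle.} The delicate point is that the paper only states this rule informally, as ``each honest validator can only support at most one block per slot''. The observation must hold for votes \emph{as interpreted} by any honest observer, not merely because honest authors behave well. I would therefore state explicitly that the vote is defined as an interpretation of the DAG, and check that different observers holding $b'$ compute the same traversal. This follows from validity and from the fact that an honest observer only accepts $b'$ once its whole causal history is available.
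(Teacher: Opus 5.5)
Your proposal is correct and takes essentially the same approach as the paper. The paper also defines a vote as the first block from slot $(v,r)$ reached by a deterministic depth-first traversal from the voting block, so the vote is a function of that block alone and names at most one proposal; your remarks on hash-fixed causal histories, Byzantine authors and observer independence simply spell out what the paper's brief ``by construction'' argument leaves implicit.
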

\begin{proof}
 This is by construction. Honest validators interpret support in the DAG through deterministic depth-first traversal. So even if a block $b$ in the vote round has paths to multiple leader round blocks from the same validator $v$ (i.e., equivocating blocks), all honest validators will interpret $b$ to vote for only one of $v$’s blocks (the first block to appear in the depth-first traversal starting from $b$) \cite{MAHIMAHI}.
\end{proof}
\begin{lemma}
\label{lem:atmost_single_block}
    At most a single block per round from the same validator can be certified.
\end{lemma}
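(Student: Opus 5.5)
I will argue by contradiction using quorum intersection together with Observation~\ref{obv:only_one_vote}. Fix a validator $v$ and a round $r$, and suppose two distinct blocks $b \neq b'$, both authored by $v$ in round $r$, are each certified. Here "certified" means the certificate pattern of Section~\ref{sec:DAG_Patterns}: a block in the decision round $r+w-1$ is a certificate for $b$ if it references (has a path to) at least $2f+1$ blocks from distinct validators in the voting round, each of which votes for $b$. The argument works only at the voting round. It therefore does not depend on the wave length $w$, and applies equally to the partially-synchronous mode ($w=3$) and to either asynchronous configuration ($w\in\{4,5\}$).

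\textbf{Step 1.} From the certificate for $b$, extract a set $S$ of at least $2f+1$ voting-round blocks from distinct validators that vote for $b$. From the certificate for $b'$, extract a set $S'$ of at least $2f+1$ voting-round blocks from distinct validators that vote for $b'$.

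\textbf{Step 2.} Since $n=3f+1$, the author sets of $S$ and $S'$ intersect in at least $f+1$ validators, so at least one common author $u$ is honest.

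\textbf{Step 3.} An honest validator produces a single block per round. Hence $u$'s voting-round block is the same block $c$ in both $S$ and $S'$. That block $c$ would vote for both $b$ and $b'$, two proposals from the same validator $v$ in the same round $r$. This contradicts Observation~\ref{obv:only_one_vote}, so at most one of $b,b'$ can be certified.

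\textbf{Main obstacle.} The delicate point is making sure "vote" is interpreted identically in $S$ and $S'$, even though the two certificates may lie in different parts of the DAG. The interpretation has to depend only on the voting block $c$ itself and not on the certifying block. I would settle this by appealing to the determinism in Observation~\ref{obv:only_one_vote}: a vote is defined by the depth-first traversal from $c$, and $c$'s causal history is fixed by its hash references. All honest validators, from any certificate, therefore assign to $c$ the same unique vote among $v$'s round-$r$ blocks.

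A second subtlety is that a Byzantine $u$ could equivocate in the voting round. This is why Step 2 needs an \emph{honest} common author, and why the bound $f+1>f$ is the essential counting fact.
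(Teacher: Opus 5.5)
Your proof is correct and follows the paper's argument essentially step for step. Two certificates give two $2f+1$ vote quorums in the same voting round, and quorum intersection yields an honest common author whose unique block would have to vote for both $b$ and $b'$, contradicting Observation~\ref{obv:only_one_vote}. Your extra care makes explicit what the paper leaves implicit: you require the votes to come from distinct authors, and you note that a vote is determined by the voting block's own causal history, so both certificates read it the same way. Like the paper, you assume both certificates are read with the same wave length $w$.
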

\begin{proof}
    Assume by contradiction that in a given round $r$, there exist two distinct blocks $b$ and $b'$ from the same validator $v$ such that both $b$ and $b'$ are certified. This means that there exist round-$(r + w - 1)$ blocks $c_b$ and $c_{b'}$ that certify $b$ and $b'$, respectively. $c_b$ and $c_{b'}$ must point to $2f + 1$ votes for $b$ and $b'$, respectively. By quorum intersection, there exists an honest validator that has voted for both $b$ and $b'$ in the vote round. Since honest validators only produce a single block per round, this implies that there exists a block that votes for both $b$ and $b'$, contradicting Observation~\ref{obv:only_one_vote} \cite{MAHIMAHI}.
\end{proof}
\begin{observation}
\label{obv:local_dag_contains_certificate}
    If an honest validator $v$ directly or indirectly commits a block $b$, then $v$'s local DAG contains a certificate for $b$.
\end{observation}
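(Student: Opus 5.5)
The plan is a case analysis on how $v$ marked $b$ as \texttt{to-commit}. The decision rules of Section~\ref{subsection:decision_rules} give only two routes: the direct rule and the indirect rule. I will show that each route forces a certificate for $b$ into $v$'s local DAG. Throughout, I fix the mode of $b$'s slot, as given by the \texttt{AsyncState} used when the slot is decided. This fixes the wave length $w\in\{3,4,5\}$, and a certificate for $b$ means a round-$(r+w-1)$ block with $2f+1$ votes for $b$. The argument should be uniform in $w$, so the dual-mode aspect does not matter here.

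\textbf{Direct case.} The direct rule marks $b$ as \texttt{to-commit} only when $v$ observes, in round $r+w-1$ of its local DAG, $2f+1$ blocks that each certify $b$. In particular, at least one such certifying block is present. The rule only reads blocks that $v$ has already inserted, and blocks are inserted only after their causal history, so that block and the votes it references lie in $v$'s local DAG. This gives the certificate immediately.

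\textbf{Indirect case.} Here $v$ has found an anchor $a$ in a round $r'\ge r+w$ that is marked \texttt{to-commit}. Preceding anchors marked \texttt{to-skip} are passed over, and if $a$ were \texttt{undecided}, $b$ would remain \texttt{undecided}. The rule marks $b$ as \texttt{to-commit} only if $a$'s causal history contains a certificate link to $b$, that is, a path from $a$ to a round-$(r+w-1)$ block certifying $b$. Since $a$ is in $v$'s local DAG, its entire causal history is too, by the validity condition that all references point to valid, already-held blocks. Hence the certificate lies in $v$'s local DAG.

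The main obstacle I expect is making precise that ``observing'' a pattern or a certificate link is a statement about $v$'s local DAG. Concretely, the blocks counted by the rules must actually be stored locally, which relies on DAG insertion being closed under causal history. A second point is that the wave length used when checking for the certificate must be the one fixed for $b$'s slot, even if the anchor belongs to a different mode. For this I would appeal to the \texttt{AsyncState} being passed explicitly when deciding $b$'s slot, and note that the rule then applies unchanged to $w=3$, $4$ or $5$.
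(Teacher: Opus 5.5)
Your proposal is correct and follows the same route as the paper, whose proof simply states that the observation follows immediately from the direct and indirect commit rules; your case split on the two rules just unfolds that remark. One small simplification: in the indirect case you do not need causal closure via the block validity condition, because the certified-link check (\textsc{IsCertifiedLink}) is itself evaluated on $v$'s local DAG and can only succeed if a certifying block is stored there.
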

\begin{proof}
    Follows immediately from \tarpon's direct and indirect commit rules
\end{proof}
\quad Given that \tarpon's leader slots are decided differently, depending on the current mode of operation, we adapt the following observation to better suite the protocol.
\begin{observation}
\label{obv:agree_sequence_leader}
Honest validators agree on the sequence of leader slots.
\end{observation}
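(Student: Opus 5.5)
Observation~\ref{obv:agree_sequence_leader} needs two things. Each proposer slot $(v,r)$ must be mapped to the same leader by every honest validator, and the mode (partially-synchronous with $w=3$, or asynchronous with $w\in\{4,5\}$) used to interpret that slot must be the same everywhere. The plan is an induction over the sequence of asynchronous waves. At each stage the \texttt{AsyncState} of all honest validators agrees, and a common \texttt{AsyncState} together with the common round number fixes the leader slots.

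\textbf{Base case.} All honest validators start from the same \texttt{DualModeConfig}: the same initial interval $K$, wave length, leader count and pipelining settings. Within the first asynchronous interval, whether a round is asynchronous is a deterministic function of the round number and this common state. For partially-synchronous rounds the leaders come from the deterministic round-robin, so they agree. For the asynchronous round the leaders come from the global perfect coin. The coin's value is unique once $2f+1$ shares are combined, so every honest validator that reconstructs it obtains the same leaders. A validator that has not yet reconstructed it leaves the slot \texttt{undecided}, which causes no disagreement.

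\textbf{Inductive step.} Assume all honest validators share the same \texttt{AsyncState} up to the $k$-th asynchronous leader $L_k$ that is marked \texttt{to-commit}. The schedule update triggered by $L_k$ depends only on two things.
\begin{itemize}
\item[(i)] The identity of $L_k$. Take two honest validators that commit a block in that slot. By Observation~\ref{obv:local_dag_contains_certificate} each holds a certificate for its block, and by Lemma~\ref{lem:atmost_single_block} at most one block per slot is certified, so the two blocks coincide.
\item[(ii)] The subDAG of $L_k$ down to the previous asynchronous committed leader, together with the direct-commit count computed by \texttt{count\_direct\_commits}. Blocks are identified by hashes of their references, so the causal history of a fixed block is identical in every local DAG. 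The count is a deterministic function of that history alone: it deliberately ignores each validator's local direct or indirect decisions. The expected-commit count is likewise a function only of the subDAG and the current (common) schedule.
\end{itemize}
The interval update is a deterministic function of these inputs, so every honest validator obtains the same new \texttt{AsyncState}.

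\textbf{Main obstacle.} The hard part is showing that the \emph{order} in which validators process slots cannot desynchronise the schedule. The decision rules run from the latest slot downward, so a lagging validator could interpret later slots under a stale schedule, as in the scenario of Section~\ref{sequenceextended}. I would argue this from the sequence cutoff. A validator emits committed leaders only up to and including the first asynchronous leader marked \texttt{to-commit}. It discards every later decision, updates the schedule, and reruns \texttt{try\_commit}. Hence every slot that a validator actually outputs beyond $L_k$ was classified under the post-$L_k$ \texttt{AsyncState}, which by the inductive step is common to all honest validators.

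It remains to rule out a validator committing a later slot while $L_k$ is still \texttt{undecided}. Sequencing stops at the first \texttt{undecided} slot, so such a validator outputs nothing past $L_k$; any tentative decisions it made under the stale schedule are never emitted and are recomputed after the cutoff. Combining this with the induction gives that all honest validators agree on the sequence of leader slots.
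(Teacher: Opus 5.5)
Your base case coincides with the paper's entire proof. The paper only notes that asynchronous leaders agree by the properties of the common coin and partially-synchronous ones by the deterministic round-robin, and it takes agreement on each round's mode for granted; schedule convergence is addressed only later, in Lemma~\ref{lem:schedule_agreement} and Lemma~\ref{lem:mode_agreement}. Folding mode agreement into this observation is a reasonable move, and your cutoff argument for slots beyond $L_k$ is fine. But the inductive step has a genuine gap: the schedule update depends on the validators' commit/skip decisions, and you never show those agree.

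An update fires only when an asynchronous slot is marked \texttt{to-commit}. Suppose one honest validator commits the asynchronous slot at round $R$ while another skips it. The first switches schedule at $R$; the second keeps the old \texttt{AsyncState} and treats $R+K$ as the next asynchronous round. They then do not even agree on which slot is $L_k$. Your item (i) only shows that two validators that \emph{both} commit that slot commit the same block.

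Item (ii) is also not right as stated. \textsc{CountDirectCommits} (\texttt{count\_direct\_commits} in the code) ranges over the validator's own list of leaders committed since the previous update. The expected-commit count is the number of slots it decided (committed or skipped) in that span. So the ratio depends on the locally decided sequence, not on the subDAG alone.

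What you need is agreement on decisions, i.e.\ Lemma~\ref{lem:both_validator_commit_skip}, and you cannot simply cite it here. Its proof goes through Lemma~\ref{lem:sameblock_committed}, which invokes the very observation you are proving. It also relies on agreement on the anchor of a slot, whose position $r'>r+w-1$ depends on the mode.

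The repair is to strengthen the inductive hypothesis: all honest validators share the same \texttt{AsyncState} up to $L_k$ \emph{and} reach the same decision (commit with the same block, or skip) on every slot they output under it. Prove the second part inside the induction by rerunning Lemmas~\ref{lem:direct_commit_consistency}, \ref{lem:general_commit_consistency} and \ref{lem:both_validator_commit_skip} within one schedule segment. This works because every validator decides all slots up to and including the next committed asynchronous slot, including any anchors beyond it, under the same pre-update state. Those arguments therefore run in a single fixed interpretation. Only then do the identity of $L_{k+1}$, the committed-leader list and the resulting update coincide across honest validators.
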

\begin{proof}
This is true for both \tarpon's operation modes. With its asynchronous mode, it follows immediately from the properties of the common coin. While for its partially-synchronous mode, it follows immediately from the use of \tarpon's deterministic leader election algorithm. 
\end{proof}
\begin{lemma}
\label{lem:direct_commit_consistency}
    If an honest validator $v$ commits some block $b$ in a slot $s$, then no other honest validator decides to directly skip the slot $s$
\end{lemma}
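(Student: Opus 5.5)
The proof will go by contradiction and rely mainly on quorum intersection. Suppose an honest validator $v$ commits a block $b$ in slot $s=(v_k,r)$, and another honest validator $v'$ directly skips $s$. By Observation~\ref{obv:agree_sequence_leader}, $v$ and $v'$ agree that $s$ is a leader slot. A subtler point is that they must also agree on which mode $s$ is decided in, and hence on the wave length $w$ used to locate the vote and decision rounds. I plan to handle this by appealing to the sequence cutoff of Section~\ref{sequenceextended}. Schedule updates are applied only after an asynchronous leader is committed and its subDAG is analysed. That subDAG is identical for all honest validators, so any two honest validators that decide $s$ use the same \texttt{AsyncState} for round $r$, and therefore the same $w\in\{3,4,5\}$. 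From this point on, $w$ is fixed for slot $s$.

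The first step uses Observation~\ref{obv:local_dag_contains_certificate}. Since $v$ commits $b$, either directly or indirectly, $v$'s local DAG contains a certificate for $b$. This is a block $c$ in round $r+w-1$ that references at least $2f+1$ blocks in the vote round, each of which supports $b$. So at least $2f+1$ distinct validators produced vote-round blocks supporting $b$.

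The second step uses the skip pattern. For $v'$ to directly skip $s$, its local DAG must contain at least $2f+1$ blocks from distinct validators in the vote round that do not support $b$. Since $v'$ skips the whole slot, these blocks support no proposal in $s$. Two sets of $2f+1$ validators out of $n=3f+1$ intersect in at least $f+1$ validators, so at least one honest validator $h$ lies in both. Honest validators produce exactly one block per round, so the same block of $h$ appears in both DAGs, because blocks are identified by hash and signature. That single block would both support $b$ and not support any block in $s$, which is impossible. Support is a deterministic function of the block's causal history via the depth-first traversal of Observation~\ref{obv:only_one_vote}, so all honest validators interpret it identically. An equivocating leader could make the vote support a different block $b'$ from $v_k$, but then $v'$ would not count it as a non-supporting vote either. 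This gives the contradiction.

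I expect the main obstacle to be the mode-agreement step. Every other step follows the \textsf{Mahi-Mahi}\xspace argument essentially verbatim. The dual-mode scheduler, however, could in principle lead $v$ and $v'$ to interpret round $r$ with different wave lengths. In that case the certificate and the skip pattern would live in different rounds, and quorum intersection would not apply. My first attempt will be to argue by induction over the committed asynchronous leaders. The base case is the initial configuration, which is common to all validators. For the inductive step, each schedule change is a deterministic function of the committed leader's subDAG, and this together with the truncation rule guarantees that no honest validator decides a slot beyond an asynchronous commit before the schedule has been updated.
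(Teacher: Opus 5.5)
Your core argument is the paper's proof. You argue by contradiction, use Observation~\ref{obv:local_dag_contains_certificate} to get $2f+1$ vote-round supporters of $b$ in $v$'s DAG, and use the skip pattern to get $2f+1$ non-supporters in $v'$'s DAG. Quorum intersection then yields an honest validator whose unique vote-round block everyone interprets identically. Your reading of the skip pattern as $2f+1$ blocks supporting \emph{no} proposal of $s$ is, if anything, more robust to equivocation than the paper's ``did not vote for $b$''. The paper's proof has no mode-agreement step. It silently uses a single $w$ for $s$, leaning implicitly on Observation~\ref{obv:agree_sequence_leader} and the later Lemma~\ref{lem:mode_agreement}, which only asserts eventual convergence.

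You are right that a common $w$ is needed, but your justification for it fails as stated. The truncation rule limits what a validator outputs, not what it decides. \textsc{TryDecide} in Algorithm~\ref{alg:main} classifies every slot from $r_{\text{highest}}$ downward under the \emph{current} \texttt{AsyncState}. Only afterwards is the sequence cut at the first committed asynchronous leader. Let $a_{k+1}$ be the asynchronous leader preceding $s$. A validator that has not yet committed $a_{k+1}$ therefore classifies $s$ under the stale schedule $S_k$, while another validator may already use $S_{k+1}$. So your claim that no honest validator decides a slot beyond an asynchronous commit before the schedule is updated is false.

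With mismatched modes the lemma genuinely breaks. Suppose both readings make $(v_k,r)$ a leader slot and only $f$ round-$(r+1)$ blocks reference $b$. The partially-synchronous reading then directly skips $s$. The asynchronous reading, after the boost rounds propagate $b$, can see $2f+1$ votes in round $r+w-2$ and $2f+1$ certificates in round $r+w-1$, and directly commit it.

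The repair is to state the lemma, and your mode claim, only for decisions that survive truncation. Argue by induction over committed asynchronous leaders, interleaved with the safety lemmas, since agreeing on which slot is $a_{k+1}$ and on its subDAG already needs commit consistency for earlier slots. The induction should show that every honest validator's retained decision for a slot after $a_k$ and up to $a_{k+1}$ in the leader sequence, together with the anchors used for it, is computed under the same schedule $S_k$. With that restriction your quorum-intersection argument goes through unchanged.
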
 
\begin{proof}
    Assume by contradiction that some honest validator $v'$ decides to directly skip $s$. Then it must be the case that in the local DAG of $v'$, at least $2f + 1$ validators did not vote for $b$. However, since $v$ commits $b$ at $s$, by Observation~\ref{obv:local_dag_contains_certificate}, there must exist a certificate for $b$ at $s$. So in $v$'s local DAG there must be $2f + 1$ validators that vote for $b$. By quorum intersection, at least one honest validator both voted for $b$ and did not vote for $b$. Since honest validators produce a single block in the vote round, this is a contradiction \cite{MAHIMAHI}.
\end{proof}

\begin{lemma}
\label{lem:general_commit_consistency}
If an honest validator directly commits some block in a slot $s$, then no other honest validator decides to skip the slot $s$.
\end{lemma}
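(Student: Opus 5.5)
By Lemma~\ref{lem:direct_commit_consistency}, no honest validator can \emph{directly} skip $s$. So the proof reduces to showing that no honest validator can \emph{indirectly} skip $s$. I will follow the Mahi-Mahi argument and argue by contradiction. Suppose honest $v$ directly commits block $b$ in slot $s$ (round $r$), and some honest $v'$ marks $s$ as \texttt{to-skip} via the indirect rule.

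\textbf{Step 1: a certificate quorum exists.} Since $v$ directly commits $b$, the decision round $r+w-1$ contains $2f+1$ blocks from distinct validators that certify $b$. Here $w$ is the wave length the schedule assigns to slot $s$. By Observation~\ref{obv:agree_sequence_leader} and the agreed schedule, $v$ and $v'$ use the same $w$ for $s$.

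\textbf{Step 2: every later block reaches a certificate.} By Lemma~\ref{lem:certificate_propagation}, every block in any round $r' > r+w-1$ has a path to a round-$(r+w-1)$ certificate for $b$. Lemma~\ref{lem:atmost_single_block} ensures that this certificate is for $b$ itself and not for an equivocating block from the same validator.

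\textbf{Step 3: the indirect rule cannot skip.} When $v'$ applies the indirect rule to $s$, it takes as anchor the first slot with round $r' \ge r+w$ that is marked \texttt{to-commit}, passing over anchors marked \texttt{to-skip}. It marks $s$ \texttt{to-skip} only if that anchor has no certificate link to $b$. By Step~2, the anchor block lies in a round past $r+w-1$, so it necessarily has a path to a certificate for $b$. Hence $v'$ must mark $s$ \texttt{to-commit}, a contradiction.

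\textbf{Main obstacle: dual-mode bookkeeping.} The delicate point is that the anchor's round must really exceed the decision round of $s$, and that both validators agree on whether $s$ is an asynchronous or partially-synchronous slot, so that they use the same $w$. I would handle this using the sequence cutoff of Section~\ref{sequenceextended}. Schedule changes are applied only after committing an asynchronous leader, and they are derived from its subDAG, which is identical for all honest validators. Therefore $v$ and $v'$ share the same \texttt{AsyncState} when deciding $s$. The implementation also fixes the anchor condition as $r' \ge r+w$ using that same $w$, which places the anchor strictly after round $r+w-1$ as Step~3 requires. If schedule agreement needs a separate argument, I would state it as an induction on the number of committed asynchronous leaders, deferring to the scheduler lemmas proved later.
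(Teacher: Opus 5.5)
Your proof is the paper's: rule out a direct skip via Lemma~\ref{lem:direct_commit_consistency}, then use the $2f+1$ round-$(r+w-1)$ certificates for $b$ and Lemma~\ref{lem:certificate_propagation} to show that the anchor, which lies strictly after round $r+w-1$, has a certified link to $b$, so the indirect rule cannot skip $s$. The paper simply assumes both validators use the same wave length $w$ for $s$. If you make that explicit, do it by your induction on committed asynchronous leaders, reading ``decides'' as decisions that survive the sequence cutoff (a decision computed under a stale schedule can genuinely conflict), and do not cite the later scheduler lemmas, since they are themselves built on this commit-consistency chain.
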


\begin{proof}
Assume by contradiction that an honest validator $v$ directly commits block $b$ in slot $s$ while another honest validator $v'$ decides to skip $s$. By Lemma~\ref{lem:direct_commit_consistency}, $v'$ cannot directly skip $s$; it must be the case therefore that $v'$ skips $s$ using the indirect decision rule. Let $r$ be the round of $s$. Since $v$ directly commits $b$, there exist $2f + 1$ certificates for $b$ at $s$. Therefore, by Lemma~\ref{lem:certificate_propagation}, all blocks at rounds $r' > r + w - 1$, including the anchor of $s$, have a path to a certificate for $b$ at $s$. Thus, $v'$ cannot decide to skip $s$ using the indirect decision rule. We have reached a contradiction \cite{MAHIMAHI}.
\end{proof}

\begin{lemma}
\label{lem:sameblock_committed}
If a slot $s$ is marked as \texttt{to-commit} by two honest validators, then slot $s$ contains the same block for both honest validators
\end{lemma}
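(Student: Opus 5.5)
The plan is to combine Observation~\ref{obv:local_dag_contains_certificate} with Lemma~\ref{lem:atmost_single_block}. The goal is to reduce the statement to uniqueness of certified blocks per (validator, round) pair.

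Suppose honest validators $v$ and $v'$ both mark slot $s$ as \texttt{to-commit}, with $v$ committing block $b$ and $v'$ committing block $b'$ in $s$. By Observation~\ref{obv:agree_sequence_leader}, both agree on which slot $s$ is: the same proposer validator $p$ at the same round $r$, and the same mode. The mode is fixed because the leader schedule, including the asynchronous schedule, is agreed upon. Hence $b$ and $b'$ are both blocks authored by $p$ at round $r$.

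By Observation~\ref{obv:local_dag_contains_certificate}, the local DAG of $v$ contains a certificate for $b$, and the local DAG of $v'$ contains a certificate for $b'$. Each certificate is a round-$(r+w-1)$ block pointing to $2f+1$ votes, with the same $w$ for both, since the mode of $s$ is agreed. Blocks are signed and referenced by hash, so these certificates are genuine DAG blocks, and both $b$ and $b'$ are certified in the global sense used in Lemma~\ref{lem:atmost_single_block}. That lemma says at most one block per round from the same validator can be certified, so $b=b'$.

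The main obstacle is justifying that certificates observed in two different local DAGs can be treated as certificates in a single common DAG, and that both use the same wave length $w$. The first point follows because local DAGs are consistent sub-DAGs of the union of all validly signed blocks: hash references fix each block's causal history. The second point requires that $v$ and $v'$ hold the same asynchronous schedule for $s$. I would obtain this from the sequence cutoff mechanism of Section~\ref{sequenceextended}. It ensures that schedule updates occur only after a committed asynchronous leader whose subDAG is identical for all honest validators, so the mode of $s$ is determined consistently. I would state this explicitly as part of Observation~\ref{obv:agree_sequence_leader}, or defer it to the later scheduler lemmas if needed.
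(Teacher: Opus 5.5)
Your proposal is correct and follows the paper's proof essentially step for step. Observation~\ref{obv:local_dag_contains_certificate} gives a certificate for the committed block, Observation~\ref{obv:agree_sequence_leader} fixes the slot's proposer for both validators, and Lemma~\ref{lem:atmost_single_block} then forces the two blocks to coincide. Your concern about mode agreement is implicit in the paper's use of Observation~\ref{obv:agree_sequence_leader}, but you cannot defer it to the later scheduler lemmas: Lemma~\ref{lem:schedule_agreement} relies on Observation~\ref{obv:consistent_commit_view}, which itself invokes this lemma, so that would be circular. Take it as part of the observation instead, or establish it by an induction over schedule epochs.
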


\begin{proof}
Let $v$ and $u$ be two honest validators and assume that $v$ commits block $b$ at slot $s$. We will show that if $u$ commits slot $s$, then $s$ contains $b$ at $s$. Let $w$ be the validator that produced block $b$. By Observation~\ref{obv:local_dag_contains_certificate}, for $b$ to be committed at slot $s$ at $v$, there must exist at least one certificate for $b$. By Observation~\ref{obv:agree_sequence_leader}, $v$ and $u$ agree that $s$ must contain a block by $w$. By Lemma~\ref{lem:atmost_single_block}, at most a single block per round from $w$ can be certified. So $u$ cannot have a certificate for any other block than $b$ at slot $s$ \cite{MAHIMAHI}.
\end{proof}
\begin{lemma}
\label{lem:both_validator_commit_skip}
If a slot $s$ is decided at two honest validators $v$ and $v'$, then either both validators commit $s$, or both validators skip $s$.
\end{lemma}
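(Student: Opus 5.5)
We argue by contradiction, following the Mahi-Mahi structure. Suppose $v$ commits $s$ and $v'$ skips $s$; the symmetric case is identical. By Lemma~\ref{lem:sameblock_committed} the block in $s$ is well defined whenever it is committed, so we only need to exclude disagreement on commit versus skip. We split on how each validator decided. If $v$ committed directly, Lemma~\ref{lem:general_commit_consistency} already forbids $v'$ from skipping $s$ by any rule. If $v'$ skipped directly, Lemma~\ref{lem:direct_commit_consistency} forbids $v$ from committing $s$ at all, since it shows that committing and directly skipping are incompatible. This leaves the case where both $v$ and $v'$ decided $s$ via the indirect rule.

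For the indirect case we proceed by induction on the round of the slot, from the highest decided slots downward, or equivalently by strong induction on the distance to the anchor. Recall that both validators run the indirect rule by scanning future slots with round $r' \geq r+w$, where $w$ is the wave length of $s$ under the agreed schedule. They skip anchors marked \texttt{to-skip} and stop at the first anchor marked \texttt{to-commit}. The key step is to show that $v$ and $v'$ use the same anchor.

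First, by Observation~\ref{obv:agree_sequence_leader} both validators enumerate the same sequence of candidate anchor slots. Each candidate has a higher round than $s$, so the induction hypothesis applies: every candidate decided by both validators receives the same decision. Since both validators reached a decision on $s$, each found a committed anchor, and all earlier candidates were skipped in its view. Suppose the first committed anchors differed, with $a$ preceding $a'$. Then $a$ is committed by one validator and skipped by the other, which contradicts the hypothesis. An undecided candidate cannot appear first for either validator, because the rule would then return \texttt{undecided}. So both validators stop at the same slot $a$, and by Lemma~\ref{lem:sameblock_committed} they see the same block $b_a$ there. Because the causal history of $b_a$ is fixed by hash references, whether $b_a$ has a path to a certificate for the block in $s$ is a property of $b_a$ alone. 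Both validators therefore reach the same verdict, contradicting the assumption that one commits $s$ and the other skips it.

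The main obstacle is the dual-mode schedule. The wave length $w$ used for $s$, and therefore both the certificate round and the admissible anchor rounds, depend on the \texttt{AsyncState}. We must argue that $v$ and $v'$ interpret $s$ and its candidate anchors under the same schedule. Our first attempt is to invoke the sequence cutoff of Section~\ref{sequenceextended}. Every schedule update is triggered only by a committed asynchronous leader, and its outcome is computed deterministically from that leader's causal history. By the same induction, all honest validators agree on these leaders, and the decision procedure is re-run after each truncation. Hence all decisions about slots after the last committed asynchronous leader are made under an identical schedule, and Observation~\ref{obv:agree_sequence_leader} extends to mode assignments. We would fold this agreement into the inductive hypothesis so that the base case and the step go through jointly.
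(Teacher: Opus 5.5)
Your reduction to the case where both validators use the indirect rule (via Lemmas~\ref{lem:general_commit_consistency} and~\ref{lem:direct_commit_consistency}) and your same-anchor argument are the paper's proof. The paper fixes a finite execution prefix and takes $s$ to be the highest slot on which $v$ and $v'$ differ, which is your downward induction, and it finishes with Lemma~\ref{lem:sameblock_committed} exactly as you do.

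The paper's proof never mentions the schedule. It tacitly reads Observation~\ref{obv:agree_sequence_leader} as also fixing the mode, and hence the wave length, of $s$ and of every candidate anchor. Lemma~\ref{lem:mode_agreement} cannot be cited here, because it rests on Lemma~\ref{lem:schedule_agreement}, which itself uses these safety lemmas. You are right to flag this, but your patch does not work as stated.

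\begin{itemize}
\item The schedule that governs $s$ and its anchors is fixed by committed asynchronous leaders \emph{below} $s$. Your downward hypothesis only gives agreement on slots \emph{above} $s$, so ``by the same induction'' cannot supply it.
\item Once the schedule can change, ``the decision of slot $a$ at $v$'' is ambiguous. A slot beyond a cutoff receives a status under the old schedule when it serves as an anchor. The truncation then discards it, and it may later be re-decided under the new schedule in a different mode. So ``the highest slot on which they differ'' only makes sense relative to a fixed schedule.
\end{itemize}

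What works is a nested induction. Let $L_1, L_2, \ldots$ be the committed asynchronous leaders that trigger schedule updates.

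\begin{itemize}
\item \textbf{Outer induction (upward over epochs).} Assume every honest validator enters epoch $k$ with the same committed prefix up to $L_k$ and the same schedule $S_k$.
\item \textbf{Inner induction (within epoch $k$).} Restrict attention to statuses computed under $S_k$. This family includes the statuses of anchors lying beyond $L_{k+1}$. Every honest validator evaluates those under $S_k$, because all slots up to $L_{k+1}$ are decided before the update fires and are never re-decided afterwards. In this family the mode of every slot is a fixed function of $S_k$, so your downward argument applies verbatim.
\item \textbf{Closing the outer step.} The inner argument shows both validators identify the same first asynchronous slot marked \texttt{to-commit}, namely $L_{k+1}$. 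The new interval is computed deterministically from the epoch's agreed decided sequence and the hash-fixed causal history of $L_{k+1}$. Hence all honest validators obtain the same $S_{k+1}$.
\end{itemize}
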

\begin{proof}
Assume by contradiction that there exists a slot $s$ such that $v$ and $v'$ decide differently at $s$. We consider a finite execution prefix and assume wlog that $s$ is the highest slot at which $v$ and $v'$ decide differently (*). Further assume wlog that $v$ commits $s$ and $v'$ skips $s$. By Lemma~\ref{lem:general_commit_consistency} and Lemma~\ref{lem:direct_commit_consistency}, neither $v$ nor $v'$ could have used the direct decision rule for $s$; they must both have used the indirect rule. Consider now the anchor of $s$: $v$ and $v'$ must agree on which slot is the anchor of $s$, since by our assumption (*) above, they make the same decisions for all slots higher than $s$, including the anchor of $s$. Let $s'$ be the anchor of $s$; $s'$ must be committed at both $v$ and $v'$. Thus, by Lemma~\ref{lem:sameblock_committed}, $v$ and $v'$ commit the same block $b'$ at $s'$. But then $v$ and $v'$ cannot reach different decisions about slot $s$ using the indirect decision rule. We have reached a contradiction \cite{MAHIMAHI}.
\end{proof}
With Lemma~\ref{lem:both_validator_commit_skip}, we prove that there is consistency amongst honest validators' commit sequences. If an honest validator commits a leader block, then all other honest validators will also commit that same block. The same is true in the case if a honest validator decides to skip a leader block. However, we now need to prove that, beyond leader blocks, also non-leader blocks are delivered in the same order by honest validators.
\paragraph{Causal history \& delivery conditions.} Consider an honest validator $v$. We call the causal history of a block $b$ in $v$'s DAG, the transitive closure of all blocks referenced by $b$ in $v$'s DAG, including $b$ itself. In \tarpon, a block $b$ is delivered by an honest validator $v$ if (1) there exists a committed leader block $l$ in $v$'s DAG such that $b$ is in $l$'s causal history (2) all slots up to $l$ are decided in $v$'s DAG and (3) $b$ has not been delivered as part of a lower slot's causal history. In this case we say $b$ is delivered at slot $s$, or delivered with block $l$ \cite{MAHIMAHI}.

\begin{lemma}
\label{lem:same_block_delivered}
If a block $b$ is delivered by two honest validators $v$ and $v'$, then $b$ is delivered at the same slot $s$, and $b$ is delivered with the same leader block $l$, at both $v$ and $v'$.
\end{lemma}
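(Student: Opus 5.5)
We argue by contradiction, following the structure of the corresponding argument in \textsf{Mahi-Mahi}. Suppose $b$ is delivered at $v$ with leader block $l$ at slot $s$, and at $v'$ with leader block $l'$ at slot $s'$, where $(s,l)\neq(s',l')$. Delivery at $v$ requires that all slots up to and including $s$ are decided in $v$'s DAG. Delivery at $v'$ requires the same for slots up to $s'$. Observation~\ref{obv:agree_sequence_leader} says both validators use the same leader slot sequence. Because of the sequence cutoff, both also use the same dual-mode schedule for every slot up to the first asynchronous commit; we will need to check this. Hence every slot up to $\min(s,s')$ is decided at both validators.

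By Lemma~\ref{lem:both_validator_commit_skip}, the two validators make the same commit/skip decision on each of these slots. By Lemma~\ref{lem:sameblock_committed}, every committed slot holds the same leader block at both validators. So both validators have the same prefix of committed leaders $l_1,\dots,l_k$ up to $\min(s,s')$. Assume wlog $s\le s'$.

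\textbf{Case $s<s'$.} Then $s$ is committed at $v'$ with the same block $l$, and $b$ lies in the causal history of $l$. Causal history is fixed by hash references, so this history is identical in every local DAG that contains $l$. Therefore $v'$ would deliver $b$ at slot $s$ or earlier, which contradicts condition (3) of the delivery definition at $v'$.

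\textbf{Case $s=s'$.} Here Lemma~\ref{lem:sameblock_committed} gives $l=l'$, which contradicts $(s,l)\neq(s',l')$.

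We also need to rule out $b$ being delivered at some slot earlier than $s$ at $v$ but not at $v'$. This follows from the same reasoning, since the histories of the earlier committed leaders coincide. Earliest-slot delivery is therefore determined by the common committed prefix.

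\textbf{Main obstacle: the dynamic schedule.} Whether a slot is decided in asynchronous or partially-synchronous mode, and with which wave length $w$, depends on the schedule. The schedule itself is derived from the subDAG of the last committed asynchronous leader. To show the two validators agree on the decided prefix, I would run an induction over the asynchronous commits.

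The base case is the initial schedule, which is common to all honest validators. For the inductive step, assume both validators agree on the schedule up to the $j$-th committed asynchronous leader $a_j$. Then they agree on all decisions up to $a_j$, using the lemmas above. By Lemma~\ref{lem:sameblock_committed}, they commit the same block $a_j$. The subDAG of $a_j$, taken back to the previous asynchronous leader, is a deterministic function of hash-linked blocks, so the direct-commit count and the updated interval are identical at both validators.

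The cutoff rule of Section~\ref{sequenceextended} guarantees that no slot after $a_j$ is delivered under the old schedule. So every decision used for delivery was made under the agreed schedule. This keeps Lemma~\ref{lem:both_validator_commit_skip} applicable slot by slot.
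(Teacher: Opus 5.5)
Your core argument is the paper's. You assume $s\neq s'$, observe that the validator delivering at the later slot must also have committed the earlier slot with the same leader block, and conclude it would already have delivered $b$ there, contradicting delivery condition (3). You then get $l=l'$ from Lemma~\ref{lem:sameblock_committed}.

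Where you differ is in rigor, and both additions are correct. First, the paper justifies that the later validator commits the earlier slot by citing Lemma~\ref{lem:sameblock_committed} alone. That lemma only says that \emph{if} both validators commit a slot, they commit the same block. You supply the missing step: delivery condition (2) guarantees every slot up to $\max(s,s')$ is decided, and Lemma~\ref{lem:both_validator_commit_skip} then forces the same commit decision.

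Second, the paper's proof ignores the dynamic schedule here. It takes the slot sequence and each slot's mode as agreed via Observation~\ref{obv:agree_sequence_leader}, and treats schedule consistency only later in Lemma~\ref{lem:schedule_agreement}. That later lemma itself invokes the present lemma, so there is an implicit circular dependency. Your induction over the committed asynchronous leaders $a_j$ makes the argument self-contained and breaks that circle.

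One point needs precise statement in your induction. Lemma~\ref{lem:both_validator_commit_skip} must be applied to the whole DAG interpreted under the old schedule $S_{j-1}$, not just to the delivered prefix. This includes the later-discarded decisions on slots beyond $a_j$, because those slots can serve as indirect-rule anchors for slots up to $a_j$. Since both validators make all of these decisions under $S_{j-1}$, the highest-differing-slot argument still goes through. Your paragraph about ruling out delivery at an even earlier slot is redundant given the wlog, but harmless.
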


\begin{proof}
Let $s$ be the slot at which $b$ is delivered at validator $v$, and $l$ the corresponding leader block in $s$, also at validator $v$. Consider now the slot $s'$ at which $b$ is delivered at validator $v'$, and $l'$ the corresponding leader block. Assume by contradiction that $s' \neq s$. If $s' < s$, then $v$ would have also delivered $b$ at slot $s'$, since by Lemma~\ref{lem:sameblock_committed} must commit the same leader blocks in the same slots, so $v$ could not have delivered $b$ again at slot $s$; a contradiction. Similarly, if $s < s'$, then $v'$ would have already delivered $b$ at slot $s$, since by Lemma~\ref{lem:sameblock_committed} $v$ and $v'$ must have committed the same block in slot $s$; contradiction. Thus it must be that $s = s'$, and by Lemma~\ref{lem:sameblock_committed}, $l = l'$ \cite{MAHIMAHI}.
\end{proof}

\quad Having established that both leader and non-leader blocks are delivered in consistent order by honest validators, we now turn to proving that \tarpon's dynamic scheduling mechanism maintains safety across schedule transitions. As previously noted, we prove the safety of the dynamic scheduling mechanism by adapting proofs from Tsimos et al. \cite{HAMMERHEAD}, developed for the \textsf{Hammerhead}\xspace protocol. We first show that blocks broadcast by honest validators will, eventually, by added to the local DAG of honest validators.

\begin{lemma}
\label{lem:localDag_included}
If a block $b$ produced by an honest validator $v$ references some block $b'$, then $b'$ will eventually be included in the local DAG of every honest validator.
\end{lemma}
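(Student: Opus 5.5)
The plan is to argue from the system model (reliable, authenticated links) together with the block-creation rule. An honest validator only proposes a block after it holds all the blocks it references in its local DAG. Concretely, I will show that an honest validator $v$ can only include a hash reference to $b'$ in its block $b$ if $b'$, and recursively $b'$'s entire causal history, is already in $v$'s local DAG. This is because a block is added to the local DAG only once it is valid, and validity requires that all its references point to valid blocks from round $r-1$.

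Step one is an induction on the round of $b'$ (equivalently, on the depth of the causal history). The claim is that every block in the causal history of $b$ is stored at $v$. The base case is the genesis blocks, which every validator holds. The inductive step uses the validity condition of Section~\ref{sec:structure_protocol_DAG}: a block is only accepted into a local DAG after all its parents have been accepted.

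Step two is the dissemination argument. Since $v$ is honest and broadcasts $b$, the reliable-link assumption guarantees that every honest validator $u$ eventually receives $b$. When $u$ receives $b$ but is missing some ancestor $b'$, $u$ requests the missing block, either from $v$ or from the author of $b$. Because $v$ stores $b'$ by step one and is honest, $v$ answers such requests, and the reliable links deliver the reply. Repeating this for each missing ancestor adds finitely many blocks, since the causal history of $b$ is finite. Hence $u$ eventually holds $b'$ with all its parents, and inserts $b'$ into its local DAG. Authentication and collision-resistance of the hash ensure that the block $u$ obtains is exactly $b'$ and not an equivocating substitute. A Byzantine author may have produced several blocks for that round, but the hash pins down $b'$.

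The main obstacle is that the paper never states a synchronization or pull mechanism explicitly; it only asserts that honest messages eventually arrive. I will make the assumption explicit: honest validators forward, or serve on request, any block they have included in their DAG, as in \textsf{Mysticeti} and \textsf{Mahi-Mahi}. With this, the induction goes through. I will also remark that the argument uses no timing assumption, so the lemma holds both before and after $GST$, and that it holds regardless of the mode (partially synchronous or asynchronous) the schedule assigns to the round of $b'$.
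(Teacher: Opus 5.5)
Your proposal is correct and follows the same route as the paper. The paper's proof also invokes the synchronizer: a validator that receives $b$ without $b'$ requests $b'$ from the honest author $v$ over reliable links, and repeats this recursively for the causal history of $b'$. Your Step one (that $v$ itself stores the entire causal history of $b$, so it can answer every request) and your finiteness and hash-binding remarks only make explicit what the paper's argument uses implicitly.
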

\begin{proof}
This is ensured by the synchronizer sub-component in each validator: if some validator $w$ receives $b$ from $v$, but does not have $b'$ yet, $w$ will request $b'$ from $v$; since $v$ is honest and the network links are reliable, $v$ will eventually receive $w$'s request, send $b'$ to $w$, and $w$ will eventually receive $b'$. The same is recursively true for any blocks from the causal history of $b'$, so $w$ will eventually receive all blocks from the causal history of $b'$ and thus include $b'$ in its local DAG \cite{MAHIMAHI}.
\end{proof}

\begin{observation}
\label{lem:dag-propagation}
If an honest validator $v$ broadcasts a block $b$ at round $r$, then every honest validator will eventually include $b$ in their local DAG.
\end{observation}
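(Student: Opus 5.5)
The plan is to use the assumption that links between validators are reliable and authenticated, together with Lemma~\ref{lem:localDag_included}. Concretely, I will show two things: every honest validator $u$ eventually \emph{receives} $b$, and once received, $b$ passes $u$'s validity checks, so $u$ can insert it into its local DAG.

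\textbf{Step 1: delivery of $b$ itself.} Since $v$ is honest, it follows the protocol and broadcasts $b$ to all validators. The links are reliable, so every honest $u$ eventually receives $b$. The links are also authenticated, so $u$ attributes $b$ to $v$ correctly and $v$'s signature verifies. The round number and the coin share are produced honestly by $v$, so they are valid by construction; in particular, threshold coin shares are verifiable. Honesty also guarantees that $b$ is the unique block $v$ issues in round $r$, so there is no equivocation that $u$ must resolve.

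\textbf{Step 2: delivery of the causal history.} For $u$ to insert $b$, every block $b$ references must already be present in $u$'s DAG. Here I apply Lemma~\ref{lem:localDag_included} to each reference $b'$ of $b$. Since $b$ is produced by an honest validator, each such $b'$ is eventually included in $u$'s local DAG. That lemma's synchronizer argument already recurses through the causal history of $b'$, so the whole finite causal history of $b$ eventually lies in $u$'s DAG.

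\textbf{Step 3: combining.} After finitely many events, $u$ holds $b$ together with all of its (validated) predecessors. The remaining structural checks hold because $v$ is honest: $b$ has at least $2f+1$ distinct references to round $r-1$, and its first reference is $v$'s own round-$(r-1)$ block. Hence $u$ adds $b$ to its local DAG.

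The main obstacle is making sure the "eventually" does not hide an infinite regress of missing ancestors, and that Byzantine ancestors cannot stall the process. Lemma~\ref{lem:localDag_included} handles both points, because $v$ holds every ancestor and can serve every request. The causal history is finite since rounds are bounded below by genesis, so induction on the round number closes the argument.
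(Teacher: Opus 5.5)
Your proof is correct and rests on the same key fact as the paper's: reliable links deliver $b$ from the honest $v$ to every honest validator. The paper stops at receipt, whereas you also use Lemma~\ref{lem:localDag_included} and the block-validity checks to justify the step from receiving $b$ to actually inserting it into the local DAG, a step the paper's one-line argument leaves implicit.
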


\begin{proof}
This holds by construction. Since network links are reliable, all honest validators eventually receive $b$ from $v$. 
\end{proof}
\begin{observation}
\label{obv:consistent_commit_view}
If an honest validator commits a block $b$ at slot $s$, then every honest validator that commits slot s will commit the same block b, and when they do commit, they will have the same causal history for b in their DAG.
\end{observation}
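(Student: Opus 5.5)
The observation has two parts: every honest validator that commits slot $s$ commits the same block $b$, and it sees the same causal history for $b$. I will prove them separately, reusing the safety lemmas above and the fact that blocks are identified by collision-resistant hash references.

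\textbf{Same block.} Let $v$ commit $b$ at $s$, and let an honest $v'$ also commit slot $s$. By Observation~\ref{obv:agree_sequence_leader}, $v$ and $v'$ agree that $s$ is the slot of one validator $w$ in one round $r$. Lemma~\ref{lem:both_validator_commit_skip} rules out one committing $s$ while the other skips it. Lemma~\ref{lem:sameblock_committed} then shows that the block $v'$ commits in $s$ is $b$. The underlying reason is Observation~\ref{obv:local_dag_contains_certificate}: both validators hold a certificate for their committed block. Lemma~\ref{lem:atmost_single_block} says at most one block by $w$ in round $r$ can be certified, so the two blocks must coincide even if $w$ equivocated.

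\textbf{Same causal history.} The causal history of $b$ is the transitive closure of the hash references starting from $b$. A block contains its reference list, and its hash commits to that list, so the digest of $b$ fixes its direct parents. Applying this repeatedly, each parent's digest fixes its own parents, and so on. I will argue by induction on round number, going downward from $r$: any block reached in one validator's DAG has the same digest as the corresponding block in the other's. Since the adversary is computationally bounded, two different blocks cannot share a digest, so they are the same block with the same reference list. There is one more point to check. A validator inserts a block into its local DAG only after it holds all of that block's references; this is the synchronizer behaviour used in Lemma~\ref{lem:localDag_included}. So when $v'$ commits $b$, it already holds the complete closure. Because the closure is fully determined by $b$, it is identical at $v$ and $v'$.

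\textbf{Main obstacle.} The delicate step is showing that the causal history is present in full at commit time, not merely that it eventually arrives. I will handle it with the DAG-insertion invariant: a block is added to the local DAG only once all its ancestors are present. This turns \emph{committed in the local DAG} into \emph{complete closure locally available}. Collision resistance then gives uniqueness of that closure.
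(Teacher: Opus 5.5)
Your proposal is correct and follows essentially the paper's argument. The same block comes from Lemma~\ref{lem:sameblock_committed}; your extra appeal to Lemma~\ref{lem:both_validator_commit_skip} is harmless but unnecessary, since both validators are assumed to commit $s$. Completeness of the causal history comes from the synchronizer behaviour behind Lemma~\ref{lem:localDag_included}. Where the paper only asserts that this history is ``deterministic and identical across all validators,'' you supply the justification via hash-chained references and collision resistance. Your DAG-insertion invariant is also a sharper reason for completeness \emph{at commit time} than the paper's eventual-delivery lemma.
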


\begin{proof}
By Observation~\ref{lem:dag-propagation}, if an honest validator commits block $b$, then every honest validator will eventually include $b$ in their local DAG. By Lemma~\ref{lem:sameblock_committed}, if two honest validators both commit slot $s$, they commit the same block $b$ at that slot. When an honest validator commits $b$, by Lemma~\ref{lem:localDag_included}, they must have the complete causal history of $b$ in their local DAG, which is deterministic and identical across all validators \cite{MAHIMAHI}.
\end{proof}
\vspace{1em}
\begin{lemma}[Schedule Agreement]
\label{lem:schedule_agreement}
If an honest validator $v_i$ switches to schedule $S$, then eventually every honest validator will switch to schedule $S$.
\end{lemma}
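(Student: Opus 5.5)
We argue by induction on the number of schedule updates, i.e.\ on the sequence of asynchronous leader slots that trigger an update, following the Hammerhead-style argument of Tsimos et al.\ \cite{HAMMERHEAD}. A schedule switch at $v_i$ happens only when $v_i$, walking its ordered leader sequence, reaches an asynchronous leader slot $s_K$ marked \texttt{to-commit}. Before that point, every earlier slot has been decided, and the sequence is then cut off right after $s_K$ (Section~\ref{sequenceextended}). The new schedule $S$ is a deterministic function of three inputs: the previous schedule, the committed leader block $l$ at $s_K$, and the subDAG of $l$ down to the round after the previous asynchronous committed leader. The proof therefore has two parts. First, every honest validator eventually commits the same block $l$ at $s_K$ under the same old schedule. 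Second, every honest validator evaluates the update function on identical inputs.

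For the inductive hypothesis, assume all honest validators agree on the schedule in force up to the previous update; the base case is the common initial configuration. Under that schedule they agree on which rounds are asynchronous and on the wave lengths. Observation~\ref{obv:agree_sequence_leader} then gives them the same sequence of leader slots up to and including $s_K$. By Lemma~\ref{lem:both_validator_commit_skip} and Lemma~\ref{lem:sameblock_committed}, any honest validator that decides these slots decides them identically, and commits $l$ at $s_K$. It remains to show that every honest validator \emph{eventually} decides all slots up to $s_K$. For this I would use the indirect rule together with Lemma~\ref{lem:certificate_propagation}. Since $v_i$ committed $l$, a certificate for $l$ exists, and $v_i$ decided every earlier slot via blocks in its DAG. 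By Lemma~\ref{lem:localDag_included} and Observation~\ref{lem:dag-propagation}, those blocks, and the anchors $v_i$ used, eventually appear in every honest validator's local DAG. The same direct or indirect decisions then become available to them.

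Given a common commit of $l$, Observation~\ref{obv:consistent_commit_view} gives every honest validator the same causal history of $l$. The previous asynchronous committed leader is common by the inductive hypothesis, so \texttt{collect\_subdag()} returns the same block set at every honest validator. The direct-commit count and the expected-commit count are computed only from this subDAG and from the shared leader sequence. The count deliberately ignores each validator's local DAG beyond $l$'s history, so both counts coincide. Since the interval-update function is deterministic, every honest validator computes the same $S$. The cutoff mechanism ensures no validator decides slots after $s_K$ under the stale schedule before applying $S$. This closes the induction.

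The main obstacle is the eventuality step. We must argue that a lagging honest validator, which may have decided later slots in the wrong order (the scenario of Section~\ref{sequenceextended}), still ends up deciding $s_K$ under the correct schedule. I would handle this by noting that decisions are recomputed idempotently by \texttt{try\_commit} on each new block. I would also note that slots beyond the cutoff are discarded and re-evaluated after the update, so any interim decisions made under a wrong schedule are never output. Agreement therefore only requires the eventual common DAG content guaranteed by Lemma~\ref{lem:localDag_included}.
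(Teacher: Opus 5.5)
\textbf{There is a genuine gap in your eventuality step.} Your second half is sound and matches the paper. Once every honest validator has committed the same block $l$ at $s_K$ under the same old schedule, Observation~\ref{obv:consistent_commit_view} and the determinism of \texttt{collect\_subdag()} and of the interval update give everyone the same new schedule.

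The problem is your claim that the blocks $v_i$ used to decide the slots up to $s_K$ ``eventually appear in every honest validator's local DAG''. Lemma~\ref{lem:localDag_included} and Observation~\ref{lem:dag-propagation} only cover blocks broadcast by honest validators and the causal histories of honest blocks. Consider a Byzantine block delivered only to $v_i$ and never referenced by any honest block. For instance, it might arrive after $v_i$ has built its next-round block, and later blocks may only reference round $r-1$. Such a block need never reach $v_j$.

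Concretely, $v_i$ can directly commit $l$ using $f+1$ honest certificates plus $f$ Byzantine certificates sent only to it. Meanwhile $v_j$ sees only $f+1$ certificates and fewer than $2f+1$ non-votes, so the direct rule never fires at $v_j$. The indirect rule at $v_j$ needs a \emph{later} anchor that $v_j$ itself has decided as \texttt{to-commit}, and your argument does not supply one. So honest validators are not guaranteed a common DAG. Your closing claim that agreement ``only requires the eventual common DAG content'' is false.

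\textbf{The missing ingredient is progress.} The paper closes this step by arguing that $v_j$ eventually commits some anchor $\ell_m$ with $r_m > r_i$. Since the output sequence stops at the first undecided slot, committing $\ell_m$ forces $v_j$ to have decided $s_K$. Lemmas~\ref{lem:both_validator_commit_skip} and~\ref{lem:sameblock_committed} then force it to commit the same $l$. The paper phrases this via a path from $\ell_m$ to $\ell_i$ and Lemma~\ref{lem:same_block_delivered}. Either way, $v_j$ then triggers the same switch.

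Your remarks about idempotent re-evaluation and the cutoff fit this argument. However, you must replace the propagation-only claim with an appeal to eventual commitment of a later slot at $v_j$. Note also that this progress must hold for $v_j$ while it still interprets the DAG under the old schedule. So it cannot be drawn from the liveness results that are themselves derived from Lemma~\ref{lem:mode_agreement}, which would be circular.
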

\begin{proof}
We prove by induction that all honest validators will eventually converge to the same schedule:
\begin{itemize}
\item \textbf{Base case:} Consider the initial schedule $S_0$, where the asynchronous mode runs every $K_0$ rounds, and $S_{0_{start\_round}}=0$. Let $v_i$ be the first honest validator to switch to schedule $S_1$. Given Algorithm~\ref{alg:main} (process \textsc{TryCommit}), $v_i$ must have committed some anchor $\ell_i$ for round $r_i$ where $r_i \equiv 0 \pmod{K_0}$ to trigger the schedule update. Consider another honest validator $v_j$ who has committed every leader proposal up to round $r_j$ where $r_j < r_i$. If $r_j \geq r_i$, then by Algorithm~\ref{alg:main} (process \textsc{TryCommit}), $v_j$ would have already switched to $S_1$ by round $r_i$ (given Observation~\ref{obv:consistent_commit_view}). Thus $v_j$ will eventually commit some anchor $\ell_m$ where $r_m > r_i$. By quorum intersection, since validator $v_i$ committed anchor $\ell_i$ in round $r_i$, there must exist a path from $\ell_m$ to $\ell_i$. Therefore, by Lemma~\ref{lem:same_block_delivered} and the protocol's linearizer component, validator $v_j$ will order anchor $\ell_i$. Consequently, $v_j$ will trigger a schedule switch and, by Observation~\ref{obv:consistent_commit_view}, $v_j$ will go from schedule $S_0$ to schedule $S_1$.
\item \textbf{Inductive step:} Assume the statement holds for all schedules $S_0, S_1, \ldots, S_k$. We prove it also holds for $S_{k+1}$. Let $v_i$ be the first honest validator switching from $S_k$ to $S_{k+1}$. By Algorithm~\ref{alg:main} (process \textsc{TryCommit}), $v_i$ must have committed some anchor $\ell_i$ for round $r_i \equiv 0 \pmod{K_k + S_{{k-1}_{\text{start\_round}}}}$. For any other honest validator $v_j$ currently in some schedule $S_r$ where $r < k + 1$, by the induction hypothesis, $v_j$ will eventually switch to $S_k$. Consider $v_j$ having committed up to round $r_j < r_i$. Eventually, $v_j$ will switch to $S_k$ and subsequently commit some anchor $\ell_m$ for round $r_m > r_i$. By quorum intersection, since $v_i$ committed $\ell_i$ in round $r_i$, there exists a path from $\ell_m$ to $\ell_i$. Therefore, by Lemma~\ref{lem:same_block_delivered} and the protocol's linearizer component, validator $v_j$ will order anchor $\ell_i$. Consequently, $v_j$ will trigger a schedule switch and, by Observation~\ref{obv:consistent_commit_view}, $v_j$ will go from schedule $S_k$ to schedule $S_{k+1}$.
\end{itemize}
\end{proof}
\newpage
\begin{lemma}
\label{lem:view_syncrhonization}
Let $S_{\max}$ represent the highest schedule achieved by any honest party. Eventually, every honest party can reach schedule $S_{\max}$.
\end{lemma}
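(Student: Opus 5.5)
Since schedules are produced in a fixed succession $S_0, S_1, \ldots$ and each $S_{k+1}$ is computed deterministically from the subDAG of the asynchronous anchor that triggered it, I will reduce the statement to repeated use of Lemma~\ref{lem:schedule_agreement}. Let $S_{\max} = S_m$ be the highest schedule reached by any honest validator, and let $v_i$ be a validator that reached it. Because of the sequence cutoff of Section~\ref{sequenceextended}, a validator moves from $S_k$ to $S_{k+1}$ only after committing the asynchronous anchor of $S_k$. So $v_i$ must have passed through every intermediate schedule $S_1, \ldots, S_m$, each triggered by a committed asynchronous anchor $\ell_1, \ldots, \ell_m$ in increasing rounds.

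The plan is an induction on $k \le m$, showing that every honest validator eventually reaches $S_k$. The base case $S_0$ holds trivially, since it is the initial schedule. For the step, assume every honest $v_j$ eventually adopts $S_k$. Since $v_i$ switched to $S_{k+1}$, Lemma~\ref{lem:schedule_agreement} gives that $v_j$ eventually switches to $S_{k+1}$ too. I will also check that $v_j$ computes the same $S_{k+1}$, not merely some schedule $S_{k+1}$, in three steps:
\begin{itemize}
\item By Lemma~\ref{lem:both_validator_commit_skip} and Lemma~\ref{lem:sameblock_committed}, $v_j$ commits the same block $\ell_{k+1}$.
\item By Observation~\ref{obv:consistent_commit_view}, $v_j$ holds the identical causal history of $\ell_{k+1}$.
\item The subDAG used by \texttt{collect\_subdag()} is bounded by the previous asynchronous anchor $\ell_k$, which is common to both validators by the induction hypothesis. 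Hence the direct-commit count, the expected count, and the deterministic update function all give the same interval.
\end{itemize}

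Next I need to make sure no honest validator is permanently stuck before $\ell_{k+1}$, which would leave the switch only vacuously eventual. Here I will use liveness of the asynchronous waves, inherited from \textsf{Mahi-Mahi}\xspace, together with Lemma~\ref{lem:localDag_included} and Observation~\ref{lem:dag-propagation}. Every block in $\ell_{k+1}$'s causal history eventually reaches $v_j$'s local DAG, and the DAG keeps growing. So $v_j$ eventually sees an anchor in a round above that of $\ell_{k+1}$ with a path to a certificate of $\ell_{k+1}$ (Lemma~\ref{lem:certificate_propagation}), and all slots up to $\ell_{k+1}$ become decided. Taking $k = m$ then concludes the lemma.

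I expect the main obstacle to be the interaction between the indirect rule and schedule changes, namely that the anchor used to decide slots near $\ell_{k+1}$ depends on which rounds are asynchronous, and that depends on the schedule. I will handle this by invoking the cutoff: slots beyond $\ell_{k+1}$ are discarded and re-evaluated under $S_{k+1}$. As a result, each validator decides all slots up to $\ell_{k+1}$ under $S_k$ alone, on which all validators agree by the induction hypothesis. The agreement arguments of Lemma~\ref{lem:both_validator_commit_skip} therefore apply within a single fixed schedule.
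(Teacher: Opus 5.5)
Your main line is correct, and its core idea matches the paper's: the lagging validator ends up committing the same asynchronous anchors with identical causal histories, and the switching rule is deterministic, so it replays the same transitions. The organisation differs. The paper argues in one step, via Observation~\ref{obv:consistent_commit_view}: every anchor $P_i$ committed on the way to $S_{\max}$ eventually lands in every honest DAG. It never cites Lemma~\ref{lem:schedule_agreement}. You instead induct along $S_0,\ldots,S_m$ and invoke Lemma~\ref{lem:schedule_agreement} at each step. A single application of that lemma with $S=S_{\max}$ already gives the statement, so your induction really serves to show that $v_j$ computes the \emph{same} $S_{k+1}$. That check is more careful than the paper's. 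You verify the common subDAG boundary and the identical direct and expected counts. More importantly, you observe that the cutoff makes every slot up to $\ell_{k+1}$ be decided by a computation using $S_k$ for all rounds. This is what lets the static-schedule agreement results (Lemma~\ref{lem:both_validator_commit_skip}, Lemma~\ref{lem:sameblock_committed}) apply segment by segment, which the paper leaves implicit.

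The weak step is your liveness paragraph: it is unnecessary, and as written it does not establish what it claims. There are three problems:
\begin{enumerate}
\item Lemma~\ref{lem:certificate_propagation} needs $2f+1$ certificates for $\ell_{k+1}$ in one round. These need not exist if $v_i$ committed $\ell_{k+1}$ only through the indirect rule.
\item Receiving $\ell_{k+1}$'s causal history does not give $v_j$ the later-round votes, certificates and anchor that decide its slot.
\item A certificate path from some later block does not by itself decide every lower slot, since the anchor must itself be committed and all intervening slots skipped.
\end{enumerate}
You also do not need the probabilistic liveness of \textsf{Mahi-Mahi} here. Replace the paragraph with the catch-up argument the paper uses implicitly. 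By Lemma~\ref{lem:localDag_included} and Observation~\ref{lem:dag-propagation}, $v_j$'s DAG eventually contains the blocks $v_i$ used when it committed $\ell_{k+1}$. The $S_k$-uniform decision procedure then reproduces on this larger DAG the decisions $v_i$ reached: the patterns $v_i$ used remain present, and indirect decisions carry over by top-down induction on slots. Hence $v_j$ decides every slot up to $\ell_{k+1}$ and commits it, with no assumption about the network regime.
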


\begin{proof}
Let $P_i$ be the honest party that advanced to schedule $S_{\max}$. By Observation~\ref{obv:consistent_commit_view}, all anchor blocks that $P_i$ committed to reach $S_{\max}$ will eventually be added to every honest party's DAG. Since the schedule switching rule is deterministic and all honest parties process the same committed anchors, they will all compute the same schedule transitions and eventually reach $S_{\max}$.
\end{proof}
\begin{theorem}[Total Order]
\tarpon satisfies the total order property of Byzantine Atomic Broadcast.
\end{theorem}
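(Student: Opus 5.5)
The plan is to show that every honest validator outputs the same sequence of delivered blocks, so that relative order agrees. The ordering has two levels: the sequence of decided leader slots, and the deterministic linearisation of each committed leader's causal history. I would first fix the leader-slot sequence. By Observation~\ref{obv:agree_sequence_leader}, honest validators agree on which slot is which, \emph{provided} they use the same schedule when interpreting each round. By Lemma~\ref{lem:schedule_agreement} and Lemma~\ref{lem:view_syncrhonization}, every honest validator passes through the same sequence of schedules $S_0,S_1,\dots$. The sequence cut-off of Section~\ref{sequenceextended} makes each switch happen at the same point of the commit sequence, namely right after the same committed asynchronous anchor. So I would argue by induction on the schedule index $k$ that the prefix of slots interpreted under $S_k$ is identical at all honest validators. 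Within that prefix, Lemma~\ref{lem:both_validator_commit_skip} and Lemma~\ref{lem:sameblock_committed} give that any slot decided at two honest validators is decided the same way, with the same leader block.

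Second, I would lift this to blocks. Each validator processes slots in increasing order and stops at the first \texttt{undecided} slot, so its output is a prefix of the common slot sequence. Consider two honest validators $v_i,v_j$ and blocks $b,b'$ that $v_i$ delivers with $b$ before $b'$. By Lemma~\ref{lem:same_block_delivered}, each block is delivered at the same slot with the same leader $l$ at both validators. If $b$ and $b'$ are delivered at different slots $s<s'$, then the agreed slot order gives $b$ before $b'$ at $v_j$ too. If they are delivered at the same slot, Observation~\ref{obv:consistent_commit_view} says $l$ has the same causal history at both validators. Condition (3) of the delivery definition says the already-delivered set is the same, since it is determined by the common earlier slots. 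The depth-first linearisation is deterministic, so it orders $b,b'$ identically at both.

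The main obstacle is the first step. The decision rules for a slot depend on the wave length $w$, and $w$ depends on the schedule. A validator that decided later slots before earlier ones could therefore interpret a round under a stale schedule. The proposal handles this by using the truncation rule: everything after the first committed asynchronous leader is discarded and recomputed under the new schedule. With that rule, any leader actually output was decided under the schedule determined by the committed prefix before it. The rule-based lemmas (Lemma~\ref{lem:general_commit_consistency}, Lemma~\ref{lem:both_validator_commit_skip}) then apply within a single fixed schedule. The delicate point is an indirect decision whose anchor lies across a schedule boundary. The anchor is only required to satisfy $r'\ge r+w$, and Lemma~\ref{lem:certificate_propagation} holds for all later rounds regardless of schedule, so I expect the argument of Lemma~\ref{lem:both_validator_commit_skip} to carry over unchanged once the schedules are known to coincide.
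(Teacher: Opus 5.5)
Your proposal is correct and takes the same route as the paper. The paper likewise combines the deterministic linearisation of causal histories with Lemma~\ref{lem:same_block_delivered}, and appeals to Lemmas~\ref{lem:schedule_agreement} and~\ref{lem:view_syncrhonization} for the dynamic schedule; your explicit induction on the schedule index, using the cut-off rule so that each switch occurs right after the same committed asynchronous anchor and every output slot is decided under a common schedule, spells out what the paper's one-paragraph proof leaves implicit (eventual convergence of schedules alone would not guarantee that each slot is interpreted identically).
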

\begin{proof}
By construction, due to the deterministic algorithm to order the causal history of a block $b$, and via Lemma~\ref{lem:same_block_delivered}, \tarpon attains the BAB total order property. Furthermore, Lemma~\ref{lem:schedule_agreement} and~\ref{lem:view_syncrhonization} demonstrate that \tarpon's dynamic scheduling mechanism preserves the total order property by ensuring all honest validators eventually converge to the same schedule.
\end{proof}
\begin{theorem}[Integrity]
\tarpon satisfies the integrity property of Byzantine Atomic Broadcast.
\end{theorem}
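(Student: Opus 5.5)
The plan is to argue directly from the delivery conditions stated in the paragraph on causal history, exactly as in the \textsf{Mahi-Mahi}\xspace integrity argument, and then check that the dynamic schedule cannot cause a second delivery. Fix an honest validator $v_j$, an honest author $v_i$ and a round $r$. Integrity requires that $v_j$ outputs \textit{a\_deliver}$_j(b,r,v_i)$ at most once, for any $b$. I will treat the two possible sources of a repeat separately: delivering the \emph{same} block twice, and delivering two \emph{different} blocks carrying the same pair $(r,v_i)$.

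For the first source, I will use condition (3) of the delivery conditions. A block is delivered only if it has not already been delivered as part of a lower slot's causal history. The linearizer also ignores blocks already included while it performs the depth-first traversal of a single subDAG (Section~\ref{subsection:sequence}). So within one run of the commit sequence, each block is output at most once. Across successive runs, the committed prefix only grows, and delivered blocks are remembered. Lemma~\ref{lem:same_block_delivered} fixes the unique slot and leader with which $b$ is delivered. Hence $b$ cannot reappear later with another leader.

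For the second source, I use the honesty of $v_i$. An honest validator signs exactly one block per round. Since signatures are unforgeable against a computationally bounded adversary, every valid block carrying author $v_i$ and round $r$ is that single block. Hence ``independent of $b$'' reduces to the first source.

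The step I expect to be the main obstacle is the interaction with the sequence cutoff of Section~\ref{sequenceextended}. When a validator truncates the leader sequence after the first asynchronous \texttt{to-commit} leader, recomputes the schedule, and reruns \textsc{TryCommit}, slots beyond the cutoff may be re-decided. The danger is that a block from the discarded suffix might be delivered twice. I would handle this by observing that delivery happens only for the prefix actually committed. Discarded leaders were never committed, so their causal histories were never output. The rerun resumes from the last committed slot, and the ``already delivered'' set persists. Therefore no block is output both before and after a schedule update.

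Finally, Lemma~\ref{lem:schedule_agreement} and Lemma~\ref{lem:view_syncrhonization} ensure that the re-decided suffix agrees across validators. This is not needed for integrity itself, but it confirms that the cutoff does not introduce inconsistent redelivery.
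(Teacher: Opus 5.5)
Your proposal is correct and takes essentially the same route as the paper. The paper argues ``by construction'' that the deterministic linearizer discards already-delivered blocks (delivery condition (3)), so no block is output twice. Your separate treatment of two distinct blocks with the same $(r,v_i)$, via the honest author's single signed block per round, and of the sequence cutoff, via the persistent delivered set, makes explicit what the paper leaves implicit. The appeal to Lemma~\ref{lem:same_block_delivered} is superfluous, since for a single validator condition (3) already fixes the unique slot at which $b$ is delivered.
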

\begin{proof}
This is by construction. The deterministic linearizer algorithm present in the protocol removes blocks with duplicate sequence numbers before creating the final sequence of blocks. Since a block $b$ is delivered via the causal history of some leader block $l$ only if $b$ has not been delivered already, thus honest validators cannot propose block $b$ multiple times \cite{MAHIMAHI}. 
\end{proof}
\section{Validity and Agreement Proofs}
\quad Due to \tarpon encapsulating both partially-synchronous and asynchronous assumptions, we can prove, separately, that the protocol satisfies both the \textit{validity} and \textit{agreement} BAB properties, under both network assumptions. Furthermore, we show that, regardless of network conditions, \tarpon's schedule achieves liveness.
\newpage
\begin{lemma}
\label{lem:asynchrony_liveness}
Under asynchronous network conditions (before $GST$), the protocol will eventually make progress.
\end{lemma}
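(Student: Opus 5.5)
The plan is to reduce progress before $GST$ to the asynchronous waves that the schedule inserts, and then reuse the probabilistic liveness argument of \textsf{Mahi-Mahi}\xspace for a single such wave. \emph{Progress} here means that some honest validator eventually marks a new leader slot as \texttt{to-commit} and delivers its causal history. The argument has four steps: the DAG keeps growing, asynchronous waves occur infinitely often, each of them commits with probability bounded away from zero, and a commit unblocks the sequence.

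First, I will show that the DAG keeps growing without any timing assumption. Each honest validator only needs $2f+1$ blocks of round $r-1$ to build its round-$r$ block. There are $2f+1$ honest validators, and links are reliable. By Observation~\ref{lem:dag-propagation} and Lemma~\ref{lem:localDag_included}, every honest validator therefore eventually gathers such a quorum and advances to round $r+1$. The partially-synchronous timeouts only delay proposals and never block them, so round numbers increase without bound at every honest validator.

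Second, I will show that asynchronous waves recur infinitely often under every reachable schedule. The schedule $S_k$ fixes a finite interval $K_k$, bounded by the adaptive bounds in \texttt{DualModeConfig}, so rounds $r\equiv 0$ relative to $S_k$'s start round occur infinitely often. A subtlety is that the schedule only changes after an asynchronous leader commits. So as long as no progress happens, every honest validator stays in the same schedule $S_k$. By Lemma~\ref{lem:schedule_agreement} and Lemma~\ref{lem:view_syncrhonization}, all honest validators then agree on which future rounds are asynchronous. Hence infinitely many future waves are interpreted with the asynchronous rules ($w=4$ or $w=5$) by everyone.

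Third, I will invoke the \textsf{Mahi-Mahi}\xspace liveness analysis for each such wave. The leader of an asynchronous propose round $r$ is revealed by the global perfect coin only at round $r+w-1$, and the coin needs $2f+1$ shares. The adversary therefore must fix the round-$r$ to round-$(r+w-2)$ DAG structure before knowing the leader. The \textsf{Mahi-Mahi}\xspace argument, via its boost rounds, then shows that with probability at least some constant $p>0$ the elected slot gathers $2f+1$ certificates. That slot is then directly committed. Its commit is consistent across validators by Lemma~\ref{lem:general_commit_consistency}, and every later anchor has a path to its certificate by Lemma~\ref{lem:certificate_propagation}.

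This third step is the main obstacle. I must argue that the interleaved partially-synchronous rounds around an asynchronous wave, including overlapping decision rounds as in Figure~\ref{fig:dualmode_wave_structure}, do not weaken the \textsf{Mahi-Mahi}\xspace bound. I would try to show that the bound depends only on the DAG reference structure between $r$ and $r+w-1$. That structure is produced by the same block-creation rule whatever mode the neighbouring slots use. The coin flips are independent across asynchronous waves, so the probability that none of the first $m$ of them commits is at most $(1-p)^m\to 0$.

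Fourth, I will close the argument by showing that the committed asynchronous leader actually yields progress. Any \texttt{undecided} earlier slots are resolved by the indirect rule, using this committed leader or a later committed anchor: they are marked \texttt{to-commit} if a certificate link exists and \texttt{to-skip} otherwise. The sequence of Section~\ref{subsection:sequence} therefore extends up to and including the asynchronous leader. The cutoff of Section~\ref{sequenceextended} truncates at this leader, so it delivers new blocks and then triggers a schedule update, and the argument repeats under the new schedule. Hence progress occurs with probability $1$.
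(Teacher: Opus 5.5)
The paper's proof is only a citation of \textsf{Mahi-Mahi}'s liveness theorems (Theorems 3--4 for $w=4$, Theorems 5--6 for $w=5$). Your Steps 1--3 essentially unpack that citation. The genuine gap is in Step 4, and the paper's citation does not cover it either.

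You write that earlier undecided slots are resolved ``using this committed leader or a later committed anchor''. The indirect rule never consults an arbitrary committed anchor. The anchor of an undecided slot is the \emph{lowest} slot above its decision round that is not \texttt{to-skip}, and if that slot is \texttt{undecided}, so is the slot being decided. With pipelining, consider the partially-synchronous slots at rounds $a-2$ and $a-1$, just before an asynchronous propose round $a$. Their decision rounds are $a$ and $a+1$, so the asynchronous leader at $a$ can never be their anchor. They can only be resolved through slots after $a$, which are partially-synchronous slots whose leaders are fixed in advance by round-robin. These slots precede the asynchronous leader in the sequence, so its commit outputs nothing until they are decided, and before $GST$ nothing forces that.

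Here is a concrete adversary, say with one leader per round. For every partially-synchronous slot, it delivers the leader's block in time to exactly $f+1$ honest validators and holds it past the timeout for the other $f$. It also has the Byzantine validators' vote-round blocks omit that block. Each such vote round then has $f+1$ votes and at most $2f$ non-votes. So there is never a skip pattern, and no certificate ever forms. These slots are therefore never committed, directly or indirectly, and they become \texttt{to-skip} only once their anchor is committed.

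By induction from the top of the DAG, the slot at $a'-1$ stays \texttt{undecided} for every asynchronous round $a'$:
\begin{itemize}
\item its anchor must lie at a round $\ge a'+2$;
\item by induction, the slot at $a''-1$, just below the next asynchronous round $a''$, is undecided;
\item so the lowest non-skipped slot at rounds $\ge a'+2$ is one of the partially-synchronous slots in $[a'+2,a''-1]$;
\item none of those is ever committed, so that anchor is undecided.
\end{itemize}
This holds whatever happens to the asynchronous slots. Even a directly committed asynchronous leader is then never reached by the sequence, and ``progress with probability $1$'' does not follow.

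The interaction you flagged in Step 3 is real, but it bites here, not in the per-wave commit probability. Anchor chains of $w=3$ slots jump over the asynchronous slot. Step 4 needs an argument that every anchor chain of the partially-synchronous slots below $a$ passes through the asynchronous slot at $a$. Such an argument is available when asynchronous propose rounds are aligned with the $w=3$ waves, as in the non-pipelined case. It fails in the pipelined configuration that the protocol explicitly supports.
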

\begin{proof}
 Assume a period of time where $GST$ is not reached for a long time, as \tarpon's asynchronous mode allows $w\in\{4,5\}$, we have two cases where we have to prove liveness:
\begin{itemize}
    \item \textbf{Case} $w=4$ \textbf{:} By Theorem 3 and Theorem 4 of \textsf{Mahi-Mahi}\xspace \cite{MAHIMAHI}.
    \item \textbf{Case} $w=5$ \textbf{:} By Theorem 5 and Theorem 6 of \textsf{Mahi-Mahi}\xspace \cite{MAHIMAHI}.
\end{itemize}
Thus, in either case, an honest validator proposal will eventually be committed and delivered consistently across all validators, ensuring liveness despite the absence of timing guarantees.
\end{proof}
\begin{lemma}
\label{lem:partial_synchrony_liveness}
After $GST$ has been reached, the protocol will eventually make progress within a bounded number of rounds.
\end{lemma}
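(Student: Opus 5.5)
The plan is to reduce the statement to the partially-synchronous liveness argument of \textsf{Mysticeti}\xspace, and to check separately that the periodic asynchronous waves and the dynamic schedule do not break that reduction. The proof mirrors Lemma~\ref{lem:asynchrony_liveness}: a case split on the mode of the slot under consideration.

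Fix a time after $GST$ and consider the first round $r$ entered by all honest validators after $\max(GST,t)$. By Lemma~\ref{lem:schedule_agreement} and Lemma~\ref{lem:view_syncrhonization}, after finitely many rounds all honest validators operate under the same schedule $S_{\max}$. Hence they agree on which rounds are partially-synchronous and which are asynchronous. By Observation~\ref{obv:agree_sequence_leader}, they also agree on the leader of each slot.

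Under a fixed schedule $S$ with interval $K\geq 2$, any window of $K$ consecutive rounds contains a partially-synchronous propose round. The first step is to select such a round $r^\ast$ whose round-robin primary leader is honest. Round-robin over $n=3f+1$ validators guarantees this within at most $(f+1)$ partially-synchronous propose rounds, so within $O(fK)$ rounds.

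For that slot, the \textsf{Mysticeti}\xspace timeout argument applies verbatim. The honest leader's block reaches all honest validators within $\Delta$. Each honest validator either references it in round $r^\ast+1$ or waits out the timeout. A second timeout in $r^\ast+2$ guarantees that at least $2f+1$ round-$(r^\ast+2)$ blocks each observe $2f+1$ votes. Hence the direct rule marks the slot \texttt{to-commit} at every honest validator by round $r^\ast+2$.

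Next, all slots preceding $r^\ast$ must become decided, because the sequencer halts at the first \texttt{undecided} slot. For this I will use the indirect rule, anchored at the committed slot $r^\ast$ or a later one. Every earlier slot, including asynchronous-wave slots whose decision round $r+w-1$ lies before $r^\ast$, then has an anchor that is \texttt{to-commit}, so it is decided as \texttt{to-commit} or \texttt{to-skip} and cannot remain \texttt{undecided}.

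This step is the main obstacle, for two reasons. First, asynchronous slots overlap the partially-synchronous ones, as in Figure~\ref{fig:dualmode_wave_structure}. Second, an asynchronous slot may need an anchor beyond round $r+w$. I will handle it by choosing $r^\ast$ at least $w_{\text{async}}\le 5$ rounds after the last preceding asynchronous propose round, which still leaves an $O(fK)$ bound.

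Finally comes the sequence cutoff of Section~\ref{sequenceextended}: committing an asynchronous leader truncates the sequence and triggers a schedule update. By Lemma~\ref{lem:schedule_agreement}, all honest validators perform the same update. Each update only re-runs \texttt{try\_commit} under a new, agreed schedule, so it costs at most one re-iteration of the argument above. Since the interval bounds in \texttt{DualModeConfig} keep $K$ within fixed limits, the total delay remains a bounded number of rounds. Progress within a bounded number of rounds after $GST$ follows.
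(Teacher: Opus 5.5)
Your reduction to the \textsf{Mysticeti} timeout argument follows the paper's route. The paper's proof only assumes the asynchronous rounds are spaced far apart and cites \textsf{Mysticeti}'s validity and agreement. Your hand-made verification breaks at the step that decides the slots preceding $r^\ast$.

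\textbf{The anchor step.} The indirect rule does not let an arbitrary later committed slot act as anchor. The anchor of a slot $s$ is the \emph{first} slot after $s$'s decision round that is not marked \texttt{to-skip}; if that slot is \texttt{undecided}, so is $s$.
\begin{itemize}
\item After $GST$ a Byzantine primary can still leave its own slot \texttt{undecided}. It releases its block so that only some honest validators receive it before their timeout, and delivers the Byzantine votes selectively. Then there are neither $2f+1$ certificates nor $2f+1$ non-votes, and that slot blocks every earlier slot whose anchor it is.
\item With pipelining, the partially-synchronous slots at rounds $r^\ast-1$ and $r^\ast-2$ have decision rounds $r^\ast+1$ and $r^\ast$. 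So $r^\ast$ is not even an eligible anchor for them (the anchor round must exceed the decision round).
\item Their anchors lie after $r^\ast$, where you have established nothing. Those slots may be Byzantine-led or asynchronous.
\end{itemize}
Since the sequencer halts at the first \texttt{undecided} slot, one directly committed honest slot does not yield progress. You need a backward induction from a sufficiently wide decided region. For example, show that \emph{every} honest primary slot in the window is directly committed. Then show that the anchor chain of each \texttt{undecided} slot, which climbs at least $w$ rounds per hop, reaches such a committed slot after boundedly many hops. Alternatively, do as the paper does and delegate exactly this bookkeeping to \textsf{Mysticeti}'s validity and agreement over a stretch of partially-synchronous rounds.

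\textbf{Smaller issues.} Lemma~\ref{lem:schedule_agreement} is delicate to use here. Its proof takes for granted that a lagging validator ``will eventually commit some anchor $\ell_m$ with $r_m>r_i$'', which is a liveness claim of the kind you are proving. Also, $S_{\max}$ is a moving target, so ``all honest validators operate under $S_{\max}$'' is not a stable state from which to run a multi-round argument. What you actually need is weaker: the mode of each round is a deterministic function of the committed leader prefix, and honest validators agree on that prefix by Lemma~\ref{lem:both_validator_commit_skip}. Finally, your ``one re-iteration per schedule update'' does not obviously terminate. A re-run can meet another update, and after an update $r^\ast$ may itself become an asynchronous round. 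The clean fix is to note that committing an asynchronous leader is already progress. Either an update occurs, or the window is update-free and the partially-synchronous argument applies unchanged.
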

\begin{proof}
Assume the asynchronous rounds are scheduled sufficiently far apart to allow network synchronization. By the \textit{validity} and \textit{agreement} properties of \textsf{Mysticeti}\xspace \cite{MYSTICETI}, honest validator proposals will be committed within a bounded number of rounds due to the timing guarantees after $GST$.
\end{proof}
\begin{lemma}
\label{lem:dual_mode_liveness}
The protocol will eventually make progress.
\end{lemma}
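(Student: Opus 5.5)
The plan is a case split on the network condition, together with an argument that the dynamic schedule cannot starve the asynchronous mode.

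\textbf{Step 1 (schedule non-starvation).} I would first show that, under every reachable schedule, asynchronous waves recur infinitely often. The deterministic update function in the \texttt{DualModeScheduler} only moves the interval $K$ within the adaptive bounds fixed by \texttt{DualModeConfig}. So I want a constant $K_{\max}$ such that, for every schedule $S_k$, the next asynchronous wave starts at most $K_{\max}$ rounds after the start round of $S_k$.

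Schedules change only when an asynchronous leader is committed. By Lemma~\ref{lem:schedule_agreement} and Lemma~\ref{lem:view_syncrhonization}, all honest validators eventually agree on that change. Because of the sequence cutoff of Section~\ref{sequenceextended}, they also interpret every later slot under the same $S_{k+1}$, so the set of asynchronous rounds is well defined and agreed upon. In particular, there are infinitely many rounds $r$ that every honest validator treats as asynchronous propose rounds.

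Rounds themselves keep advancing in both regimes. Each honest validator needs only $2f+1$ round-$(r-1)$ blocks to move on. The $2f+1$ honest validators always produce such blocks, and these reach everyone by Observation~\ref{lem:dag-propagation}. Timeouts only delay this step, and only for a bounded time.

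\textbf{Step 2 (before $GST$).} Consider an asynchronous wave scheduled at round $r$. The coin is revealed only in round $r+w-1$, so the adversary cannot adapt to it, and the DAG restricted to rounds $r,\dots,r+w-1$ is exactly a \textsf{Mahi-Mahi} wave. By the argument of Lemma~\ref{lem:asynchrony_liveness}, there is a constant $p>0$, independent of the history, such that an honest leader of that wave is committed with probability at least $p$. This holds either directly, or indirectly through a later anchor, which exists by Lemma~\ref{lem:certificate_propagation}.

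Step~1 gives infinitely many such waves, with coin values independent across waves. So the probability that none of them commits is at most $\lim_{m\to\infty}(1-p)^m=0$. Hence some asynchronous leader is committed with probability $1$, and with it its whole causal history, which includes every honest block referenced by $2f+1$ later blocks.

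\textbf{Step 3 (after $GST$).} Lemma~\ref{lem:partial_synchrony_liveness} commits honest primary blocks of the partially-synchronous rounds within a bounded number of rounds. I would add that the interleaved asynchronous waves also terminate. Each such slot is eventually decided, because a later committed anchor exists and the indirect rule then decides it. Therefore no undecided slot blocks the sequence forever, and the sequencing of Section~\ref{subsection:sequence} keeps advancing. Progress in either regime then yields the lemma.

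\textbf{Main obstacle.} The hard part is Step~1 combined with the cutoff. I must show that a slot left \texttt{undecided} cannot stall the sequencer permanently when its resolution depends on a schedule that is itself awaiting a commit, which looks circular. I would break the circularity by induction on schedules. Given agreement on $S_k$, the next asynchronous wave under $S_k$ is fixed. Any committed anchor after it decides it, through Lemma~\ref{lem:both_validator_commit_skip} and Step~2. That commit then fixes $S_{k+1}$. So every slot is eventually decided with probability $1$.
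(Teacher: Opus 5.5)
The paper's proof is only the case split ``Lemma~\ref{lem:asynchrony_liveness} before $GST$, Lemma~\ref{lem:partial_synchrony_liveness} after $GST$''. Your Steps~2--3 reproduce that split. Your ``main obstacle'' paragraph correctly identifies what the split leaves out: marking an asynchronous leader \texttt{to-commit} is not progress. The sequencer halts at the first \texttt{undecided} slot, and the schedule update fires only once the asynchronous leader has actually been sequenced.

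Your resolution, however, misstates the indirect rule. This applies to ``any committed anchor after it decides it'' and to the parallel sentence in Step~3. The anchor of a slot is the \emph{lowest} slot above its certify round that is not \texttt{to-skip}. If that slot is \texttt{undecided}, the slot stays \texttt{undecided}, however many later leaders are committed. Lemma~\ref{lem:both_validator_commit_skip} only says that two validators that both decide a slot agree. Lemma~\ref{lem:certificate_propagation} only constrains what later blocks, hence anchors, can see. Neither shows that a slot ever gets decided. For the same reason, before $GST$ an asynchronous leader helps only if it is committed directly, since its own anchor is a partially-synchronous slot.

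This is a real hole, not a wording issue. Before $GST$ the adversary can keep every partially-synchronous slot directly \texttt{undecided}. Let the leader's block reach only $f+1$ honest validators before their timeout, and let the Byzantine validators cast no votes. Then there are $f+1$ votes and at most $2f$ non-votes, so neither pattern ever forms. With pipelining, the anchor of the slot at round $x$ is the slot at $x+3$, so the anchor chain $x, x+3, x+6, \dots$ runs through undecided slots unless it meets an asynchronous round.

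Now take $K \equiv 0 \pmod{3}$, for example the $K=30$ used in the simulations. Every asynchronous round \texttt{last\_round}${}+jK$ then lies in a single residue class. The slots in the other two classes stay \texttt{undecided} forever, so the sequencer never reaches any committed asynchronous leader. Since the interval is updated only after such a leader is sequenced, $K$ never changes either. So your Step~1 non-starvation plus the probability-$1$ commit of Step~2 do not give progress.

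What is missing is an explicit anchor-chain argument: every slot below a directly committed asynchronous slot must have an anchor chain ending at a committed slot. Without pipelining, and with one leader per round, this holds. Asynchronous rounds are then multiples of $3$, and the chain $x, x+3, \dots, A-3, A$ reaches the asynchronous slot $A$. With pipelining it fails in general, so the statement needs extra assumptions there or a modified indirect rule.
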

\begin{proof}
This follows directly from Lemma~\ref{lem:asynchrony_liveness} and Lemma~\ref{lem:partial_synchrony_liveness}.
\end{proof}
\begin{lemma}
\label{lem:mode_agreement}
All honest validators agree on whether to use the partially-synchronous (\textsf{Mysticeti}\xspace) or asynchronous (\textsf{Mahi-Mahi}\xspace) mode for each round.
\end{lemma}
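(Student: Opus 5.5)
The mode of a round $r$ is a deterministic function of the current schedule: given schedule $S_k$ (interval $K_k$ and start round $S_{k_{start\_round}}$), round $r$ is asynchronous iff $r - S_{k_{start\_round}} \equiv 0 \pmod{K_k}$, and partially-synchronous otherwise. So the plan is to show that every honest validator applies the same schedule to each round. The mode of $r$ then follows as a pure function of $r$ and that schedule.

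First, I would fix a round $r$ and identify the schedule $S_k$ that governs it. This is the schedule obtained from the latest asynchronous anchor committed strictly before $r$ in the global leader sequence. The sequence cutoff of Section~\ref{sequenceextended} is what makes this well defined. A validator truncates its sequence right after the first asynchronous \texttt{to-commit} leader, updates the schedule, and only then re-runs the decision rule on later slots. Hence no slot after an asynchronous anchor is ever decided, or has its mode fixed, under a stale schedule. The schedule a validator uses for $r$ is therefore determined entirely by the prefix of committed asynchronous anchors preceding $r$.

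Next, I would argue by induction on $k$ that honest validators agree on $S_k$ and on the anchor that triggers the transition $S_k \to S_{k+1}$.

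\textbf{Base case.} $S_0$ is hardcoded, so all honest validators agree on $S_0$.

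\textbf{Inductive step.} Assume agreement on $S_k$. Then all honest validators agree on which rounds are asynchronous under $S_k$, and hence on the decision and anchor rounds ($r+w-1$, $r'\geq r+w$) used for every slot. Lemma~\ref{lem:both_validator_commit_skip} and Lemma~\ref{lem:sameblock_committed} then give identical decisions on every slot that both validators decide. In particular they agree on the first asynchronous slot committed under $S_k$ and on the block in it. By Observation~\ref{obv:consistent_commit_view}, that block has the same causal history for all of them. The update function (\texttt{collect\_subdag}, the direct-commit count, the interval update) is deterministic and reads only this causal history and the agreed leader sequence. So every honest validator computes the same $S_{k+1}$, with the same start round. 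Lemma~\ref{lem:schedule_agreement} and Lemma~\ref{lem:view_syncrhonization} supply the eventuality part: every honest validator actually reaches each $S_k$.

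The main obstacle is a circularity in the inductive step. Agreement on decisions uses the lemmas proven for a fixed wave structure, but the wave structure itself depends on the schedule. In the proof of Lemma~\ref{lem:both_validator_commit_skip}, the anchor of $s$ may lie beyond a schedule change. I would handle this by restricting attention to slots up to and including the triggering asynchronous anchor, all of which lie entirely under $S_k$. I would then check that the direct and indirect rules for those slots consult only rounds whose mode is fixed by $S_k$: an anchor round $r' \geq r+w$ is located using $S_k$, and the cutoff prevents any slot after the switch from being interpreted before $S_{k+1}$ is computed. If this check fails for anchors straddling the boundary, the fallback is to note that the anchor's mode only affects its own decision, while the indirect rule uses only the certificate pattern for $s$. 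That pattern is defined relative to $s$'s wave length, which is fixed by $S_k$.
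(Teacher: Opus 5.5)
Your argument takes a different route from the paper's. The paper's proof is two sentences: by Lemma~\ref{lem:schedule_agreement} all honest validators eventually converge to the same schedule, and the mode of a round is a deterministic function of the schedule.

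You instead prove a safety-form statement by induction on $k$:
\begin{itemize}
\item agreement on $S_k$ fixes the wave structure of every slot up to the first committed asynchronous anchor;
\item Lemmas~\ref{lem:both_validator_commit_skip} and~\ref{lem:sameblock_committed} then force agreement on that anchor and its causal history, hence on $S_{k+1}$;
\item the sequence cutoff ties each round to the schedule under which it is finally decided.
\end{itemize}
This gives two things the paper's version lacks. First, it shows that validators use the same mode at the moment they decide a given round, not merely that they eventually hold the same schedule. Second, it breaks a circularity: the paper's Lemma~\ref{lem:schedule_agreement} rests on commit-consistency results that tacitly assume the very mode agreement being proved. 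You can therefore drop the appeal to Lemma~\ref{lem:schedule_agreement} for eventuality. The statement only asks for agreement, and that appeal reintroduces the dependence you removed.

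The step you must repair is the boundary check. Let $a$ denote the triggering asynchronous anchor.

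Your claim that the cutoff prevents any slot after the switch from being interpreted before $S_{k+1}$ is computed is false. \textsc{TryDecide} evaluates every slot up to $r_{\text{highest}}$ under the schedule currently held. Moreover, the indirect rule for slots at or just before $a$ must look past $a$. With pipelining, a partially-synchronous slot one round before $a$ has its certify round one round after $a$, so all its candidate anchors come after $a$. The same holds for $a$ itself if it is decided indirectly.

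Your fallback also fails. The mode of a post-boundary slot determines whether that slot is committed, skipped or undecided. That in turn determines which slot serves as the anchor of $s$, and whether $s$ gets decided at all.

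What closes the step is the following. Any honest validator decides each slot up to and including $a$ in a pass during which it holds $S_k$:
\begin{itemize}
\item it holds $S_k$ from the previous switch, by the induction hypothesis;
\item it cannot hold $S_{k+1}$ before committing $a$;
\item it never re-decides these slots afterwards.
\end{itemize}
In each such pass it interprets all rounds under $S_k$, post-$a$ anchors included. Restricted to slots up to $a$, the execution therefore coincides with a fixed-schedule execution under $S_k$. There every slot has a fixed wave length, and the arguments of Lemmas~\ref{lem:direct_commit_consistency}--\ref{lem:both_validator_commit_skip} go through unchanged. The stale interpretations of later slots are exactly what the cutoff discards before those slots are re-decided under $S_{k+1}$.
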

\begin{proof}
From Lemma~\ref{lem:schedule_agreement}, we know that all honest validators eventually converge to the same schedule. Since the choice of consensus mode is determined by the current schedule using a deterministic function, all honest validators operating under the same schedule will make identical consensus mode decisions for each round.
\end{proof}
\quad We now prove that the dynamic schedule achieves \textit{liveness}, irrespective of whether $GST$ has been reached or not. As before, portions of the following proofs have been adapted from the work of Tsimos et al. \cite{HAMMERHEAD}. In this case, specifically, the liveness Lemma for \textsf{Hammerhead}\xspace, have been translated to take advantage of \tarpon stronger \textit{liveness} guarantees
\begin{lemma}
\label{lem:schedule_switch}
Let $S$ be a schedule. If all honest validators are in schedule $S$, then eventually all honest validators will switch to the next schedule.
\end{lemma}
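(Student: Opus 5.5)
Since all honest validators are in schedule $S$, they agree (by Lemma~\ref{lem:mode_agreement}) on which rounds are asynchronous under $S$: the asynchronous slots are those at rounds $r \equiv S_{start\_round} \pmod{K_S}$. By the \textsc{TryCommit} process in Algorithm~\ref{alg:main}, a validator leaves $S$ exactly when it marks such an asynchronous leader slot \texttt{to-commit} and delivers it, i.e.\ when all earlier slots are decided. The proof therefore reduces to showing that, with probability $1$, some asynchronous slot of $S$ is eventually committed by some honest validator. Lemma~\ref{lem:schedule_agreement} then gives that every honest validator makes the same switch.

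\textbf{Step 1: each asynchronous wave commits with constant probability.} Each scheduled asynchronous wave runs the \textsf{Mahi-Mahi}\xspace wave of length $w\in\{4,5\}$ with a leader drawn by the global perfect coin. The coin is revealed only after the boost and vote rounds. Hence the per-wave commit probability bound from Theorems 3--6 of \textsf{Mahi-Mahi}\xspace, used in Lemma~\ref{lem:asynchrony_liveness}, gives each such wave independently a probability at least some $p>0$ of a direct commit. This holds before $GST$ as well. After $GST$ the bound only improves, since Lemma~\ref{lem:partial_synchrony_liveness} ensures that all intervening partially-synchronous slots get decided within bounded rounds.

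\textbf{Step 2: the DAG keeps growing.} Honest validators always eventually obtain $2f+1$ round-$(r-1)$ blocks, by Observation~\ref{lem:dag-propagation} and Lemma~\ref{lem:localDag_included}. So rounds advance forever, and infinitely many asynchronous waves of $S$ occur. The probability that none of the first $m$ of them commits is at most $(1-p)^m \to 0$. Thus some asynchronous slot $s^\ast$ of $S$ is directly committed by some honest validator with probability $1$. By Lemma~\ref{lem:certificate_propagation}, every later anchor links to its certificate, so all honest validators eventually mark $s^\ast$ \texttt{to-commit}.

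\textbf{Step 3: the cutoff actually fires (main obstacle).} The switch happens only when the sequence reaches an asynchronous leader that is \emph{committed}. An earlier slot could remain \texttt{undecided} indefinitely, or an asynchronous slot could be skipped. To handle this, I would take the first asynchronous slot $s^\ast$ that is committed. All slots below $s^\ast$ are eventually decided: each has an anchor above it, and $s^\ast$ itself, being committed, serves as an anchor (or some committed slot between them does). By the indirect rule and Lemma~\ref{lem:both_validator_commit_skip}, all honest validators decide these slots identically. Skipped asynchronous slots do not trigger a switch, so the iteration in the sequence step passes them and halts exactly at $s^\ast$. There it invokes the schedule update, moving each honest validator to the next schedule. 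Making rigorous that nothing below $s^\ast$ stays \texttt{undecided} forever is the delicate point. I would argue it by induction downward from $s^\ast$ using the anchor definition.
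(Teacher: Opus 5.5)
Your Steps~1--2 already improve on the paper's argument. The paper only uses Lemma~\ref{lem:dual_mode_liveness} to reach round $S_{\text{start\_round}}+K$, asserts that ``the protocol will commit the corresponding anchor block'', and then invokes Lemma~\ref{lem:mode_agreement} for a common deterministic update. You correctly note that this particular asynchronous slot may be skipped, and your geometric bound over the successive asynchronous rounds of $S$ handles that. The genuine gap is Step~3. It is the crux of the lemma, and the downward induction you propose cannot close it.

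The indirect rule for a slot at round $r$ with wave length $w$ takes as anchor the first non-skipped slot at a round $\geq r+w$, and an \texttt{undecided} anchor leaves the slot \texttt{undecided}. Let $r^\ast$ be the round of $s^\ast$. Then:
\begin{itemize}
\item With pipelining, the partially-synchronous slots at rounds $r^\ast-1$ and $r^\ast-2$ need anchors at rounds $\geq r^\ast+2$ and $\geq r^\ast+1$.
\item With several leaders per round, the asynchronous slots of round $r^\ast$ that precede $s^\ast$ need anchors at rounds $\geq r^\ast+w$, $w\in\{4,5\}$.
\item So does $s^\ast$ itself at any validator that did not directly commit it. The last sentence of your Step~2 has the same hole: certificate propagation only helps once some later anchor is committed.
\end{itemize}
Neither $s^\ast$ nor anything below it can serve as these anchors. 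What you need is an \emph{upward} property: starting from each slot preceding $s^\ast$, the iterated chain ``first non-skipped slot at round $\geq r+w$'' must eventually reach a committed slot.

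Lemmas~\ref{lem:asynchrony_liveness} and~\ref{lem:dual_mode_liveness} do not supply that property, and before $GST$ it can fail. By timing the leader's block against the honest timeouts, an adversary can make exactly $f+1$ of the round-$(r+1)$ blocks reference a partially-synchronous leader. Then neither a certificate nor a skip pattern ever forms, and the slot stays undecided under the direct rule. Now take pipelining and $K\equiv 0 \pmod 3$. All asynchronous rounds of $S$ are $\equiv S_{\text{start\_round}} \pmod 3$, so the anchor chain of a slot in another residue class moves in steps of $3$ through partially-synchronous slots only. By induction from the top of any finite local DAG, every slot of that class stays \texttt{undecided}. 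Rounds $r^\ast-1$ and $r^\ast-2$ lie in such classes, so $s^\ast$ is never delivered and no honest validator switches, however many asynchronous slots are directly committed.

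To complete the proof you need an extra ingredient, in one of two forms:
\begin{itemize}
\item Assume $GST$ eventually holds, and adapt \textsf{Mysticeti}'s ``every slot is eventually decided'' argument to anchors that cross mode boundaries.
\item Impose a condition on $K$ and the pipelining offsets that forces every anchor chain to visit asynchronous rounds infinitely often. Then show, using the fresh coin at each visit, that each chain reaches a committed slot with probability $1$.
\end{itemize}
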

\begin{proof}
From Lemma~\ref{lem:dual_mode_liveness}, the protocol will eventually make progress. This means that eventually we will reach the necessary round $S_{\text{start\_round}} + K$ to trigger a schedule switch. When this round is reached, the protocol will commit the corresponding anchor block. Since all honest validators are operating under the same schedule $S$ and agree on the consensus mode (Lemma~\ref{lem:mode_agreement}), they will all observe the same committed blocks and apply the same deterministic schedule update rule.
\end{proof}
\begin{lemma}
\label{lem:schedule_liveness}
Let $S_{\max}$ be the latest schedule any honest party has advanced to. Eventually some honest party will enter $S' = S_{\max} + 1$.
\end{lemma}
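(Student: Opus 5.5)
We would prove Lemma~\ref{lem:schedule_liveness} by combining the synchronization result of Lemma~\ref{lem:view_syncrhonization} with the one-step progress result of Lemma~\ref{lem:schedule_switch}. The proof follows the shape of the corresponding \textsf{Hammerhead}\xspace liveness lemma: first bring every honest party up to $S_{\max}$, then show that the next schedule switch cannot be blocked.

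\textbf{Step 1: all honest parties reach $S_{\max}$.} Fix $S_{\max}$ and let $P_i$ be an honest party already in it. By Lemma~\ref{lem:view_syncrhonization}, together with Lemma~\ref{lem:schedule_agreement}, every honest party eventually processes the same committed asynchronous anchors that led $P_i$ to $S_{\max}$. Hence each honest party eventually enters $S_{\max}$.

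Two points need checking here. First, no honest party can move beyond $S_{\max}$ before the others catch up. This is harmless: if some honest party enters $S_{\max}+1$ in the meantime, the lemma already holds. Second, in this step we do not need Lemma~\ref{lem:view_syncrhonization} to say "simultaneously". It is enough that at some finite time every honest party is in $S_{\max}$, or else some honest party is already past it.

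\textbf{Step 2: the next switch happens.} Once all honest parties are in $S_{\max}$, we invoke Lemma~\ref{lem:schedule_switch}. By Lemma~\ref{lem:mode_agreement}, all of them agree on which round $S_{\max,\text{start\_round}}+K$ is asynchronous, so they all apply the same decision rules to it.

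We must then show that the trigger is an asynchronous anchor that is actually \emph{committed}, not skipped. A skipped slot would not trigger a switch. We plan to handle this as follows:
\begin{itemize}
\item By Lemma~\ref{lem:dual_mode_liveness}, with the asynchronous-mode case supplied by Lemma~\ref{lem:asynchrony_liveness}, each asynchronous wave commits its leader with constant positive probability.
\item The schedule places asynchronous waves periodically under the current schedule.
\item We must argue that if the designated asynchronous leader is skipped, the protocol keeps scheduling asynchronous waves (at rounds $S_{\max,\text{start\_round}}+jK$) under the unchanged schedule $S_{\max}$.
\item Given that, the waves are independent trials, because the global perfect coin is unpredictable. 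With probability $1$ one of them eventually yields a committed asynchronous leader.
\end{itemize}
By Lemma~\ref{lem:both_validator_commit_skip} and Lemma~\ref{lem:sameblock_committed}, that commit is consistent across honest parties. The cutoff mechanism of Section~\ref{sequenceextended} then makes the committing party execute the update to $S_{\max}+1$.

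\textbf{Main obstacle.} We expect the hardest part to be the claim that skipped asynchronous leaders keep the schedule unchanged, so that the repeated-trial argument applies. This has to be read off Algorithm~\ref{alg:main}: it should show that a switch is triggered only on a committed asynchronous leader, and that the interval $K$ stays fixed until such a commit. A secondary subtlety is the independence of trials in the presence of an adaptive asynchronous adversary. For this we would appeal to the coin's unpredictability, exactly as in the \textsf{Mahi-Mahi}\xspace theorems cited in Lemma~\ref{lem:asynchrony_liveness}, rather than reprove it.
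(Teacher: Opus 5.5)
Your outline matches the paper's at the top level: Lemma~\ref{lem:view_syncrhonization} brings every honest party to $S_{\max}$, then Lemma~\ref{lem:schedule_switch} is fed by a liveness lemma. The paper instead splits on whether $GST$ is reached. You also make explicit, where the paper does not, that only a \emph{committed asynchronous} leader triggers a switch. The step you flag as the main obstacle is actually immediate. \texttt{is\_async\_round} depends only on \texttt{last\_round} and \texttt{interval}; these change only inside \texttt{update\_dualmode\_schedule}; and \texttt{core.rs} calls that only when the truncated sequence ends in a committed asynchronous-round leader. So skipped asynchronous slots leave the rounds $R+jK$ asynchronous.

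The real gap is elsewhere. A committed asynchronous leader at round $R$ triggers the update only if \textsc{TryCommit} \emph{reaches} it, and that loop stops at the first undecided slot (Algorithm~\ref{alg:main}, Section~\ref{subsection:sequence}). The \textsf{Mahi-Mahi} theorems behind Lemma~\ref{lem:asynchrony_liveness} give a positive probability that the coin-elected slot at $R$ is directly committed. They say nothing about the deterministic partially-synchronous slots ordered before it, and before $GST$ those can stay directly undecided. Suppose the leader's block reaches exactly $f+1$ honest validators before they build their vote-round blocks, and the Byzantine validators do not vote. Then there are only $f+1$ votes, so no certificate, and at most $2f$ non-votes, so no skip pattern.

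Such a slot can then only be decided through its chain of anchors, and nothing forces that chain through an asynchronous slot. With pipelining (as in Figure~\ref{fig:dualmode_wave_structure}), the slots at rounds $R-2$ and $R-1$ need anchors at rounds $\geq R+1$ and $\geq R+2$, so the asynchronous slot at $R$ can never serve. The chain from $R-1$ runs through $R+2, R+5, R+8, \dots$, and if $K \equiv 0 \pmod 3$ it never lands on an asynchronous round $R+jK$. Every slot on that chain is partially-synchronous, so the adversary can keep them all undecided by the same trick. Then $R-1$ stays undecided forever and no later leader is ever output. Your repeated trials never come into play, because every subsequent asynchronous wave sits behind the same blocked prefix.

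What is missing is an argument that all slots ordered before some committed asynchronous leader eventually get decided. That holds after $GST$ via \textsf{Mysticeti}'s liveness. It also holds in the non-pipelined configuration, where asynchronous rounds fall on wave boundaries and anchor chains advance along those boundaries, so they reach every asynchronous round. In a pure pre-$GST$ execution with pipelining, it needs a constraint on $K$ or on the anchor rule. The paper's proof does not supply this either: its Case~2 just asserts via Lemma~\ref{lem:asynchrony_liveness} that an anchor is committed. So you cannot close the gap by citing it.
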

\begin{proof}
From Lemma~\ref{lem:view_syncrhonization}, every honest party will eventually be at $S_{\max}$. Once all honest parties are at schedule $S_{\max}$, we have two cases:
\begin{itemize}
\item \textbf{Case 1 (Partially synchronous conditions):} If $GST$ is reached, by Lemma~\ref{lem:partial_synchrony_liveness}, the dual-mode protocol will eventually commit an anchor block. By Lemma~\ref{lem:schedule_switch}, this triggers advancement to $S_{\max} + 1$.
\item \textbf{Case 2 (Asynchronous conditions):} If $GST$ is not reached, by Lemma~\ref{lem:asynchrony_liveness}, the dual-mode protocol will eventually commit an anchor block. By Lemma~\ref{lem:schedule_switch}, this triggers advancement to $S_{\max} + 1$.
\end{itemize}
In both cases, eventual progress to $S_{\max} + 1$ is guaranteed.
\end{proof}
\begin{theorem}[Validity]
\tarpon satisfies the validity property of Byzantine Atomic Broadcast.
\end{theorem}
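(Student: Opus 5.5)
To prove validity, we show that a block $b$ broadcast by an honest validator $v_i$ at round $r$ eventually lies in the causal history of some leader block that every honest validator commits, and that all slots up to that leader are eventually decided. The delivery conditions stated before Lemma~\ref{lem:same_block_delivered} then force every honest $v_j$ to output \textit{a\_deliver}$_j(b,r,v_i)$.

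The first step is propagation. By Observation~\ref{lem:dag-propagation}, $b$ eventually enters every honest validator's local DAG. Since $v_i$ is honest, it keeps proposing blocks in later rounds, and each of these references its own previous block first. Every honest validator also eventually receives $b$ before finishing some later round. As a result, there is a round $r^\ast > r$ such that all $2f+1$ honest blocks at round $r^\ast$ contain $b$ in their causal history. Any block at round $r^\ast+1$ references $2f+1$ round-$r^\ast$ blocks. By quorum intersection, at least one of these is honest, so every block at rounds $>r^\ast$ has $b$ in its causal history. The argument is the same induction used in Lemma~\ref{lem:certificate_propagation}.

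The second step is to obtain a committed leader above $r^\ast$. Lemma~\ref{lem:dual_mode_liveness} says the protocol keeps making progress. Before $GST$ this comes from Lemma~\ref{lem:asynchrony_liveness}, applied through the periodic asynchronous waves. After $GST$ it comes from Lemma~\ref{lem:partial_synchrony_liveness}. Progress means that some leader slot $s$ at a round greater than $r^\ast$ is committed by some honest validator, and all slots below $s$ are decided. By Lemma~\ref{lem:both_validator_commit_skip} and Lemma~\ref{lem:sameblock_committed}, every honest validator that decides $s$ commits the same leader $l$. By Observation~\ref{obv:consistent_commit_view}, they all see the same causal history of $l$, and that history contains $b$. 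Lemma~\ref{lem:same_block_delivered} ensures $b$ is delivered at one consistent slot. This covers the case where $b$ was already swept into the history of a lower committed leader.

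The main obstacle is showing that \emph{every} honest validator actually decides all slots up to $s$, rather than only the one that made progress, and does so under the same mode assignment. The dynamic schedule complicates this because it truncates the leader sequence at each asynchronous commit. I would handle this with Lemma~\ref{lem:schedule_agreement}, Lemma~\ref{lem:view_syncrhonization} and Lemma~\ref{lem:mode_agreement}, which show that all honest validators eventually share the schedule in force at round $r^\ast$. Lemma~\ref{lem:schedule_liveness} then ensures that truncation cannot stall them permanently: each cut-off is followed by a schedule update and a re-run of \textsc{TryCommit}. Once the schedule is shared, the anchors and certificates that let one honest validator decide each slot below $s$ eventually reach every honest DAG by Lemma~\ref{lem:localDag_included}. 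The indirect rule then yields the same decisions everywhere, and every honest validator delivers $b$.
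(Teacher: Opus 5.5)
Your proof is correct and takes the same route as the paper's. Both argue that eventual inclusion of $b$ in every honest DAG (Observation~\ref{lem:dag-propagation}), eventual progress (Lemma~\ref{lem:dual_mode_liveness}) and continued schedule advancement (Lemma~\ref{lem:schedule_liveness}) yield a leader, committed identically at all honest validators, whose causal history contains $b$. You additionally spell out two steps the paper only asserts: that every block above some round $r^\ast$ has $b$ in its causal history, which tacitly needs an honest block to reference every block already in its author's local DAG (a rule this paper never states), and that every honest validator eventually decides all slots up to that leader under the agreed schedule.
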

\begin{proof}
Let $v$ be an honest validator and $b$ a block broadcast by $v$. By Observation~\ref{lem:dag-propagation}, $b$ is eventually delivered by every honest validator. From Lemma~\ref{lem:dual_mode_liveness}, the protocol will eventually make progress and commit some leader blocks. From Lemma~\ref{lem:schedule_liveness}, the system will continue to advance through schedules, ensuring that some leader containing $b$ in its causal history will eventually be committed. Therefore, all honest validators will eventually deliver $b$.
\end{proof}
\begin{theorem}[Agreement]
\tarpon satisfies the agreement property of Byzantine Atomic Broadcast.
\end{theorem}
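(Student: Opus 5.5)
The plan is to reduce Agreement to statements already established: consistent slot decisions across validators, eventual receipt of blocks, and schedule agreement. Fix an honest validator $v_i$ that outputs \textit{a\_deliver}$_i(b,r,v_k)$.

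By the delivery conditions, there is a leader block $l$ in some slot $s$ with three properties. First, $v_i$ marked $s$ as \texttt{to-commit}. Second, every slot up to $s$ is decided in $v_i$'s DAG. Third, $b$ lies in the causal history of $l$ and was not delivered with a lower slot.

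The goal is to show that every honest $v_j$ eventually satisfies the same three conditions for the same $s$ and $l$. Lemma~\ref{lem:same_block_delivered} then tells us that $v_j$ delivers $b$ exactly at $s$ with $l$.

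The argument proceeds in order:
\begin{enumerate}
\item Apply Lemma~\ref{lem:localDag_included} and Observation~\ref{lem:dag-propagation} to the blocks $v_i$ used in its decisions for all slots up to $s$. These are the anchors and their certificates, including $l$ and its causal history. Each is referenced by, or broadcast from, honest blocks, so $v_j$ eventually holds them in its local DAG.
\item Invoke Lemma~\ref{lem:schedule_agreement}, Lemma~\ref{lem:view_syncrhonization} and Lemma~\ref{lem:mode_agreement}. Together they give that $v_j$ eventually interprets each slot $\le s$ under the same mode and wave length $w$ that $v_i$ used. Because of the sequence cutoff at asynchronous commits (Section~\ref{sequenceextended}), schedule updates are applied in the same order at both validators.
\item Use Observation~\ref{obv:agree_sequence_leader}: $v_j$ has the same slot sequence as $v_i$.
\item Use liveness, Lemma~\ref{lem:dual_mode_liveness} together with Lemma~\ref{lem:schedule_liveness}. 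The DAG keeps growing, so $v_j$ eventually has anchors above every slot $\le s$ and therefore decides all of them.
\item Apply Lemma~\ref{lem:both_validator_commit_skip} to each slot $\le s$: $v_j$ makes the same commit or skip decision as $v_i$. Lemma~\ref{lem:sameblock_committed} and Observation~\ref{obv:consistent_commit_view} then give that $v_j$ commits the same $l$ at $s$, with the same causal history.
\end{enumerate}

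The main obstacle is the fourth step. We must show that $v_j$ does not stay \texttt{undecided} forever on some slot $\le s$, which would block its sequence.

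The first approach to try is the following. Take the anchor $v_i$ used to decide each such slot, or the direct-rule certificate or skip pattern $v_i$ observed. Those blocks eventually reach $v_j$ by the first step. The direct rule and the indirect rule are deterministic functions of the DAG below the anchor's round, so $v_j$ reproduces $v_i$'s decision once it has the same sub-DAG.

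The delicate point is that the anchor must be the first non-skipped slot above the decision round. This requires that $v_j$ also decides the higher slots $v_i$ decided, which makes the argument recursive. It is handled by induction downward from the highest slot $v_i$ had decided when it delivered $b$, mirroring the (*) argument in Lemma~\ref{lem:both_validator_commit_skip}. Each decision depends only on blocks $v_j$ eventually holds, together with schedule parameters $v_j$ eventually shares by the second step.

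Once all slots up to $s$ are decided identically, the deterministic linearizer outputs $b$ at slot $s$ for $v_j$, which establishes Agreement.
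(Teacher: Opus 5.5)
Your overall reduction follows the same route as the paper: delivery conditions, schedule and mode convergence, liveness, then Lemma~\ref{lem:both_validator_commit_skip}, Lemma~\ref{lem:sameblock_committed} and Lemma~\ref{lem:same_block_delivered}. The paper's proof just cites Lemma~\ref{lem:schedule_liveness} and Lemma~\ref{lem:dual_mode_liveness} and concludes. However, the mechanism you give for the step you correctly single out as the crux does not go through.

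Step~1 claims that every block $v_i$ used in its decisions eventually reaches $v_j$. Lemma~\ref{lem:localDag_included} and Observation~\ref{lem:dag-propagation} only cover blocks broadcast by honest validators or lying in the causal history of an honest block. A direct decision is computed from \emph{all} vote- and certify-round blocks in $v_i$'s local DAG. Up to $f$ of the $2f+1$ certificate (or non-vote) blocks $v_i$ counted may be Byzantine blocks sent only to $v_i$ and never referenced by any honest block.

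The base of your downward induction, the highest slot $v_i$ has decided, is necessarily directly decided, since an indirect decision needs a committed anchor above it. So $v_j$ may never see that pattern. For instance, if $v_j$ only ever holds the $f+1$ honest certificates, it never directly decides that slot. The direct rule depends on the whole local DAG at those rounds, not on a hash-pinned sub-DAG. So ``$v_j$ reproduces $v_i$'s decision once it has the same sub-DAG'' is not available in general, and the replay stalls at its first step.

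The repair is to stop replaying $v_i$'s view. You only need that $v_j$ eventually decides every slot up to $s$ by \emph{some} route. That must come from liveness in $v_j$'s own view: fresh slots that $v_j$ itself directly commits, which then serve as committed anchors for the slots below. These come from the Mysticeti guarantee after $GST$ and the probability-one Mahi-Mahi guarantee before it. This is what your Step~4 citation of Lemma~\ref{lem:dual_mode_liveness} should be doing, and what the paper relies on.

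Once $v_j$ has decided those slots, Lemma~\ref{lem:both_validator_commit_skip} and Lemma~\ref{lem:sameblock_committed} force its decisions, and its block $l$ at $s$, to coincide with $v_i$'s, however they were reached. Hash references fix the causal history of $l$, and Lemma~\ref{lem:same_block_delivered} finishes. Step~1 should be restricted to $l$ and its causal history. These do propagate, because $l$ carries a certificate and hence at least $f+1$ honest votes. With these changes your argument is a more detailed version of the paper's proof.
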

\begin{proof}
Suppose an honest validator $v$ delivers a block $b$. Then $b$ was delivered as part of some committed leader's causal history. From Lemma~\ref{lem:schedule_liveness}, all honest validators will eventually converge to the same schedule sequence and therefore observe the same committed leaders. From Lemma~\ref{lem:dual_mode_liveness}, all validators will continue to make progress and process these committed leaders. Therefore, all honest validators will eventually deliver $b$.
\end{proof}
\chapter{Conclusion}
\label{chap:conclusion}
\quad Dual-mode consensus protocols remain the ideal protocols for decentralised distributed ledgers. By providing liveness in both synchronous and asynchronous network setting, these protocols can better reflect real-life scenarios where we cannot guarantee networks to remain stable. While partially-synchronous assumptions are easier to reason with, and faster then their asynchronous counterparts, liveness is lost in periods of asynchrony \cite{SAFETYLIVENESS}. On the other hand, asynchronous systems sacrifice latency to increase robustness in periods of asynchrony. We have seen with both \textsf{Ditto}\xspace and \textsf{Bullshark}\xspace's evaluations that dual-mode consensus protocols can achieve results that rival similar state-of-the-art protocols \cite{DITTO, BULLSHARK}. Thus, with this research project, we have presented the next step in dual-mode consensus algorithms. \tarpon, the first dual-mode consensus algorithm built on an \textit{uncertified} data structure, achieves the 3-message-delay lower bound for the greater part of its duration, while providing liveness even during periods of asynchrony. Empirical evaluation was beyond the scope of this work; however, based on theoretical analysis, \tarpon should achieve similar results (both throughput and latency) as \textsf{Mysticeti}\xspace, while maintaining lower latency compared to \textsf{Mahi-Mahi}\xspace \cite{MYSTICETI,MAHIMAHI}. Furthermore, beyond providing detailed specifications and implementation, in Chapter~\ref{chap:dual_mode_protocol} and Chapter~\ref{chap:implementation} we also demonstrate that \textit{safety} and \textit{liveness} are maintained with \tarpon. Hence, these results establish well defined future research objectives.
\vspace{1em}
\section{Further Work}
\label{section:furtherwork}
\quad Given the result of our project, the protocol that has been constructed is guaranteed to achieve both \textbf{safety} and \textbf{liveness} in both synchronous and asynchronous settings. Thus, the next step of the project would be to evaluate the protocol by running experiments on a geo-distributed cloud infrastructure. These experiments would follow a similar structure to the experiments performed on \textsf{Mysticeti}\xspace and \textsf{Mahi-Mahi}\xspace (as seen in Figure~\ref{fig:mysticeti_results} and Figure~\ref{fig:mahi_results} respectively), allowing us to improve and evaluate the protocol. 
\newpage
\quad We do not know which value $K$ for the interval found in \tarpon's dynamic schedule would be optimal. Therefore, by running experiments, we could determine the most effective \texttt{min\_async\_interval} and \texttt{max\_async\_interval} bounds for the interval, alongside other improvements such as the optimal \texttt{adjustment\_factor} after a schedule update (a description of all these variables can be found in Appendix~\ref{appendix:DualModeConfig}). Furthermore, the evaluations would allow us to see how \tarpon performs against other consensus protocols (such as the frequently discussed \textsf{Mysticeti}\xspace and \textsf{Mahi-Mahi}\xspace protocols) \cite{MYSTICETI, MAHIMAHI}. 

\quad Specifically, by following similar evaluation procedures as related protocols, through these experiments we would able to demonstrate the following claims:
\begin{itemize}
    \item \textbf{C1:} \tarpon has similar throughput and lower latency compare to the \textsf{Mahi-Mahi}\xspace consensus protocol when operating under ideal conditions \cite{MAHIMAHI}. 
    \item \textbf{C2:} \tarpon has similar throughput and marginally higher latency compared to the \textsf{Mysticeti}\xspace consensus protocol when operating under ideal conditions \cite{MYSTICETI}. 
    \item \textbf{C3:} \tarpon, parametrised with a wave length of $4$ for its asynchronous mode, has similar throughput and lower latency as when configured with a wave length of $5$. 
\end{itemize}
\quad Similarly to prior work, to determine the validity of the aforementioned claims, \tarpon should be tested under a number of different scenarios. Specifically, \tarpon should be tested with both a small and large committee size (i.e., committee of size 10 and 50). Furthermore, \tarpon requires evaluation under both ideal-conditions (i.e., all committee members are working as intended), alongside performance under crash-faults (i.e., some committee members are fault). It is important to note that the evaluation of BFT consensus protocols under the presence of Byzantine faults is an open research question, thus \tarpon relies on formal proofs to guarantee \textit{safety} and \textit{liveness} (provided in Chapter~\ref{chap:Proofs}) \cite{BFTRESEARCH}. Furthermore, as mentioned by related research, while it is important that \tarpon handles Byzantine faults correctly, the presence of such faults in real-life proof-of-stake blockchains are rare, and crash-faults are much more common occurrences \cite{MAHIMAHI, MYSTICETI}.
\small
\printbibliography
\appendix
\chapter{Code Structure}
\small
\section{Implementation}
\subsection{Decision Rule}
\subsubsection{Universal Committer}
\label{Appendix:UniversalCommitter}
\quad Listing~\ref{listing:universalcommitter} shows a snippet of code that shows the primary modifications made to the Mysticeti \texttt{UniversalCommitter} structure \cite{MYSTICETI}. Primarily, we include a new variable \texttt{wave\_length\_async} which stores how long our asynchronous mode wave length is going to be (i.e., either $4$ or $5$ rounds). Furthermore, we allow for this value to be set during the creation of the structure, via \texttt{with\_async\_wave\_length}. Finally, we modified the \texttt{try\_commit} function to also pass the \texttt{AsyncState} structure, which holds all the information required for the function to determine if a round should be run via its asynchronous mode or not (this structure is expanded upon in Appendix~\ref{appendix:AsyncState}). 
   \begin{lstlisting}[language=Rust, caption={UniversalCommitter async state integration},label={listing:universalcommitter}]
  pub fn try_commit(
      &self,
      last_decided: BlockReference,
      async_state: &AsyncState,
  ) -> Vec<LeaderStatus> {
      let mut leaders = VecDeque::new();
      for round in (last_decided_round..=highest_known_round).rev() {
          let is_async = (round - async_state.last_round) % async_state.interval == 0;

          for committer in self.committers.iter().rev() {
              let mut status = committer.try_direct_decide(leader, round, is_async);

              if !status.is_decided() {
                  status = committer.try_indirect_decide(
                      leader, round, leaders.iter(), is_async
                  );
              }
              leaders.push_front(status);
          }
      }
      // ... filter and return decided leaders
  }

  pub struct UniversalCommitterBuilder {
      wave_length_async: RoundNumber,
      // ... other fields
  }

  impl UniversalCommitterBuilder {
      pub fn with_async_wave_length(mut self, wave_length_async: RoundNumber) -> Self {
          self.wave_length_async = wave_length_async;
          self
      }
  }
\end{lstlisting}
\subsubsection{BaseCommitter}
\label{Appendix:BaseCommitter}
\quad Similarly to how we have modified the Universal Committer structure, Listing~\ref{listing:basecommitter} shows the primary modifications made to the Mysticeti \texttt{BaseCommitter} structure \cite{MYSTICETI}. Both \texttt{try\_direct\_decide} and \texttt{try\_indirect\_decide} have a new \texttt{is\_async} boolean that is set to true if the current \texttt{leader\_round} should be run under the asynchronous mode. If it is, then the \texttt{decision\_round} for this \texttt{leader\_round} is calculated differently, via \texttt{decision\_round\_async}. Furtermore, depending on mode of operation,  \texttt{try\_indirect\_decide}'s threshold for anchor's round also changes depedning on the \texttt{is\_async} flag.
  \begin{lstlisting}[language=Rust, caption={BaseCommitter async decision logic}, label={listing:basecommitter}]
  fn decision_round_async(&self, wave: WaveNumber) -> RoundNumber {
      wave * self.options.wave_length + self.options.wave_length_async - 1
          + self.options.round_offset}

  pub fn try_direct_decide(
      &self,
      leader: AuthorityIndex,
      leader_round: RoundNumber,
      is_async: bool,
  ) -> LeaderStatus {
      let wave_length = if is_async {
          self.options.wave_length_async
      } else {
          self.options.wave_length
      };

      let decision_round = if is_async {
          self.decision_round_async(wave)
      } else {
          self.decision_round(wave)
      };
      // ... rest of logic
  }

  pub fn try_indirect_decide<'a>(
      &self,
      leader: AuthorityIndex,
      leader_round: RoundNumber,
      leaders: impl Iterator<Item = &'a LeaderStatus>,
      is_async: bool,
  ) -> LeaderStatus {
      let anchor_round = if is_async {
          leader_round + self.options.wave_length_async
      } else {
          leader_round + self.options.wave_length
      };
      // ... rest of decision logic
  }
  \end{lstlisting}

\subsection{Dual-Mode Scheduler}
\subsubsection{Asynchronous State}
\label{appendix:AsyncState}
\quad Listing~\ref{listing:asyncstate} shows the structure that keeps of the current state of the asynchronous round. The \texttt{interval} represents the current interval between asynchronous rounds, while \texttt{last\_round} represents the last round which an asynchronous round was executed. Furthermore, hypothetically this structure can be included in the write-ahead-log (WAL) that our protocol utilises, allowing crashed nodes to recover their session correctly.
\newpage
\begin{lstlisting}[caption={Asynchronous state structure},label={listing:asyncstate}]
#[derive(Clone, Debug)]
pub struct AsyncState {
    pub interval: RoundNumber,
    pub last_round: RoundNumber,
}

impl Default for AsyncState {
    fn default() -> Self {
        Self {
            interval: DEFAULT_ASYNC_RATE,
            last_round: 0,
        }
    }
}

impl AsyncState {
    pub fn is_async_round(&self, round: RoundNumber) -> bool {
        // If genesis round OR current round is still last_async_round
        if round == 0 || round <= self.last_round {
            return false;
        }

        (round - self.last_round) % self.interval == 0
    }
}

\end{lstlisting}
\subsubsection{Dual Mode Configuration}
\label{appendix:DualModeConfig}
\quad Listing~\ref{listing:dualmodeconfig} represents static configuration values that allow for ease of modularity regarding the aggressivity of the changes made by the dual mode scheduler. The following variables can be changed:
 \begin{itemize}
  \item \texttt{wave\_length}: Length of partially-synchronous waves in rounds.
  \item \texttt{leaders\_per\_round}: Number of leaders elected per consensus round.
  \item \texttt{pipeline}: Enables/disables pipelined consensus execution. This is usually passed by \texttt{core.rs}, similarly to how \texttt{UniversalCommitter} is built.
  \item \texttt{target\_direct\_commit\_ratio}: Desired fraction of direct commits (0.0-1.0).
  \item \texttt{adjustment\_factor}: Rate of interval adaptation changes. Default set to $0.1$, meaning that the \texttt{interval} value may only change by $10\%$ at each schedule change. 
  \item \texttt{min\_async\_interval}: Lower bound for asynchronous timing intervals.
  \item \texttt{max\_async\_interval}: Upper bound for asynchronous timing intervals.
  \end{itemize}
\vspace{2em}
\begin{lstlisting}[caption={Dual mode configuration structure},label={listing:dualmodeconfig}]
   /// Dual Mode Scheduler configurations
#[derive(Clone, Debug)]
pub struct DualModeConfig {
    pub wave_length: RoundNumber,
    pub leaders_per_round: u64,
    pub pipeline: bool,
    pub target_direct_commit_ratio: f64,
    pub adjustment_factor: f64,
    pub min_async_interval: u64,
    pub max_async_interval: u64,
}

impl Default for DualModeConfig {
    fn default() -> Self {
        Self {
            wave_length: DEFAULT_WAVE_LENGTH,
            leaders_per_round: 1,
            pipeline: false,
            // Adjustment Values
            target_direct_commit_ratio: 0.8,
            adjustment_factor: 0.1,
            // Intervals
            min_async_interval: 100,
            max_async_interval: 900,
        }
    }
}
\end{lstlisting}
\newpage
\subsubsection{Dual Mode Scheduler}
\label{appendix:DualModeScheduler}
\paragraph{DualModeScheduler} Listing~\ref{listing:strucDualModeScheduler} shows the implementation of \texttt{DualModeScheduler}. Specifically, the scheduler requires Creates a \texttt{DualModeConfig}, the \texttt{committee} used by the protocol and the current, or new, \texttt{AsyncState}. Furthermore, the scheduler creates a vector to store leaders slots marked as \texttt{to-commit}, a \texttt{sequence\_length} which stores the length of the decided leader slots (decided leader slots include either slots marked as \texttt{to-commit} or \texttt{to-skip}. If the protocol is run with the \texttt{"simulator"} flag, then we also keep track of all updates intervals to make sure that different validators have an identical history (schedule agreement).  
\begin{lstlisting}[caption={Structure and implementation of \texttt{DualModeScheduler}},label={listing:strucDualModeScheduler}]
pub struct DualModeScheduler {
    async_state: AsyncState,
    #[cfg(feature = "simulator")]
    async_rate_history: Vec<u64>,
    committee: Arc<Committee>,
    committed_leaders: Vec<BlockReference>,
    sequence_length: usize,
    config: DualModeConfig,
}
impl DualModeScheduler {
    /// Creates a new dual mode scheduler with the given configuration
    pub fn new(config: DualModeConfig, committee: Arc<Committee>, async_state: AsyncState) -> Self {
        Self {
            config,
            committee,
            // Extra
            #[cfg(feature = "simulator")]
            async_rate_history: vec![async_state.interval],
            async_state,
            sequence_length: 0,
            committed_leaders: Vec::new(),
        }
    }
\end{lstlisting}
\paragraph{IsAsync} Listing~\ref{listing:isasyncfunction} shows a simple function \texttt{is\_async\_round}, where it takes a \texttt{round}, and given the current \texttt{async\_state}, it returns \texttt{true} if the current round is supposed to be run with the asynchronous mode, or \texttt{false} if not.
\vspace{2em}
\begin{lstlisting}[caption={\texttt{is\_async\_round} function of \texttt{DualModeScheduler}},label={listing:isasyncfunction}]
    pub fn is_async_round(&self, round: RoundNumber) -> bool {
        // If genesis round OR current round is still last_async_round
        if round == 0 || round <= self.async_state.last_round {
            return false;
        }

        (round - self.async_state.last_round) % self.async_state.interval == 0
    }

\end{lstlisting}
\paragraph{CollectSubdag} Listing~\ref{listing:collectsubdag} shows the function \texttt{collect\_subdag}, that, given an \texttt{anchor}, we collect all the blocks in its subDAG, until the round right after the last asynchronous round (stored in the \texttt{AsyncState} structure).
\begin{lstlisting}[caption={\texttt{collect\_subdag} function of \texttt{DualModeScheduler}},label={listing:collectsubdag}]
    fn collect_subdag(&self,block_store: &BlockStore,
        anchor: Data<StatementBlock>,) -> Vec<Data<StatementBlock>> {
        
        let until = self.async_state.last_round + 1;
        let mut visited = HashSet::new();
        let mut to_visit = vec![anchor];
        let mut subdag_blocks = Vec::new();

        while let Some(block) = to_visit.pop() {
            if visited.insert(*block.reference()) {
                subdag_blocks.push(block.clone());
                for include_ref in block.includes() {
                    if include_ref.round() >= (until) {
                        if let Some(included_block) = block_store.get_block(*include_ref) {
                            if !visited.contains(include_ref) {
                                to_visit.push(included_block);
                            }
                        }
                    }
                }
            }
        }
        subdag_blocks
    }
\end{lstlisting}
\paragraph{Update Dual-Mode Schedule} Listing~\ref{listing:updatedualmodeschedule} shows the function  \texttt{update\_dualmode\_schedule}, which is what is called by \texttt{core.rs} when an asynchronous leader slot is marked as \texttt{to-commit}. First of all, we pop the first leader from the vector \texttt{committed\_leaders}, and then we first run \texttt{collect\_subdag}, using the popped leader, then we count the number of direct commits found within this subdag via the \texttt{count\_direct\_commits} function. Then, we get a ratio of direct commits against expected commits, which will allow us to calculate the new interval via the \texttt{calculate\_new\_async\_interval)} function. Lastly, we update the \texttt{AsyncStruct} accordingly, and set the \texttt{sequence\_length} to 0. If the function is run with the \texttt{"simulator"} flag (i.e., we are running a simulation), we store the new interval in the \texttt{async\_rate\_history} vector for later analysis. 
\begin{lstlisting}[caption={\texttt{update\_dualmode\_schedule} function of \texttt{DualModeScheduler}},label={listing:updatedualmodeschedule}]
    pub fn update_dualmode_schedule(&mut self, block_store: &BlockStore) {
        let Some(anchor_ref) = self.committed_leaders.pop() else {
            return;
        };
        let anchor = block_store
            .get_block(anchor_ref)
            .expect("Anchor block must exist in block store");
        let blocks = self.collect_subdag(block_store, anchor.clone());
        let ratio = self.count_direct_commits(&anchor, &blocks, block_store) as f64
            / self.calculate_expected_commits(anchor.round()) as f64;
        let new_interval = self.calculate_new_async_interval(ratio);
        self.async_state.interval = new_interval;
        self.async_state.last_round = anchor.round();
        self.sequence_length = 0;

        #[cfg(feature = "simulator")]
        self.async_rate_history.push(new_interval);
    }
\end{lstlisting}
\paragraph{Expected Commits} Listing~\ref{listing:expectedcommits} shows the \texttt{calculate\_expected\_commits\_sequence} function. This function takes the decided leaders length, obtained by the \texttt{try\_commit} in \texttt{core.rs}, and returns the correct amount. As explained in Section~\ref{chap:dual_mode_protocol}, some leader slots cannot be determined (such as the two previous rounds from the anchor selected), thus, depending on whether pipeline is turned on or off, and alongside the amount of leaders per round, we need to remove some values from the \texttt{sequence\_length}.
\begin{lstlisting}[caption={\texttt{calculate\_expected\_commits\_sequence} function of \texttt{DualModeScheduler}},label={listing:expectedcommits}]
    fn calculate_expected_commits_sequence(&self) -> usize {
        if self.config.pipeline {
            self.sequence_length.saturating_sub(
                self.config.wave_length as usize * self.config.leaders_per_round as usize,
            )
        } else {
            self.sequence_length
                .saturating_sub(self.config.leaders_per_round as usize)
        }
    }
\end{lstlisting}
\paragraph{Direct Commits} Listing~\ref{listing:countdirectcommits} and Listing~\ref{listing:isdirectlycommitted} show the \texttt{calculate\_expected\_commits\_sequence} and \texttt{is\_directly\_committed}. \texttt{calculate\_expected\_commits\_sequence}, as the name suggests, loops through all the leader slots in \texttt{committed\_leaders}, alongside empting the vector at the same time, and then uses \texttt{is\_directly\_committed} to check if the current leader slot has been directly committed, given a \texttt{subdag}. Furthermore, \texttt{is\_directly\_committed} utilises the \texttt{committee} vector to determine the Stake of a given validator, making sure that validators with higher stake in the system will have the appropriate weight applied to the vote. 
\begin{lstlisting}[caption={\texttt{calculate\_expected\_commits\_sequence} function of \texttt{DualModeScheduler}},label={listing:countdirectcommits}]
    fn count_direct_commits(
        &mut self,
        anchor: &Data<StatementBlock>,
        blocks: &Vec<Data<StatementBlock>>,
        block_store: &BlockStore,
    ) -> usize {
        let leader_refs: Vec<_> = self.committed_leaders.drain(..).collect();
        leader_refs
            .iter()
            .filter_map(|leader_ref| block_store.get_block(*leader_ref))
            .filter(|leader| leader.round() <= anchor.round() - self.config.wave_length)
            .filter(|leader| self.is_directly_committed(leader, blocks))
            .count()
    }

\end{lstlisting}
\vspace{2em}
\begin{lstlisting}[caption={\texttt{is\_directly\_committed} function of \texttt{DualModeScheduler}},label={listing:isdirectlycommitted}]
    fn is_directly_committed(
        &self,
        leader: &Data<StatementBlock>,
        blocks: &[Data<StatementBlock>],
    ) -> bool {
        let vote_round = leader.round() + 1;
        let cert_round = vote_round + 1;
        let leader_ref = leader.reference();

        // Collect votes for the leader in round r+1
        let votes: Vec<_> = blocks
            .iter()
            .filter(|block| block.round() == vote_round)
            .filter(|block| block.includes().contains(leader_ref))
            .collect();

        // In round r+2, use StakeAggregator to count certificates with proper stake weighting
        let mut certificate_stake_aggregator = StakeAggregator::<QuorumThreshold>::new();
        for block in blocks.iter().filter(|block| block.round() == cert_round) {
            let mut vote_stake_aggregator = StakeAggregator::<QuorumThreshold>::new();
            let has_enough_votes = votes
                .iter()
                .filter(|vote| block.includes().contains(vote.reference()))
                .any(|vote| vote_stake_aggregator.add(vote.reference().authority, &self.committee));
            if has_enough_votes
                && certificate_stake_aggregator.add(block.reference().authority, &self.committee)
            {
                return true;
            }
        }
        false
    }
\end{lstlisting}
\vspace{2em}
\paragraph{Calculate New Asynchronous Rate} Listing~\ref{listing:calculatenewasyncinterval} shows the function \texttt{calculate\_new\_async\_interval}. This is a deterministic function that, given a \texttt{commit\_ration}, will increase or decrease the \texttt{interval} stored in \texttt{async\_state}. Furthermore, \texttt{min\_async\_interval} and \texttt{max\_async\_interval} are used to make sure that the asynchronous\texttt{interval} does not go below or above a certain threshold.  
\begin{lstlisting}[caption={\texttt{calculate\_new\_async\_interval} function of \texttt{DualModeScheduler}},label={listing:calculatenewasyncinterval}]
\begin{lstlisting}
    fn calculate_new_async_interval(&self, commit_ratio: f64) -> u64 {
        let current_interval = self.async_state.interval;
        let ratio_diff = commit_ratio - self.config.target_direct_commit_ratio;

        let mut adjustment = if ratio_diff >= 0.0 {
            // Good amount of direct commits
            (current_interval as f64 / (1.0 - self.config.adjustment_factor)) as u64
        } else {
            // Too few direct commits
            (current_interval as f64 * (1.0 - self.config.adjustment_factor)) as u64
        };

        // Ensure async_round is a multiple of wave_length if pipeline is disabled
        if !self.config.pipeline {
            adjustment = (adjustment + 1) / self.config.wave_length * self.config.wave_length;
        }

        adjustment.clamp(
            self.config.min_async_interval,
            self.config.max_async_interval,
        )
    }
\end{lstlisting}
\paragraph{Add Committed Leaders} Listing~\ref{listing:addcommittedleaders} shows the function \texttt{add\_committed\_leaders}. A simple function always called by the \texttt{try\_commit} function, in \texttt{core.rs}, that adds leader slots marked as \texttt{to-commit} to the \texttt{committed\_leaders}, alongside the length of decided leaders to \texttt{sequence\_length}.
\vspace{2em}
\begin{lstlisting}[caption={\texttt{add\_committed\_leaders} function of \texttt{DualModeScheduler}},label={listing:addcommittedleaders}]
    pub fn add_committed_leaders(
        &mut self,
        leaders: Vec<Data<StatementBlock>>,
        sequence_length: usize,
    ) {
        self.sequence_length += sequence_length;
        let leader_refs: Vec<BlockReference> =
            leaders.iter().map(|block| *block.reference()).collect();
        self.committed_leaders.extend(leader_refs);
    }
\end{lstlisting}
\vspace{2em}
\paragraph{Asynchronous Rate History} Listing~\ref{listing:getasyncrate} shows the function \texttt{get\_async\_rate\_history}. This function can only be run if the flag \texttt{"simulator"} is used. It is used by the simulator tests, found in \texttt{net\_sync.rs}, to make sure that all validators have the same asynchronous interval history.
\begin{lstlisting}[caption={\texttt{get\_async\_rate\_history} function of \texttt{DualModeScheduler}},label={listing:getasyncrate}]
    #[cfg(feature = "simulator")]
    pub fn get_async_rate_history(&self) -> Vec<u64> {
        self.async_rate_history.clone()
    }
}
\end{lstlisting}
\subsubsection{Safety Mechanism}
\label{appendix:safetymechanism}
\quad Listing~\ref{listing:try_commit_core} represent how \texttt{core.rs} (the core of the validator that runs the decider rule every time it receives a new block) handles unlikely long periods of asynchrony. In such periods, as stated in Section~\ref{subsection:dynamic_scheduling}, it is possible for a validator to go for a long time without a direct commit, causing a long list of  undecided leaders to begin stacking. It is possible, once a leader has been committed, that we go through two distinct asynchronous rounds (based on our \texttt{AsyncState}), which could cause a desynchronization of the asynchronous schedule between the validators. Thus, as we can see from Listing~\ref{listing:try_commit_core}, we correct this by making sure that, once we run the decision rule and we get a sequence of leaders, we truncate the sequence at the first found asynchronous leader. Then, we check if the top leader is asynchronous, and if it is, we update the schedule accordingly. This ensures that all validators agree on the sequence of asynchronous schedules, as proven by Lemma~\ref{lem:schedule_agreement}.
\vspace{2em}
\begin{lstlisting}[caption={\texttt{core.rs} try\_commit function},label={listing:try_commit_core}]
    pub fn try_commit(&mut self) -> Vec<Data<StatementBlock>> {
        let async_state = self.dual_mode_scheduler.get_async_state();
        let mut sequence: Vec<_> = self
            .committer
            .try_commit(self.last_commit_leader, async_state);
            
        if let Some(pos) = sequence
            .iter()
            .position(|x| self.dual_mode_scheduler.is_async_round(x.round()) && x.is_commit())
        {
            sequence.truncate(pos + 1);
        }
        let sequence_length = sequence.len();

        let sequence: Vec<_> = sequence
            .into_iter()
            .filter_map(|leader| leader.into_decided_block())
            .collect();
            
        self.dual_mode_scheduler
            .add_committed_leaders(sequence.clone(), sequence_length);
            
        if let Some(last) = sequence.last() {
            if self.dual_mode_scheduler.is_async_round(last.round()) {
                self.dual_mode_scheduler
                    .update_dualmode_schedule(&self.block_store);
            }
            self.last_commit_leader = *last.reference();
        }
        if self.last_commit_leader.round() > self.rounds_in_epoch {
            self.epoch_manager.epoch_change_begun();
        }
        sequence
    }
\end{lstlisting}
\section{Tests}
\subsection{Decision Rule}
\label{appendixL:testConsensusProtocol}
\quad We now move onto the tests performed on the decision rule of our consensus protocol. These tests make sure that proposed blocks are correctly marked depending on a given structure of a DAG. To reduce redundancy, we only describe the tools utilised to conduct the tests on the base dual-mode functionality of our protocol. All tests passed successfully across the complete range of testing modes, as explained in Section~\ref{section:tests}, and all testing modes pass successfully when \texttt{DEFAULT\_WAVE\_LENGTH\_ASYNC} is set to either $4$ or $5$. 
\begin{itemize}
  \item \texttt{committee}: Creates test committee with a specified amount of validators.
  \item \texttt{build\_dag}: Create a completely connected DAG up to a specific round. 
  \item \texttt{build\_dag\_layer}: Create a single DAG layer with customised validator connections, and block references. Specifically, this function is utilised to create scenarios where the decision rule marks a slot as either \texttt{to-skip} or \texttt{undecided}.
  \item \texttt{TestBlockWriter::new}: Initializes mock block writer for creating test blockchain structures.
  \item \texttt{test\_metrics}: Provides instrumentation metrics collection for consensus performance testing.
  \item \texttt{UniversalCommitterBuilder::new}: Creates builder for configuring consensus committer with custom parameters.
  \item \texttt{try\_commit}: Executes consensus commit attempt with given last committed block and asynchronous state.
  \item \texttt{elect\_leader}: Deterministically selects leader validator for specified consensus round.
  \item \texttt{BlockReference::new\_test}: Creates test block reference with authority ID and round number.
  \item \texttt{quorum\_threshold}: Returns minimum validator count needed for consensus quorum decisions ($2f+1$).
  \item \texttt{validity\_threshold}: Returns minimum validator count required for block validity certification ($f+1$).
  \end{itemize}
\quad All decision rules related tests can be found in the following files:
\begin{itemize}
    \item \texttt{base\_committer\_tests.rs}
    \item \texttt{multi\_committer\_tests.rs}
    \item \texttt{pipelined\_committer\_tests.rs}
\end{itemize}
\quad For brevity, we present only a representative subset of tests. The following tests create scenarios where \tarpon has to commit the first leader via its partially-synchronous mode, then either commit or skip the asynchronous mode leader (depending on the test), and lastly, commit another partially-synchronous leader.
\subsubsection{Dual-Mode Commit}
\begin{lstlisting}
#[test]
#[tracing_test::traced_test]
fn direct_commit_dual_mode() {
    let committee = committee(4);
    let wave_length = DEFAULT_WAVE_LENGTH;
    let wave_length_async = DEFAULT_WAVE_LENGTH_ASYNC;

    // async round begins after one full wave of mysticeti is completed
    let async_round = wave_length*2;
    let decision_round = async_round + wave_length_async+1;

    let mut block_writer = TestBlockWriter::new(&committee);
    build_dag(&committee, &mut block_writer, None, decision_round);

    let committer = UniversalCommitterBuilder::new(
        committee.clone(),
        block_writer.into_block_store(),
        test_metrics(),
    )
    .build();

    let last_committed = BlockReference::new_test(0, 0);
    let sequence = committer.try_commit(
        last_committed,
        &AsyncState {
            interval: async_round,
            last_round: 0,
        },
    );
    tracing::info!("Commit sequence: {sequence:?}");

    // 3 leaders committed, 1 p-sync, 1 async, then 1 p-sync again
    assert_eq!(sequence.len(), 3);
}
\end{lstlisting}
\subsubsection{Dual-Mode Skip}
\begin{lstlisting}
#[test]
#[tracing_test::traced_test]
fn direct_skip_dual_mode() {
    let committee = committee(4);
    let wave_length = DEFAULT_WAVE_LENGTH;
    let wave_length_async = DEFAULT_WAVE_LENGTH_ASYNC;

    let async_round = wave_length*2;

    let mut block_writer = TestBlockWriter::new(&committee);
    let references_1 = build_dag(&committee, &mut block_writer, None, async_round);

    // Filter out leader of async round.
    let references_without_leader_1: Vec<_> = references_1
        .into_iter()
        .filter(|x| x.authority != committee.elect_leader(async_round))
        .collect();

    let decision_round_1 = async_round + wave_length_async + 1;
    build_dag(
        &committee,
        &mut block_writer,
        Some(references_without_leader_1),
        decision_round_1,
    );

    let committer = UniversalCommitterBuilder::new(
        committee.clone(),
        block_writer.into_block_store(),
        test_metrics(),
    )
    .build();

    let last_committed = BlockReference::new_test(0, 0);
    let sequence = committer.try_commit(
        last_committed,
        &AsyncState {
            interval: async_round,
            last_round: 0,
        },
    );
    tracing::info!("Commit sequence: {sequence:?}");

    // 3 leaders decided, 1 p-sync - commit, 1 async - skip, then 1 p-sync again - commit
    assert_eq!(sequence.len(), 3);
    if let LeaderStatus::Skip(leader, round) = sequence[1] {
        assert_eq!(leader, committee.elect_leader(async_round));
        assert_eq!(round, async_round);
    } else {
        panic!("Expected to directly skip the asynchronous leader");
    }
}
\end{lstlisting}
\subsection{Dual-Mode Scheduler}
\quad Beyond simple unit tests that check edge cases of simple calculations function (e.g., checking that the \texttt{is\_async\_rate(anchor.round)} returns the correct value, either \texttt{true} if the anchor's round uses the asynchronous mode, or \texttt{false} otherwise), the important tests is making sure that our \texttt{count\_direct\_commits()} and \texttt{collect\_subdag()} work correctly. In the case of \texttt{count\_direct\_commits()}, this meant making sure that the function works with pipeline turned on, and also cases with multi-leaders. Emphasis is placed on these two functions, as they are what determine the deterministic change in schedule consistent across all validators. All tests mentioned in this section can be found in \texttt{dual\_mode\_scheduler.rs}.
\subsubsection{Collect Subdag}
\label{appendix:collect_subdag}
\quad To test the collecting function of \tarpon, a specific test has been created. Specifically, this test creates a DAG where only one leader has includes all the blocks from the previous round, causing that one leader to have a longer subDAG then the others. This test can be found in \texttt{dual\_mode\_scheduler.rs}.
\subsubsection{Count Direct Commits}
\label{appendix:count_direct_commits}
\quad We then created tests to makes sure that, in the scenario of a DAG with all direct commits, the counting function returns the correct amount (in this case being 1). This specific behaviour has also been tested in an instance with pipeline set to \texttt{true} and another with a multi-leader setup ($2f+1$ leaders per round).
\subsubsection{Count Direct Commits (Indirect)}
\label{appendix:count_indirect_commits}
\quad In the scenario of a DAG with a indirect commits, the counting function has to return the correct amount (in this case being 0). This test is important as it shows that we are able to distinguish direct commits from indirect commits within a subDAG. The specific behaviour has also been tested in an instance with pipeline set to \texttt{true} and another with a multi-leader setup ($2f+1$ leaders per round). 
\subsubsection{Sequence Cutoff}
\label{appendix:seqoff}
\quad Lastly we have created a test that makes sure a long enough sequence of leaders (as mentioned in Section~\ref{subsection:dynamic_scheduling}), is correctly cut off. The test creates enough blocks (given an initial interval rate of every $300$ rounds, our tests creates $300*5$ blocks) which guarantees the scenario we are looking for. We then run \texttt{try\_commit} in a loop until no more leaders can be committed. Each loop should either stop at the correct async\_rate\_round (and update the \texttt{async\_rate\_round} accordingly) or until the sequence is empty. This test can be found in This test can be found in \texttt{core.rs}.
\section{Simulations}
\label{appendix:simulation}
\subsubsection{Good Network Simulations}
\label{appendix:fastNetwork}
\quad We will now go over the simulation tests that were run to test \tarpon's dynamic scheduler. We first run \tarpon under its ideal conditions (i.e., all validators reach their leader slot within the timeout). After running the simulation, we make sure that each validators have an increasing interval for the asynchronous round (i.e., the frequency of the asynchronous mode decreases), and that each validator has the same asynchronous interval history. This test can be found in \texttt{net\_sync.rs}.
\subsubsection{Slow Network Simulations}
\label{appendix:slowNetwork}
\quad We also ran \tarpon under a high-latency simulated network with $10$ validators. In these simulations, $f$ validators experience high-level of latency, causing them to miss their leader slot, subsequentially making every other validator mark the slot as \texttt{to-skip}. Specifically, the function \texttt{simulated\_network\_connect.all\_high\_latency()} connects some validators with low-latency, and some validators with high-latency After running the simulation, we make sure that each validators have a decreasing interval for the asynchronous round (i.e., the frequency of the asynchronous mode increases), and that each validator has the same asynchronous interval history, including the slow validators. This test can be found in \texttt{net\_sync.rs}.
\chapter{Consensus Protocol Structures}
\label{Appendix:ConsProtStruct}
\subsubsection{PBFT}
\begin{figure}[H]
    \centering
    \begin{tikzpicture}[
scale=0.6,
    >=stealth,
]

\node[font=\bfseries] at (2, 4.5) {\footnotesize request};
\node[font=\bfseries] at (6, 4.5) {\footnotesize pre-prepare};
\node[font=\bfseries] at (10, 4.5) {\footnotesize prepare};
\node[font=\bfseries] at (14, 4.5) {\footnotesize commit};
\node[font=\bfseries] at (18, 4.5) {\footnotesize reply};

\node[font=\bfseries] at (-1.2, 3.5) {Client};
\node[font=\bfseries] at (-1.2, 2.5) {$p_0$};
\node[font=\bfseries] at (-1.2, 1.5) {$p_1$};
\node[font=\bfseries] at (-1.2, 0.5) {$p_2$};
\node[font=\bfseries] at (-1.2, -0.5) {$p_3$};

\draw[->, thick] (0, 3.5) -- (20, 3.5);
\draw[->, thick] (0, 2.5) -- (20, 2.5);
\draw[->, thick] (0, 1.5) -- (20, 1.5);
\draw[->, thick] (0, 0.5) -- (20, 0.5);

\draw[thick] (0, -0.5) -- (0.8, -0.5);
\node[font=\Large] at (1, -0.5) {$\times$};
\draw[->, thick] (1.2, -0.5) -- (20, -0.5);

\foreach \x in {4, 8, 12, 16} {
    \draw[dashed, gray] (\x, 4) -- (\x, -1);
}

\coordinate (c2) at (0, 3.5);
\coordinate (n02) at (4, 2.5);
\coordinate (n12) at (4, 1.5);
\coordinate (n22) at (4, 0.5);
\coordinate (n32) at (4, -0.5);

\coordinate (c4) at (8, 3.5);
\coordinate (n04) at (8, 2.5);
\coordinate (n14) at (8, 1.5);
\coordinate (n24) at (8, 0.5);
\coordinate (n34) at (8, -0.5);

\coordinate (c6) at (12, 3.5);
\coordinate (n06) at (12, 2.5);
\coordinate (n16) at (12, 1.5);
\coordinate (n26) at (12, 0.5);
\coordinate (n36) at (12, -0.5);

\coordinate (c8) at (16, 3.5);
\coordinate (n08) at (16, 2.5);
\coordinate (n18) at (16, 1.5);
\coordinate (n28) at (16, 0.5);
\coordinate (n38) at (16, -0.5);

\coordinate (c10) at (20, 3.5);
\coordinate (n010) at (20, 2.5);
\coordinate (n110) at (20, 1.5);
\coordinate (n210) at (20, 0.5);
\coordinate (n310) at (20, -0.5);

\draw[->, green!60!black, thick] (c2) -- (n02);

\draw[->, orange, thick] (n02) -- (n14);
\draw[->, orange, thick] (n02) -- (n24);
\draw[->, orange, thick] (n02) -- (n34);

\draw[->, blue, thick] (n04) -- (n16);
\draw[->, green!60!black, thick] (n04) -- (n26);
\draw[->, green!60!black, thick] (n14) -- (n06);
\draw[->, blue, thick] (n14) -- (n26);
\draw[->, blue, thick] (n24) -- (n06);
\draw[->, green!60!black, thick] (n24) -- (n16);

\draw[->, red, thick] (n06) -- (n18);
\draw[->, cyan, thick] (n06) -- (n28);
\draw[->, cyan, thick] (n16) -- (n08);
\draw[->, red, thick] (n16) -- (n28);
\draw[->, orange, thick] (n26) -- (n08);
\draw[->, cyan, thick] (n26) -- (n18);

\draw[->, red, thick] (n08) -- (c10);
\draw[->, orange, thick] (n18) -- (c10);
\draw[->, yellow!80!black, thick] (n28) -- (c10);

\end{tikzpicture}
    \caption{Depiction of how consensus is reached in PBFT \cite{PBFT}. Process $p_3$ is a Byzantine/faulty node, thus, it does not communicate with the rest.}
    \label{fig:PBFT}
\end{figure}
\vspace{5em}
\subsubsection{Hotstuff}
\begin{figure}[H]
    \centering
    \begin{tikzpicture}[
    >=stealth,
    scale=.6
]
\node[font=\bfseries] at (2, 4.5) {\footnotesize request};
\node[font=\bfseries] at (6, 4.5) {\footnotesize prepare};
\node[font=\bfseries] at (10, 4.5) {\footnotesize pre-commit};
\node[font=\bfseries] at (14, 4.5) {\footnotesize commit};
\node[font=\bfseries] at (18, 4.5) {\footnotesize decide};
\node[font=\bfseries] at (-1.2, 3.5) {Client};
\node[font=\bfseries] at (-1.2, 2.5) {$L$};
\node[font=\bfseries] at (-1.2, 1.5) {$p_1$};
\node[font=\bfseries] at (-1.2, 0.5) {$p_2$};
\node[font=\bfseries] at (-1.2, -0.5) {$p_3$};
\draw[->, thick] (0, 3.5) -- (20, 3.5);
\draw[->, thick] (0, 2.5) -- (20, 2.5);
\draw[->, thick] (0, 1.5) -- (20, 1.5);
\draw[->, thick] (0, 0.5) -- (20, 0.5);
\draw[thick] (0, -0.5) -- (0.8, -0.5);
\node[font=\Large] at (1, -0.5) {$\times$};
\draw[->, thick] (1.2, -0.5) -- (20, -0.5);
\foreach \x in {4, 8, 12, 16} {
    \draw[dashed, gray] (\x, 4) -- (\x, -1);
}
\foreach \x in {6, 10, 14} {
    \draw[dotted, gray!50] (\x, 4) -- (\x, -1);
}

\coordinate (c2) at (0, 3.5);
\coordinate (L4) at (4, 2.5);

\coordinate (L5) at (6, 2.5);
\coordinate (n15) at (6, 1.5);
\coordinate (n25) at (6, 0.5);
\coordinate (n35) at (6, -0.5);
\coordinate (L7) at (8, 2.5);
\coordinate (n17) at (6, 1.5);
\coordinate (n27) at (6, 0.5);

\coordinate (L9) at (8, 2.5);
\coordinate (n19) at (10, 1.5);
\coordinate (n29) at (10, 0.5);
\coordinate (n39) at (10, -0.5);
\coordinate (L11) at (12, 2.5);
\coordinate (n111) at (10, 1.5);
\coordinate (n211) at (10, 0.5);

\coordinate (L13) at (12, 2.5);
\coordinate (n113) at (14, 1.5);
\coordinate (n213) at (14, 0.5);
\coordinate (n313) at (14, -0.5);
\coordinate (L15) at (16, 2.5);
\coordinate (n115) at (14, 1.5);
\coordinate (n215) at (14, 0.5);

\coordinate (c20) at (20, 3.5);
\coordinate (L17) at (16, 2.5);
\coordinate (n117) at (18, 1.5);
\coordinate (n217) at (18, 0.5);
\coordinate (n317) at (18, -0.5);
\coordinate (L20) at (20, 2.5);
\coordinate (n120) at (18, 1.5);
\coordinate (n220) at (18, 0.5);
\coordinate (n320) at (18, -0.5);

\draw[->, green!60!black, thick] (c2) -- (L4);

\draw[->, orange, thick] (L4) -- (n15);
\draw[->, orange, thick] (L4) -- (n25);
\draw[->, orange, thick] (L4) -- (n35);
\draw[->, blue, thick] (n17) -- (L7);
\draw[->, blue, thick] (n27) -- (L7);

\draw[->, red, thick] (L7) -- (n19);
\draw[->, red, thick] (L7) -- (n29);
\draw[->, red, thick] (L7) -- (n39);
\draw[->, cyan, thick] (n111) -- (L11);
\draw[->, cyan, thick] (n211) -- (L11);

\draw[->, green!60!black, thick] (L11) -- (n113);
\draw[->, green!60!black, thick] (L11) -- (n213);
\draw[->, green!60!black, thick] (L11) -- (n313);
\draw[->, purple, thick] (n115) -- (L15);
\draw[->, purple, thick] (n215) -- (L15);

\draw[->, yellow!80!black, thick] (L15) -- (n117);
\draw[->, yellow!80!black, thick] (L15) -- (n217);
\draw[->, yellow!80!black, thick] (L15) -- (n317);
\draw[->, red, thick] (L17) -- (c20);
\draw[->, yellow!80!black, thick] (n120) -- (c20);
\draw[->, yellow!80!black, thick] (n220) -- (c20);

\end{tikzpicture}
    \caption{Depiction of how consensus is reached in Hotstuff \cite{HOTSTUFF}. Process $p_3$ is a Byzantine/faulty node, thus, it does not communicate with the rest.}
    \label{fig:HotStuff}
\end{figure}
\subsubsection{Tusk}
\begin{figure}[H]
    \centering
    \begin{tikzpicture}[
    node distance=1.5cm and 2.5cm,
    validator/.style={rectangle, draw, minimum size=0.8cm},
    propose/.style={validator, thick},
    boost/.style={validator, draw=black},
    vote/.style={validator, draw=orange, thick},
    certify/.style={validator, draw=green!60!black, thick},
    supporting_validator/.style={validator, fill=green!20},
    support/.style={->, green!60!black, thick},
    >=Stealth,
    first/.style={validator, fill=green!20},
    second/.style={validator, fill=blue!20},
    vote_support/.style={->, green!60!black, thick},
    skipping_validator/.style={validator, fill=pink!40},
    cert_support/.style={->, blue, thick},
    >=Stealth
]
\footnotesize
\node at (0, 5) {$r$};
\node at (2.5, 5) {$r+1$};
\node at (5, 5) {$r+2$};
\node at (7.5, 5) {$r+3$};
\node at (10, 5) {$r+4$};

\node at (-1.5, 4) {$v_0$};
\node at (-1.5, 3) {$v_1$};
\node at (-1.5, 2) {$v_2$};
\node at (-1.5, 1) {$v_3$};

\node[boost] (v0r) at (0, 4) {};
\node[boost] (v1r) at (0, 3) {};
\node[boost] (v2r) at (0, 2) {};
\node[boost, first] (v3r) at (0, 1) {$L_0$};

\node[boost] (v0r1) at (2.5, 4) {};
\node[boost] (v1r1) at (2.5, 3) {};
\node[boost] (v2r1) at (2.5, 2) {};
\node[boost, first] (v3r1) at (2.5, 1) {};

\node[boost] (v0r2) at (5, 4) {};
\node[boost, second] (v1r2) at (5, 3) {$L_1$};
\node[boost] (v2r2) at (5, 2) {};
\node[boost] (v3r2) at (5, 1) {};

\node[boost, second] (v0r3) at (7.5, 4) {};
\node[boost, second] (v1r3) at (7.5, 3) {};
\node[boost, second] (v2r3) at (7.5, 2) {};
\node[boost, second] (v3r3) at (7.5, 1) {};

\node[boost] (v0r4) at (10, 4) {};
\node[boost] (v1r4) at (10, 3) {};
\node[boost] (v2r4) at (10, 2) {};
\node[boost] (v3r4) at (10, 1) {};

\draw[->] (v0r1.west) -- (v0r);
\draw[->] (v0r1.west) -- (v1r);
\draw[->] (v0r1.west) -- (v2r);
\draw[->] (v1r1.west) -- (v0r);
\draw[->] (v1r1.west) -- (v1r);
\draw[->] (v1r1.west) -- (v2r);
\draw[->] (v2r1.west) -- (v0r);
\draw[->] (v2r1.west) -- (v1r);
\draw[->] (v2r1.west) -- (v2r);
\draw[->] (v3r1.west) -- (v1r);
\draw[->] (v3r1.west) -- (v2r);
\draw[vote_support] (v3r1.west) -- (v3r);

\draw[->] (v0r2.west) -- (v0r1);
\draw[->] (v0r2.west) -- (v1r1);
\draw[->] (v0r2.west) -- (v2r1);
\draw[->] (v1r2.west) -- (v0r1);
\draw[->] (v1r2.west) -- (v1r1);
\draw[vote_support] (v1r2.west) -- (v3r1);
\draw[->] (v2r2.west) -- (v0r1);
\draw[->] (v2r2.west) -- (v1r1);
\draw[->] (v2r2.west) -- (v2r1);
\draw[->] (v3r2.west) -- (v1r1);
\draw[->] (v3r2.west) -- (v2r1);
\draw[->] (v3r2.west) -- (v3r1);

\draw[->] (v0r3.west) -- (v0r2);
\draw[cert_support] (v0r3.west) -- (v1r2);
\draw[->] (v0r3.west) -- (v2r2);
\draw[->] (v1r3.west) -- (v0r2);
\draw[cert_support] (v1r3.west) -- (v1r2);
\draw[->] (v1r3.west) -- (v2r2);
\draw[->] (v2r3.west) -- (v0r2);
\draw[cert_support] (v2r3.west) -- (v1r2);
\draw[->] (v2r3.west) -- (v2r2);
\draw[cert_support] (v3r3.west) -- (v1r2);
\draw[->] (v3r3.west) -- (v2r2);
\draw[->] (v3r3.west) -- (v3r2);

\draw[->] (v0r4.west) -- (v0r3);
\draw[->] (v0r4.west) -- (v1r3);
\draw[->] (v0r4.west) -- (v2r3);
\draw[->] (v1r4.west) -- (v0r3);
\draw[->] (v1r4.west) -- (v1r3);
\draw[->] (v1r4.west) -- (v2r3);
\draw[->] (v2r4.west) -- (v0r3);
\draw[->] (v2r4.west) -- (v1r3);
\draw[->] (v2r4.west) -- (v2r3);
\draw[->] (v3r4.west) -- (v1r3);
\draw[->] (v3r4.west) -- (v2r3);
\draw[->] (v3r4.west) -- (v3r3);

\node[draw,green!60!black, dashed, rounded corners, fit=(v0r) (v3r) (v0r2) (v3r2), inner sep=8pt, label={[align=center]below:$w$}] {};
\node[draw, blue!60!black, dashed, rounded corners, fit=(v0r2) (v3r2) (v0r4) (v3r4), inner sep=8pt, label={[align=center]below:$w+1$}] {};

\end{tikzpicture}
    \caption{Illustration of a DAG structure as interpreted by the Tusk protocol \cite{TUSK}. The proposed leader block $L_0$ (by validator $v_3$) does not have enough votes to be committed. The proposed leader block $L_1$ (by validator $v_1$) has $f+1$ votes necessary to be committed, but, since through the causal history of $L_1$ we encounter $L_0$, $L_0$ is committed before $L_1$. Thus the final sequence would be $\{L_0,L_1\}$.}
    \label{fig:TUSK}
\end{figure}
\subsubsection{Bullshark Partially-Synchronous}
\begin{figure} [H]
    \centering
    \begin{tikzpicture}[
    node distance=1.5cm and 2.5cm,
    validator/.style={draw, minimum size=0.8cm},
    propose/.style={validator, thick},
    boost/.style={validator, draw=black},
    vote/.style={validator, draw=orange, thick},
    certify/.style={validator, draw=green!60!black, thick},
    supporting_validator/.style={validator, fill=green!20},
    support/.style={->, blue, thick},
    >=Stealth,
    first/.style={validator, fill=green!20},
    second/.style={validator, fill=blue!20},
    vote_support/.style={->, blue, thick},
    skipping_validator/.style={validator, fill=pink!40},
    cert_support/.style={->, green!60!black, thick},
    >=Stealth
]
\footnotesize
\node at (0, 5) {$r$};
\node at (2.5, 5) {$r+1$};
\node at (5, 5) {$r+2$};
\node at (7.5, 5) {$r+3$};
\node at (10, 5) {$r+4$};
\node at (12.5, 5) {$r+5$};
\node at (-1.5, 4) {$v_0$};
\node at (-1.5, 3) {$v_1$};
\node at (-1.5, 2) {$v_2$};
\node at (-1.5, 1) {$v_3$};
\node[boost] (v0r) at (0, 4) {};
\node[boost, first] (v1r) at (0, 3) {$L_0$};
\node[boost] (v2r) at (0, 2) {};
\node[boost] (v3r) at (0, 1) {};
\node[boost] (v0r1) at (2.5, 4) {};
\node[boost, second] (v1r1) at (2.5, 3) {};
\node[boost] (v2r1) at (2.5, 2) {};
\node[boost] (v0r2) at (5, 4) {};
\node[boost] (v1r2) at (5, 3) {};
\node[boost] (v2r2) at (5, 2) {};
\node[boost, first, dashed] (v3r2) at (5, 1) {$L_1$};
\node[boost] (v0r3) at (7.5, 4) {};
\node[boost] (v1r3) at (7.5, 3) {};
\node[boost] (v2r3) at (7.5, 2) {};
\node[boost] (v0r4) at (10, 4) {};
\node[boost] (v1r4) at (10, 3) {};
\node[boost, first] (v2r4) at (10, 2) {$L_2$};
\node[boost, second] (v0r5) at (12.5, 4) {};
\node[boost, second] (v1r5) at (12.5, 3) {};
\node[boost, second] (v2r5) at (12.5, 2) {};
\node[boost, second] (v3r5) at (12.5, 1) {};
\draw[->] (v0r1.west) -- (v0r);
\draw[->] (v0r1.west) -- (v1r);
\draw[->] (v0r1.west) -- (v2r);
\draw[->] (v1r1.west) -- (v0r);
\draw[support] (v1r1.west) -- (v1r);
\draw[->] (v1r1.west) -- (v2r);
\draw[->] (v2r1.west) -- (v0r);
\draw[->] (v2r1.west) -- (v2r);
\draw[->] (v2r1.west) -- (v3r);
\draw[->] (v0r2.west) -- (v0r1);
\draw[->] (v0r2.west) -- (v1r1);
\draw[->] (v0r2.west) -- (v2r1);
\draw[->] (v1r2.west) -- (v0r1);
\draw[->] (v1r2.west) -- (v1r1);
\draw[->] (v1r2.west) -- (v2r1);
\draw[->] (v2r2.west) -- (v0r1);
\draw[vote_support] (v2r2.west) -- (v1r1);
\draw[->] (v2r2.west) -- (v2r1);
\draw[->] (v3r2.west) -- (v0r1);
\draw[->] (v3r2.west) -- (v1r1);
\draw[->] (v3r2.west) -- (v2r1);
\draw[->] (v0r3.west) -- (v0r2);
\draw[->] (v0r3.west) -- (v1r2);
\draw[->] (v0r3.west) -- (v2r2);
\draw[->] (v1r3.west) -- (v0r2);
\draw[->] (v1r3.west) -- (v1r2);
\draw[->] (v1r3.west) -- (v2r2);
\draw[->] (v2r3.west) -- (v0r2);
\draw[->] (v2r3.west) -- (v1r2);
\draw[vote_support] (v2r3.west) -- (v2r2);
\draw[->] (v0r4.west) -- (v0r3);
\draw[->] (v0r4.west) -- (v1r3);
\draw[->] (v0r4.west) -- (v2r3);
\draw[->] (v1r4.west) -- (v0r3);
\draw[->] (v1r4.west) -- (v1r3);
\draw[->] (v1r4.west) -- (v2r3);
\draw[->] (v2r4.west) -- (v0r3);
\draw[->] (v2r4.west) -- (v1r3);
\draw[vote_support] (v2r4.west) -- (v2r3);
\draw[->] (v0r5.west) -- (v0r4);
\draw[->] (v0r5.west) -- (v1r4);
\draw[support] (v0r5.west) -- (v2r4);
\draw[->] (v1r5.west) -- (v0r4);
\draw[->] (v1r5.west) -- (v1r4);
\draw[support] (v1r5.west) -- (v2r4);
\draw[->] (v2r5.west) -- (v0r4);
\draw[->] (v2r5.west) -- (v1r4);
\draw[support] (v2r5.west) -- (v2r4);
\draw[->] (v3r5.west) -- (v0r4);
\draw[->] (v3r5.west) -- (v1r4);
\draw[support] (v3r5.west) -- (v2r4);
\node[draw, dashed, rounded corners, fit=(v0r) (v3r) (v0r1) (v2r1), inner sep=8pt, label={[align=center]below:$w$}] {};
\node[draw, dashed, rounded corners, fit=(v0r2) (v3r2) (v0r3) (v2r3), inner sep=8pt, label={[align=center]below:$w+1$}] {};
\node[draw, dashed, rounded corners, fit=(v0r4) (v2r4) (v0r5) (v3r5), inner sep=8pt, label={[align=center]below:$w+2$}] {};
\end{tikzpicture}
    \caption{Illustration of a DAG structure as interpreted by the partially-synchronous version of the Bullshark protocol \cite{PARTIALLYSYNCHRONOUSBULL}. The proposed leader block $L_0$ (by validator $v_1$) does not have the necessary votes to be committed, and the proposed leader block $L_1$ (by validator $v_3$) has no votes. The proposed leader block $L_2$ (by validator $v_2$) has the necessary $2f+1$ votes to be committed, but since $L_0$ appears in the causal history of $L_2$, $L_0$ is committed first. $L_1$ can be safely skipped as it does not appear in the causal history of $L_2$. Thus, the final sequence is $\{L_0,L_2\}$ \cite{PARTIALLYSYNCHRONOUSBULL}.}
    \label{fig:bullshark}
\end{figure}
\subsubsection{Bullshark}
\begin{figure} [H]
    \centering
    \begin{tikzpicture}[
    node distance=1.5cm and 2.5cm,
    validator/.style={draw, minimum size=0.8cm},
    propose/.style={validator},
    boost/.style={validator, draw=black},
    vote/.style={validator, draw=orange, thick},
    certify/.style={validator, draw=green!60!black, thick},
    supporting_validator/.style={validator, fill=green!20},
    support/.style={->, green!60!black, thick},
    >=Stealth,
    first/.style={validator, fill=green!20},
    second/.style={validator, fill=blue!20},
    leader/.style={validator, fill=green!20},
    vote_support/.style={->, blue, thick},
    skipping_validator/.style={validator, fill=pink!40},
    cert_support/.style={->, green!60!black, thick},
    >=Stealth,
]
\footnotesize
\node at (0, 5) {$r$};
\node at (2, 5) {$r+1$};
\node at (4, 5) {$r+2$};
\node at (6, 5) {$r+3$};
\node at (8, 5) {$r+4$};
\node at (10, 5) {$r+5$};
\node at (12, 5) {$r+6$};
\node at (14, 5) {$r+7$};

\node at (-1.5, 4) {$v_0$};
\node at (-1.5, 3) {$v_1$};
\node at (-1.5, 2) {$v_2$};
\node at (-1.5, 1) {$v_3$};

\node[propose, first] (v0r) at (0, 4) {$S_{0a}$};
\node[propose] (v1r) at (0, 3) {};
\node[propose] (v2r) at (0, 2) {};
\node[propose] (v3r) at (0, 1) {$F_{0}$};

\node[propose, first] (v0r1) at (2, 4) {};
\node[propose, first] (v1r1) at (2, 3) {};
\node[propose, first] (v2r1) at (2, 2) {};
\node[propose] (v3r1) at (2, 1) {};

\node[propose] (v0r2) at (4, 4) {};
\node[propose] (v1r2) at (4, 3) {};
\node[propose] (v2r2) at (4, 2) {};
\node[propose] (v3r2) at (4, 1) {$S_{0b}$};

\node[propose] (v0r3) at (6, 4) {};
\node[propose] (v1r3) at (6, 3) {};
\node[propose] (v2r3) at (6, 2) {};
\node[supporting_validator] (v3r3) at (6, 1) {};

\node[propose, second] (v0r4) at (8, 4) {$F_1$};
\node[propose] (v1r4) at (8, 3) {};
\node[propose] (v2r4) at (8, 2) {};
\node[propose] (v3r4) at (8, 1) {$S_{2a}$};

\node[propose] (v0r5) at (10, 4) {};
\node[propose] (v1r5) at (10, 3) {};
\node[propose] (v2r5) at (10, 2) {};
\node[propose] (v3r5) at (10, 1) {};

\node[propose] (v0r6) at (12, 4) {};
\node[propose] (v1r6) at (12, 3) {};
\node[propose] (v2r6) at (12, 2) {$S_{2b}$};
\node[propose] (v3r6) at (12, 1) {};

\node[propose, second] (v0r7) at (14, 4) {};
\node[propose, second] (v1r7) at (14, 3) {};
\node[propose, second] (v2r7) at (14, 2) {};
\node[propose] (v3r7) at (14, 1) {};

\draw[cert_support] (v0r1.west) -- (v0r);
\draw[->] (v0r1.west) -- (v1r);
\draw[->] (v0r1.west) -- (v2r);

\draw[cert_support] (v1r1.west) -- (v0r);
\draw[->] (v1r1.west) -- (v1r);
\draw[->] (v1r1.west) -- (v2r);

\draw[cert_support] (v2r1.west) -- (v0r);
\draw[->] (v2r1.west) -- (v1r);
\draw[->] (v2r1.west) -- (v2r);

\draw[->] (v3r1.west) -- (v1r);
\draw[->] (v3r1.west) -- (v2r);
\draw[->] (v3r1.west) -- (v3r);

\draw[->] (v0r2.west) -- (v0r1);
\draw[->] (v0r2.west) -- (v1r1);
\draw[->] (v0r2.west) -- (v2r1);

\draw[->] (v1r2.west) -- (v0r1);
\draw[->] (v1r2.west) -- (v1r1);
\draw[->] (v1r2.west) -- (v2r1);

\draw[->] (v2r2.west) -- (v0r1);
\draw[->] (v2r2.west) -- (v1r1);
\draw[->] (v2r2.west) -- (v2r1);

\draw[->] (v3r2.west) -- (v1r1);
\draw[->] (v3r2.west) -- (v2r1);
\draw[->] (v3r2.west) -- (v3r1);

\draw[->] (v0r3.west) -- (v0r2);
\draw[->] (v0r3.west) -- (v1r2);
\draw[->] (v0r3.west) -- (v2r2);

\draw[->] (v1r3.west) -- (v0r2);
\draw[->] (v1r3.west) -- (v1r2);
\draw[->] (v1r3.west) -- (v2r2);

\draw[->] (v2r3.west) -- (v0r2);
\draw[->] (v2r3.west) -- (v1r2);
\draw[->] (v2r3.west) -- (v2r2);

\draw[->] (v3r3.west) -- (v1r2);
\draw[->] (v3r3.west) -- (v2r2);
\draw[cert_support] (v3r3.west) -- (v3r2);

\draw[->] (v0r4.west) -- (v0r3);
\draw[->] (v0r4.west) -- (v1r3);
\draw[->] (v0r4.west) -- (v2r3);

\draw[->] (v1r4.west) -- (v0r3);
\draw[->] (v1r4.west) -- (v1r3);
\draw[->] (v1r4.west) -- (v2r3);

\draw[->] (v2r4.west) -- (v0r3);
\draw[->] (v2r4.west) -- (v1r3);
\draw[->] (v2r4.west) -- (v2r3);

\draw[->] (v3r4.west) -- (v1r3);
\draw[->] (v3r4.west) -- (v2r3);
\draw[->] (v3r4.west) -- (v3r3);

\draw[vote_support] (v0r5.west) -- (v0r4);
\draw[->] (v0r5.west) -- (v1r4);
\draw[->] (v0r5.west) -- (v2r4);

\draw[vote_support] (v1r5.west) -- (v0r4);
\draw[->] (v1r5.west) -- (v1r4);
\draw[->] (v1r5.west) -- (v2r4);

\draw[vote_support] (v2r5.west) -- (v0r4);
\draw[->] (v2r5.west) -- (v1r4);
\draw[->] (v2r5.west) -- (v2r4);

\draw[->] (v3r5.west) -- (v1r4);
\draw[->] (v3r5.west) -- (v2r4);
\draw[->] (v3r5.west) -- (v3r4);

\draw[vote_support] (v0r6.west) -- (v0r5);
\draw[->] (v0r6.west) -- (v1r5);
\draw[->] (v0r6.west) -- (v2r5);

\draw[->] (v1r6.west) -- (v0r5);
\draw[vote_support] (v1r6.west) -- (v1r5);
\draw[->] (v1r6.west) -- (v2r5);

\draw[->] (v2r6.west) -- (v0r5);
\draw[->] (v2r6.west) -- (v1r5);
\draw[vote_support] (v2r6.west) -- (v2r5);

\draw[->] (v3r6.west) -- (v1r5);
\draw[->] (v3r6.west) -- (v2r5);
\draw[->] (v3r6.west) -- (v3r5);

\draw[vote_support] (v0r7.west) -- (v0r6);
\draw[->] (v0r7.west) -- (v1r6);
\draw[->] (v0r7.west) -- (v2r6);

\draw[->] (v1r7.west) -- (v0r6);
\draw[vote_support] (v1r7.west) -- (v1r6);
\draw[->] (v1r7.west) -- (v2r6);

\draw[->] (v2r7.west) -- (v0r6);
\draw[->] (v2r7.west) -- (v1r6);
\draw[vote_support] (v2r7.west) -- (v2r6);

\draw[->] (v3r7.west) -- (v1r6);
\draw[->] (v3r7.west) -- (v2r6);
\draw[->] (v3r7.west) -- (v3r6);

\node[draw, dashed, rounded corners, fit=(v0r) (v3r) (v0r3) (v3r3), inner sep=8pt, label={[align=center]below:$w$}] {};

\node[draw, dashed, rounded corners, fit=(v0r4) (v3r4) (v0r7) (v3r7), inner sep=8pt, label={[align=center]below:$w+1$}] {};

\end{tikzpicture}
    \caption{Illustration of a DAG structure as interpreted by the Bullshark protocol \cite{BULLSHARK}. [$S_{0a}$,$S_{0b}$,$S_{1a}$,$S_{1b}$] represent the \texttt{steady-state} leaders in sequential order, while [$F_0,F_1$] represent the \texttt{fallback} leaders. $S_{0a}$ is directly committed as it has the $2f+1$ necessary votes to be committed, while $S_{0b}$ does not. Since the second \texttt{steady-state} leader of wave $w$ failed to commit, wave $w+1$ will utilise the \texttt{fallback} leader $F_1$. Once round $r+7$ is reached, leader $F_1$ is revealed by combining the shares of the common coin. Since there are enough validators voting for $F_1$, $F_1$ is committed, and since $S_{0b}$ does not appear in the causal history of $F_1$, it can be safely skipped. The next wave $w+2$ will go back to using \texttt{steady-state} leaders.}
    \label{fig:bullshark_both_ways}
\end{figure}
\subsubsection{Mysticeti}
\begin{figure} [H]
    \centering
    \begin{tikzpicture}[
    node distance=1.5cm and 2.5cm,
    validator/.style={rectangle, draw, minimum size=0.8cm},
    propose/.style={validator},
    boost/.style={validator, draw=black},
    vote/.style={validator, draw=orange, thick},
    certify/.style={validator, draw=green!60!black, thick},
    supporting_validator/.style={validator, fill=green!20},
    support/.style={->, green!60!black, thick},
    >=Stealth,
    leader/.style={validator, fill=green!20},
    vote_support/.style={->, orange, thick},
    skipping_validator/.style={validator, fill=pink!40},
    cert_support/.style={->, green!60!black, thick},
    >=Stealth
]
\footnotesize

\node at (0, 5) {$r$};
\node at (2.5, 5) {$r+1$};
\node at (5, 5) {$r+2$};

\node at (-1.5, 4) {$v_0$};
\node at (-1.5, 3) {$v_1$};
\node at (-1.5, 2) {$v_2$};
\node at (-1.5, 1) {$v_3$};

\node[boost] (v0r) at (0, 4) {};
\node[boost] (v1r) at (0, 3) {$L1_{b}$};
\node[boost] (v2r) at (0, 2) {};
\node[boost] (v3r) at (0, 1) {$ L1_{a}$};

\node[boost] (v0r1) at (2.5, 4) {$L2_{a}$};
\node[boost] (v1r1) at (2.5, 3) {};
\node[boost] (v2r1) at (2.5, 2) {$L2_{b}$};
\node[boost] (v3r1) at (2.5, 1) {};

\node[boost] (v0r2) at (5, 4) {};
\node[boost] (v1r2) at (5, 3) {$L3_{a}$};
\node[boost] (v2r2) at (5, 2) {};
\node[boost] (v3r2) at (5, 1) {$L3_{b}$};

\draw[->] (v0r1.west) -- (v0r);
\draw[->] (v0r1.west) -- (v1r);
\draw[->] (v0r1.west) -- (v2r);

\draw[->] (v1r1.west) -- (v0r);
\draw[->] (v1r1.west) -- (v1r);
\draw[->] (v1r1.west) -- (v2r);

\draw[->] (v2r1.west) -- (v0r);
\draw[->] (v2r1.west) -- (v1r);
\draw[->] (v2r1.west) -- (v2r);

\draw[->] (v3r1.west) -- (v1r);
\draw[->] (v3r1.west) -- (v2r);
\draw[->] (v3r1.west) -- (v3r);

\draw[->] (v0r2.west) -- (v0r1);
\draw[->] (v0r2.west) -- (v1r1);
\draw[->] (v0r2.west) -- (v2r1);

\draw[->] (v1r2.west) -- (v0r1);
\draw[->] (v1r2.west) -- (v1r1);
\draw[->] (v1r2.west) -- (v2r1);

\draw[->] (v2r2.west) -- (v0r1);
\draw[->] (v2r2.west) -- (v1r1);
\draw[->] (v2r2.west) -- (v2r1);

\draw[->] (v3r2.west) -- (v1r1);
\draw[->] (v3r2.west) -- (v2r1);
\draw[->] (v3r2.west) -- (v3r1);

\node[anchor=east] at (-0.2,0.4-0) {$\scriptstyle w$};
\node[anchor=east] at (2.3,0.4-0.2) {$\scriptstyle w+1$};
\node[anchor=east] at (4.8,0.4-0.4) {$\scriptstyle w+2$};

\draw[black, line width=1pt] (0,0.4) -- (5,0.4);
\draw[black, line width=1pt] (2.5,0.2) -- (5,0.2);
\draw[black, line width=1pt, dotted] (5,0.2) -- (5.5,0.2);
\draw[black, line width=1pt] (5,0) -- (5.5,0);
\draw[black, line width=1pt, dotted] (5,0) -- (6,0);

\end{tikzpicture}
    \caption{Illustration of the DAG structure as interpreted by the \textsf{Mysticeti}\xspace protocol. Of note, we can see how \textsf{Mysticeti}\xspace utilises two leaders per round (\textbf{multi-leader}), and how each round signifies the beginning of a new wave (\textbf{pipeline}).}
    \label{fig:}
\end{figure}
\chapter{\tarpon Additional Figures}
\label{appendix:TARPADDITIONAL}
\vspace{3em}
\subsubsection{Simulation Test Visualization}
\begin{figure}[H]
\centering
\begin{tikzpicture}[scale=0.8]
\begin{axis}[
    width=20cm,
    height=9cm,
    title={\large\textbf{Consensus Latency Performance Across Different Modes}},
    title style={yshift=8pt},
    ylabel={\textbf{Transaction Commit Latency} ($ms$)},
    xlabel={\textbf{Wave Length} ($w$)},
    xtick={3,4,5},
    xticklabels={3,4,5},
    x tick label style={font=\normalsize},
    y tick label style={font=\normalsize},
    ylabel style={font=\normalsize},
    xlabel style={font=\normalsize},
    legend style={
        at={(0.02,0.98)}, 
        anchor=north west,
        font=\normalsize,
        draw=black,
        fill=white,
        fill opacity=0.95,
        rounded corners=3pt,
        inner sep=6pt,
        line width=0.8pt
    },
    ymin=300,
    ymax=580,
    xmin=2.7,
    xmax=5.3,
    tick style={black, line width=0.6pt},
    ymajorgrids=true,
    xmajorgrids=true,
    grid style={gray!20, line width=0.4pt},
    axis background/.style={fill=gray!2},
    axis line style={line width=0.8pt}
]
\draw[
    color=green!60!black,
    line width=2pt,
    dashed
] (axis cs:\pgfkeysvalueof{/pgfplots/xmin},350.8) -- (axis cs:\pgfkeysvalueof{/pgfplots/xmax},350.8);
\addplot[
    color=green!60!black,
    mark=diamond*,
    mark size=5pt,
    line width=2pt,
    mark options={fill=green!60!black, draw=green!80!red, line width=0.2pt}
] coordinates {
    (3, 350.8)
};
\addplot[
    color=blue,
    mark=square*,
    mark size=3pt,
    line width=2pt,
    mark options={fill=blue!70, draw=blue!80!red, line width=0.2pt}
] coordinates {
    (4, 355.2)
    (5, 362.2)
};

\addplot[
    color=red,
    mark=triangle*,
    mark size=4pt,
    line width=2pt,
    mark options={fill=red!60, draw=red!75!black, line width=0.2pt}
] coordinates {
    (4, 439.5)
    (5, 530.9)
};
\draw[dashed, black!60, line width=1pt] (axis cs:4,355.2) -- (axis cs:4,439.5);
\node[font=\large, black!70, anchor=west] at (axis cs:4.02,397) {+23.7\%};
\draw[dashed, black!60, line width=1pt] (axis cs:5,362.2) -- (axis cs:5,530.9);
\node[font=\large, black!70, anchor=west] at (axis cs:5.02,446) {+46.6\%};
\legend{
    \textit{p-async} mode,
    \textit{dual} mode,
    \textit{async} mode
}
\end{axis}
\end{tikzpicture}
\caption{Visualisation of \tarpon's simulation results.}
\label{visualizationtarpon}
\end{figure}
\newpage
\subsubsection{Direct Rule: Asynchronous Mode}
\label{tarpon_direct_rule}
\begin{figure}[H]
\centering
\begin{tikzpicture}[
    node distance=1.5cm and 2.5cm,
    validator/.style={draw, minimum size=0.8cm},
    propose/.style={validator, thick},
    boost/.style={validator, draw=black},
    vote/.style={validator, draw=orange, thick},
    certify/.style={validator, draw=green!60!black, thick},
    supporting_validator/.style={validator, fill=green!20},
    support/.style={->, green!60!black, thick},
    >=Stealth,
    leader/.style={validator, fill=green!20},
    vote_support/.style={->, orange, thick},
    skipping_validator/.style={validator, fill=pink!40},
    cert_support/.style={->, green!60!black, thick},
    >=Stealth
]
\footnotesize
\node at (0, 5) {$r$};
\node at (2.5, 5) {$r+1$};
\node at (5, 5) {$r+2$};
\node at (7.5, 5) {$r+3$};
\node at (10, 5) {$r+4$};

\node at (-1.5, 4) {$v_0$};
\node at (-1.5, 3) {$v_1$};
\node at (-1.5, 2) {$v_2$};
\node at (-1.5, 1) {$v_3$};

\node[boost, leader] (v0r) at (0, 4) {$L1_{a}$};
\node[boost] (v1r) at (0, 3) {};
\node[boost] (v2r) at (0, 2) {};
\node[boost] (v3r) at (0, 1) {$L1_b$};

\node[supporting_validator] (v0r1) at (2.5, 4) {};
\node[supporting_validator] (v1r1) at (2.5, 3) {};
\node[supporting_validator] (v2r1) at (2.5, 2) {};
\node[boost] (v3r1) at (2.5, 1) {};

\node[supporting_validator] (v0r2) at (5, 4) {};
\node[supporting_validator] (v1r2) at (5, 3) {};
\node[supporting_validator] (v2r2) at (5, 2) {};
\node[supporting_validator] (v3r2) at (5, 1) {};

\node[supporting_validator] (v0r3) at (7.5, 4) {};
\node[supporting_validator] (v1r3) at (7.5, 3) {};
\node[supporting_validator] (v2r3) at (7.5, 2) {};
\node[supporting_validator] (v3r3) at (7.5, 1) {};

\node[supporting_validator] (v0r4) at (10, 4) {};
\node[supporting_validator] (v1r4) at (10, 3) {};
\node[supporting_validator] (v2r4) at (10, 2) {};
\node[supporting_validator] (v3r4) at (10, 1) {};

\draw[support] (v0r1.west) -- (v0r);
\draw[->] (v0r1.west) -- (v1r);
\draw[->] (v0r1.west) -- (v2r);

\draw[support] (v1r1.west) -- (v0r);
\draw[->] (v1r1.west) -- (v1r);
\draw[->] (v1r1.west) -- (v2r);

\draw[support] (v2r1.west) -- (v0r);
\draw[->] (v2r1.west) -- (v1r);
\draw[->] (v2r1.west) -- (v2r);

\draw[->] (v3r1.west) -- (v1r);
\draw[->] (v3r1.west) -- (v2r);
\draw[->] (v3r1.west) -- (v3r);

\draw[support] (v0r2.west) -- (v0r1);
\draw[support] (v0r2.west) -- (v1r1);
\draw[support] (v0r2.west) -- (v2r1);

\draw[support] (v1r2.west) -- (v0r1);
\draw[support] (v1r2.west) -- (v1r1);
\draw[support] (v1r2.west) -- (v2r1);

\draw[support] (v2r2.west) -- (v0r1);
\draw[support] (v2r2.west) -- (v1r1);
\draw[support] (v2r2.west) -- (v2r1);

\draw[support] (v3r2.west) -- (v1r1);
\draw[support] (v3r2.west) -- (v2r1);
\draw[->] (v3r2.west) -- (v3r1);

\draw[support] (v0r3.west) -- (v0r2);
\draw[support] (v0r3.west) -- (v1r2);
\draw[support] (v0r3.west) -- (v2r2);

\draw[support] (v1r3.west) -- (v0r2);
\draw[support] (v1r3.west) -- (v1r2);
\draw[support] (v1r3.west) -- (v2r2);

\draw[support] (v2r3.west) -- (v0r2);
\draw[support] (v2r3.west) -- (v1r2);
\draw[support] (v2r3.west) -- (v2r2);

\draw[support] (v3r3.west) -- (v1r2);
\draw[support] (v3r3.west) -- (v2r2);
\draw[support] (v3r3.west) -- (v3r2);

\draw[support] (v0r4.west) -- (v0r3);
\draw[support] (v0r4.west) -- (v1r3);
\draw[support] (v0r4.west) -- (v2r3);

\draw[support] (v1r4.west) -- (v0r3);
\draw[support] (v1r4.west) -- (v1r3);
\draw[support] (v1r4.west) -- (v2r3);

\draw[support] (v2r4.west) -- (v0r3);
\draw[support] (v2r4.west) -- (v1r3);
\draw[support] (v2r4.west) -- (v2r3);

\draw[support] (v3r4.west) -- (v1r3);
\draw[support] (v3r4.west) -- (v2r3);
\draw[support] (v3r4.west) -- (v3r3);

\node[draw, dashed, rounded corners, fit=(v0r) (v3r) (v0r4) (v3r4), inner sep=8pt, label={[align=center]below:$w$}] {};

\end{tikzpicture}
\label{fig:mahi_mahi_direct_commit}
\label{fig:mahi_mahi_direct_commit}
\caption{Certificate Pattern (Asynchronous mode $w=5$): $L1_a$ is marked as \texttt{to-commit} because validator $v_0$, $v_1$ and $v_2$ voted for leader block $L1_a$. This means, in the case of \tarpon's asynchronous mode, leader block $L1_b$ has been marked as \texttt{to-commit} because it appeared in the causal history of the blocks proposed by validators $v_0,v_1,v_2$ in round $r+3$ (i.e., the voting round).}
\end{figure}
\begin{figure}[H]
\centering
\begin{tikzpicture}[
    node distance=1.5cm and 2.5cm,
    validator/.style={draw, minimum size=0.8cm},
    propose/.style={validator, thick},
    boost/.style={validator, draw=black},
    vote/.style={validator, draw=orange, thick},
    certify/.style={validator, draw=green!60!black, thick},
    supporting_validator/.style={validator, fill=green!20},
    support/.style={->, green!60!black, thick},
    nosupport/.style={->, red!60!black, thick},
    >=Stealth,
    leader/.style={validator, fill=red!20},
    vote_support/.style={->, orange, thick},
    skipping_validator/.style={validator, fill=pink!40},
    cert_support/.style={->, green!60!black, thick},
    >=Stealth
]
\footnotesize
\node at (0, 5) {$r$};
\node at (2.5, 5) {$r+1$};
\node at (5, 5) {$r+2$};
\node at (7.5, 5) {$r+3$};
\node at (10, 5) {$r+4$};

\node at (-1.5, 4) {$v_0$};
\node at (-1.5, 3) {$v_1$};
\node at (-1.5, 2) {$v_2$};
\node at (-1.5, 1) {$v_3$};

\node[boost] (v0r) at (0, 4) {$L1_{a}$};
\node[boost] (v1r) at (0, 3) {};
\node[boost] (v2r) at (0, 2) {};
\node[boost,leader] (v3r) at (0, 1) {$L1_b$};

\node[boost] (v0r1) at (2.5, 4) {};
\node[boost] (v1r1) at (2.5, 3) {};
\node[boost] (v2r1) at (2.5, 2) {};
\node[boost] (v3r1) at (2.5, 1) {};

\node[boost] (v0r2) at (5, 4) {};
\node[boost] (v1r2) at (5, 3) {};
\node[boost] (v2r2) at (5, 2) {};
\node[boost] (v3r2) at (5, 1) {};

\node[skipping_validator] (v0r3) at (7.5, 4) {};
\node[skipping_validator] (v1r3) at (7.5, 3) {};
\node[skipping_validator] (v2r3) at (7.5, 2) {};
\node[boost] (v3r3) at (7.5, 1) {};

\node[boost] (v0r4) at (10, 4) {};
\node[boost] (v1r4) at (10, 3) {};
\node[boost] (v2r4) at (10, 2) {};
\node[boost] (v3r4) at (10, 1) {};

\draw[->] (v0r1.west) -- (v0r);
\draw[->] (v0r1.west) -- (v1r);
\draw[->] (v0r1.west) -- (v2r);

\draw[->] (v1r1.west) -- (v0r);
\draw[->] (v1r1.west) -- (v1r);
\draw[->] (v1r1.west) -- (v2r);

\draw[->] (v2r1.west) -- (v0r);
\draw[->] (v2r1.west) -- (v1r);
\draw[->] (v2r1.west) -- (v2r);

\draw[->] (v3r1.west) -- (v1r);
\draw[->] (v3r1.west) -- (v2r);
\draw[->] (v3r1.west) -- (v3r);

\draw[->] (v0r2.west) -- (v0r1);
\draw[->] (v0r2.west) -- (v1r1);
\draw[->] (v0r2.west) -- (v2r1);

\draw[->] (v1r2.west) -- (v0r1);
\draw[->] (v1r2.west) -- (v1r1);
\draw[->] (v1r2.west) -- (v2r1);

\draw[->] (v2r2.west) -- (v0r1);
\draw[->] (v2r2.west) -- (v1r1);
\draw[->] (v2r2.west) -- (v2r1);

\draw[->] (v3r2.west) -- (v1r1);
\draw[->] (v3r2.west) -- (v2r1);
\draw[->] (v3r2.west) -- (v3r1);

\draw[->] (v0r3.west) -- (v0r2);
\draw[->] (v0r3.west) -- (v1r2);
\draw[->] (v0r3.west) -- (v2r2);

\draw[->] (v1r3.west) -- (v0r2);
\draw[->] (v1r3.west) -- (v1r2);
\draw[->] (v1r3.west) -- (v2r2);

\draw[->] (v2r3.west) -- (v0r2);
\draw[->] (v2r3.west) -- (v1r2);
\draw[->] (v2r3.west) -- (v2r2);

\draw[->] (v3r3.west) -- (v1r2);
\draw[->] (v3r3.west) -- (v2r2);
\draw[->] (v3r3.west) -- (v3r2);

\draw[->] (v0r4.west) -- (v0r3);
\draw[->] (v0r4.west) -- (v1r3);
\draw[->] (v0r4.west) -- (v2r3);

\draw[->] (v1r4.west) -- (v0r3);
\draw[->] (v1r4.west) -- (v1r3);
\draw[->] (v1r4.west) -- (v2r3);

\draw[->] (v2r4.west) -- (v0r3);
\draw[->] (v2r4.west) -- (v1r3);
\draw[->] (v2r4.west) -- (v2r3);

\draw[->] (v3r4.west) -- (v1r3);
\draw[->] (v3r4.west) -- (v2r3);
\draw[->] (v3r4.west) -- (v3r3);

\node[draw, dashed, rounded corners, fit=(v0r) (v3r) (v0r4) (v3r4), inner sep=8pt, label={[align=center]below:$w$}] {};

\end{tikzpicture}
\label{fig:mahi_mahi_direct_skip}
    \caption{Skip Pattern (Asynchronous mode $w=5$): Leader block $L1_d$ is marked as \texttt{to-skip} because validator $v_0$, $v_1$ and $v_2$ did not vote for $L1_b$. This means, in the case of \tarpon's asynchronous mode, leader block $L1_b$ has not been marked as \texttt{to-commit} because it did not appear in the causal history of the blocks proposed by validators $v_0,v_1,v_2$ in round $r+3$ (i.e., the voting round).}
\end{figure}
\newpage
\subsubsection{Indirect Rule: Partially-Synchronous Mode}
\label{tarpon_indirect_rule}
\begin{figure}[H]
\centering
\begin{tikzpicture}[
    node distance=1.5cm and 2.5cm,
    validator/.style={draw, minimum size=0.8cm},
    propose/.style={validator, thick},
    boost/.style={validator, draw=black},
    vote/.style={validator, draw=orange, thick},
    certify/.style={validator, draw=green!60!black, thick},
    supporting_validator/.style={validator, fill=green!20},
    support/.style={->, green!60!black, thick},
    >=Stealth,
    leader/.style={validator, fill=green!20},
    vote_support/.style={->, orange, thick},
    skipping_validator/.style={validator, fill=pink!40},
    cert_support/.style={->, green!60!black, thick},
    >=Stealth
]
\footnotesize
\node at (0, 5) {$r$};
\node at (2.5, 5) {$r+1$};
\node at (5, 5) {$r+2$};
\node at (7.5, 5) {$r+3$};
\node at (10, 5) {$r+4$};
\node at (12.5, 5) {$r+5$};
\node at (-1.5, 4) {$v_0$};
\node at (-1.5, 3) {$v_1$};
\node at (-1.5, 2) {$v_2$};
\node at (-1.5, 1) {$v_3$};

\node[supporting_validator, boost] (v0r) at (0, 4) {$L1_{a}$};
\node[boost] (v1r) at (0, 3) {};
\node[boost] (v2r) at (0, 2) {};
\node[boost] (v3r) at (0, 1) {};

\node[supporting_validator] (v0r1) at (2.5, 4) {};
\node[supporting_validator] (v1r1) at (2.5, 3) {};
\node[supporting_validator] (v2r1) at (2.5, 2) {};
\node[boost] (v3r1) at (2.5, 1) {};

\node[supporting_validator] (v0r2) at (5, 4) {};
\node[boost] (v2r2) at (5, 2) {};
\node[boost] (v3r2) at (5, 1) {};

\node[boost] (v0r3) at (7.5, 4) {};
\node[boost, leader] (v1r3) at (7.5, 3) {$L2_a$};
\node[boost] (v2r3) at (7.5, 2) {};
\node[boost] (v3r3) at (7.5, 1) {};

\node[supporting_validator] (v0r4) at (10, 4) {};
\node[supporting_validator] (v1r4) at (10, 3) {};
\node[supporting_validator] (v2r4) at (10, 2) {};
\node[supporting_validator] (v3r4) at (10, 1) {};

\node[supporting_validator] (v0r5) at (12.5, 4) {};
\node[supporting_validator] (v1r5) at (12.5, 3) {};
\node[supporting_validator] (v2r5) at (12.5, 2) {};
\node[supporting_validator] (v3r5) at (12.5, 1) {};
\draw[support] (v0r1.west) -- (v0r);
\draw[->] (v0r1.west) -- (v1r);
\draw[->] (v0r1.west) -- (v2r);

\draw[support] (v1r1.west) -- (v0r);
\draw[->] (v1r1.west) -- (v1r);
\draw[->] (v1r1.west) -- (v2r);

\draw[support] (v2r1.west) -- (v0r);
\draw[->] (v2r1.west) -- (v1r);
\draw[->] (v2r1.west) -- (v2r);

\draw[->] (v3r1.west) -- (v1r);
\draw[->] (v3r1.west) -- (v2r);
\draw[->] (v3r1.west) -- (v3r);

\draw[support] (v0r2.west) -- (v0r1);
\draw[support] (v0r2.west) -- (v1r1);
\draw[support] (v0r2.west) -- (v2r1);

\draw[->] (v2r2.west) -- (v0r1);
\draw[->] (v2r2.west) -- (v1r1);
\draw[->] (v2r2.west) -- (v2r1);

\draw[->] (v3r2.west) -- (v1r1);
\draw[->] (v3r2.west) -- (v2r1);
\draw[->] (v3r2.west) -- (v3r1);

\draw[->] (v0r4.west) -- (v0r3);
\draw[support] (v0r4.west) -- (v1r3);
\draw[->] (v0r4.west) -- (v2r3);

\draw[->] (v1r4.west) -- (v0r3);
\draw[support] (v1r4.west) -- (v1r3);
\draw[->] (v1r4.west) -- (v2r3);

\draw[->] (v2r4.west) -- (v0r3);
\draw[support] (v2r4.west) -- (v1r3);
\draw[->] (v2r4.west) -- (v2r3);

\draw[support] (v3r4.west) -- (v1r3);
\draw[->] (v3r4.west) -- (v2r3);
\draw[->] (v3r4.west) -- (v3r3);

\draw[->] (v0r3.west) -- (v0r2);
\draw[->] (v0r3.west) -- (v2r2);
\draw[->] (v0r3.west) -- (v3r2);

\draw[support] (v1r3.west) -- (v0r2);
\draw[->] (v1r3.west) -- (v2r2);
\draw[->] (v1r3.west) -- (v3r2);

\draw[->] (v2r3.west) -- (v0r2);
\draw[->] (v2r3.west) -- (v2r2);
\draw[->] (v2r3.west) -- (v3r2);

\draw[->] (v3r3.west) -- (v0r2);
\draw[->] (v3r3.west) -- (v2r2);
\draw[->] (v3r3.west) -- (v3r2);

\draw[support] (v0r5.west) -- (v0r4);
\draw[support] (v0r5.west) -- (v1r4);
\draw[support] (v0r5.west) -- (v2r4);

\draw[support] (v1r5.west) -- (v0r4);
\draw[support] (v1r5.west) -- (v1r4);
\draw[support] (v1r5.west) -- (v2r4);

\draw[support] (v2r5.west) -- (v0r4);
\draw[support] (v2r5.west) -- (v1r4);
\draw[support] (v2r5.west) -- (v2r4);

\draw[support] (v3r5.west) -- (v1r4);
\draw[support] (v3r5.west) -- (v2r4);
\draw[->] (v3r5.west) -- (v3r4);

\node[draw, dashed, rounded corners, fit=(v0r) (v3r) (v0r2) (v3r2), inner sep=8pt, label={[align=center]below:$w$}] {};

\node[draw, dashed, rounded corners, fit=(v0r3) (v3r3) (v0r5) (v3r5), inner sep=8pt, label={[align=center]below:$w+1$}] {};
\end{tikzpicture}
\caption{Example of \tarpon's partially-synchronous mode using the indirect rule to mark $L1_a$ as \texttt{to-commit}. While $L1_a$ has the required $2f+1$ votes from blocks in round $r+1$, it's lacking $2f+1$ certificates. Block $L2_a$ has been directly marked as \texttt{to-commit}, and, since there is a certificate link (i.e. $L2_a$ observes $2f+1$ votes for $L1_a$ in its causal history) between $L2_a$ and $L1_a$, $L1_a$ can be marked as \texttt{to-commit}.}
\label{fig:mysticeti_indirect_commit}
\end{figure}
\begin{figure}[H]
\centering
\begin{tikzpicture}[
    node distance=1.5cm and 2.5cm,
    validator/.style={draw, minimum size=0.8cm},
    propose/.style={validator, thick},
    boost/.style={validator, draw=black},
    vote/.style={validator, draw=orange, thick},
    certify/.style={validator, draw=green!60!black, thick},
    supporting_validator/.style={validator, fill=green!20},
    support/.style={->, green!60!black, thick},
    nosupport/.style={->, red!60!black, thick},
    >=Stealth,
    leader/.style={validator, fill=green!20},
    vote_support/.style={->, orange, thick},
    skipping_validator/.style={validator, fill=pink!40},
    cert_support/.style={->, green!60!black, thick},
    >=Stealth
]
\footnotesize
\node at (0, 5) {$r$};
\node at (2.5, 5) {$r+1$};
\node at (5, 5) {$r+2$};
\node at (7.5, 5) {$r+3$};
\node at (10, 5) {$r+4$};
\node at (12.5, 5) {$r+5$};
\node at (-1.5, 4) {$v_0$};
\node at (-1.5, 3) {$v_1$};
\node at (-1.5, 2) {$v_2$};
\node at (-1.5, 1) {$v_3$};

\node[boost, skipping_validator] (v0r) at (0, 4) {$L1_{a}$};
\node[boost] (v1r) at (0, 3) {};
\node[boost] (v2r) at (0, 2) {};
\node[boost] (v3r) at (0, 1) {};

\node[boost, skipping_validator] (v0r1) at (2.5, 4) {};
\node[boost, skipping_validator] (v1r1) at (2.5, 3) {};
\node[boost] (v2r1) at (2.5, 2) {};
\node[boost] (v3r1) at (2.5, 1) {};

\node[boost, skipping_validator] (v0r2) at (5, 4) {};
\node[boost, skipping_validator] (v1r2) at (5, 3) {};
\node[boost, skipping_validator] (v2r2) at (5, 2) {};
\node[boost] (v3r2) at (5, 1) {};

\node[boost] (v0r3) at (7.5, 4) {};
\node[boost, leader] (v1r3) at (7.5, 3) {$L2_a$};
\node[boost] (v2r3) at (7.5, 2) {};
\node[boost] (v3r3) at (7.5, 1) {};

\node[supporting_validator] (v0r4) at (10, 4) {};
\node[supporting_validator] (v1r4) at (10, 3) {};
\node[supporting_validator] (v2r4) at (10, 2) {};
\node[supporting_validator] (v3r4) at (10, 1) {};

\node[supporting_validator] (v0r5) at (12.5, 4) {};
\node[supporting_validator] (v1r5) at (12.5, 3) {};
\node[supporting_validator] (v2r5) at (12.5, 2) {};
\node[supporting_validator] (v3r5) at (12.5, 1) {};

\draw[nosupport] (v0r1.west) -- (v0r);
\draw[->] (v0r1.west) -- (v1r);
\draw[->] (v0r1.west) -- (v2r);

\draw[nosupport] (v1r1.west) -- (v0r);
\draw[->] (v1r1.west) -- (v1r);
\draw[->] (v1r1.west) -- (v2r);

\draw[->] (v2r1.west) -- (v1r);
\draw[->] (v2r1.west) -- (v2r);
\draw[->] (v2r1.west) -- (v3r);

\draw[->] (v3r1.west) -- (v1r);
\draw[->] (v3r1.west) -- (v2r);
\draw[->] (v3r1.west) -- (v3r);

\draw[nosupport] (v0r2.west) -- (v0r1);
\draw[nosupport] (v0r2.west) -- (v1r1);
\draw[->] (v0r2.west) -- (v2r1);

\draw[nosupport] (v1r2.west) -- (v0r1);
\draw[nosupport] (v1r2.west) -- (v1r1);
\draw[->] (v1r2.west) -- (v2r1);

\draw[nosupport] (v2r2.west) -- (v0r1);
\draw[nosupport] (v2r2.west) -- (v1r1);
\draw[->] (v2r2.west) -- (v2r1);

\draw[->] (v3r2.west) -- (v1r1);
\draw[->] (v3r2.west) -- (v2r1);
\draw[->] (v3r2.west) -- (v3r1);

\draw[->] (v0r4.west) -- (v0r3);
\draw[support] (v0r4.west) -- (v1r3);
\draw[->] (v0r4.west) -- (v2r3);

\draw[->] (v1r4.west) -- (v0r3);
\draw[support] (v1r4.west) -- (v1r3);
\draw[->] (v1r4.west) -- (v2r3);

\draw[->] (v2r4.west) -- (v0r3);
\draw[support] (v2r4.west) -- (v1r3);
\draw[->] (v2r4.west) -- (v2r3);

\draw[support] (v3r4.west) -- (v1r3);
\draw[->] (v3r4.west) -- (v2r3);
\draw[->] (v3r4.west) -- (v3r3);

\draw[->] (v0r3.west) -- (v0r2);
\draw[->] (v0r3.west) -- (v1r2);
\draw[->] (v0r3.west) -- (v2r2);

\draw[nosupport] (v1r3.west) -- (v0r2);
\draw[nosupport] (v1r3.west) -- (v1r2);
\draw[nosupport] (v1r3.west) -- (v2r2);

\draw[->] (v2r3.west) -- (v0r2);
\draw[->] (v2r3.west) -- (v1r2);
\draw[->] (v2r3.west) -- (v2r2);

\draw[->] (v3r3.west) -- (v1r2);
\draw[->] (v3r3.west) -- (v2r2);
\draw[->] (v3r3.west) -- (v3r2);

\draw[support] (v0r5.west) -- (v0r4);
\draw[support] (v0r5.west) -- (v1r4);
\draw[support] (v0r5.west) -- (v2r4);

\draw[support] (v1r5.west) -- (v0r4);
\draw[support] (v1r5.west) -- (v1r4);
\draw[support] (v1r5.west) -- (v2r4);

\draw[support] (v2r5.west) -- (v0r4);
\draw[support] (v2r5.west) -- (v1r4);
\draw[support] (v2r5.west) -- (v2r4);

\draw[support] (v3r5.west) -- (v1r4);
\draw[support] (v3r5.west) -- (v2r4);
\draw[->] (v3r5.west) -- (v3r4);

\node[draw, dashed, rounded corners, fit=(v0r) (v3r) (v0r2) (v3r2), inner sep=8pt, label={[align=center]below:$w$}] {};

\node[draw, dashed, rounded corners, fit=(v0r3) (v3r3) (v0r5) (v3r5), inner sep=8pt, label={[align=center]below:$w+1$}] {};
\end{tikzpicture}
\caption{Example of \tarpon's partially-synchronous mode using the indirect rule to mark $L1_a$ as \texttt{to-skip}. While $L1_a$ does not have the required $2f+1$ votes from blocks in round $r+1$, it also does not have $2f+1$ blames required to directly skip the block. Leader block $L2_a$ has been directly marked as \texttt{to-commit}, and, since there is no certificate link (i.e. $L2_a$ does not observes $2f+1$ votes for $L1_a$ in its causal history) between $L2_a$ and $L1_a$, $L1_a$ can be safely marked as \texttt{to-skip}.} 
\label{fig:mysticeti_indirect_skip}
\end{figure}
\subsubsection{Example Evaluation Figures}
\begin{figure}[H]
    \centering
    \includegraphics[width=1\linewidth, height=4cm]{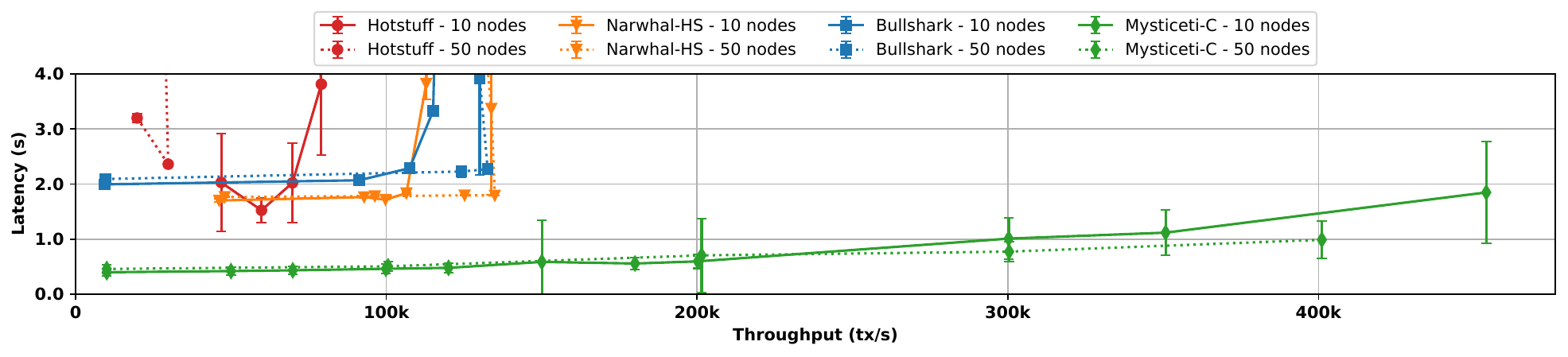}
    \caption{Throughput and Latency graph comparing \textsf{Mysticeti-C}\xspace with other state-of-the-art consensus protocols, as presented in the works of Babel et al. \cite{MYSTICETI}}
    \label{fig:mysticeti_results}
\end{figure}
\vspace{5em}
\begin{figure}[H]
    \centering
    \includegraphics[width=1\linewidth,height=4.5cm]{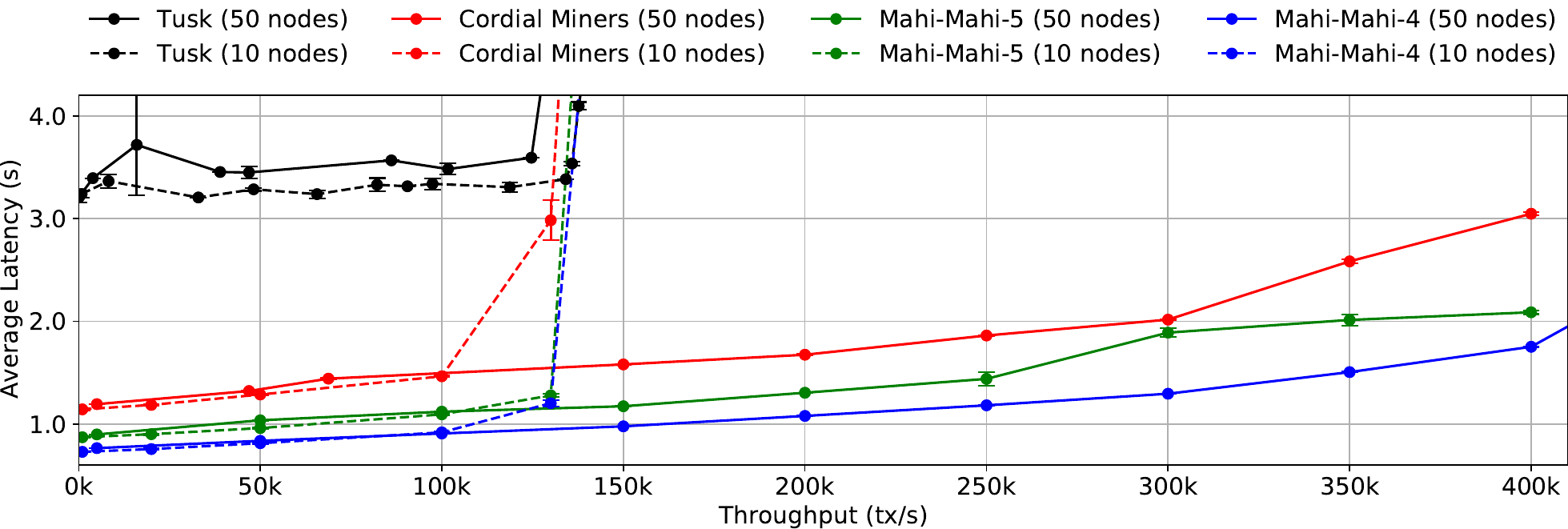}
    \caption{Throughput and Latency graph comparing \textsf{Mahi-Mahi}\xspace with other state-of-the-art consensus protocols, as presented in the works of Jovanovic et al, \cite{MAHIMAHI}}
    \label{fig:mahi_results}
\end{figure}
\chapter{Algorithms}
\label{Appendix:Algorithms}
\quad In this section we present the algorithms for \tarpon in pseudo-code. Algorithm~\ref{alg:main} and Algorithm~\ref{alg:decider} represents \tarpon's main function and decider instance. As \tarpon builds upon \textsf{Mysticeti}\xspace and \textsf{Mahi-Mahi}\xspace, the algorithms presented here adapt those used in the two protocols \cite{MYSTICETI, MAHIMAHI}, modified to support the dual-mode functionality. We begin with Algorithm~\ref{alg:main}, which assumes a \texttt{Scheduler} instance is already available. Everything passes through the \textsc{TryCommit} process, and is called whenever a valid block is presented to a validator. This process  calls the \textsc{TryDecide} process which will use the decision rules (talked about in Section~\ref{subsection:decision_rules}) to try and classify each slot in the DAG from $r_{committed}$ to $r_{highest}$. It will first try the \textsc{TryDirectDecide} to mark the slot via the direct rule, if it fails, it will then try to mark the slot via the indirect rule, via the \textsc{TryIndirectDecide} procedure. Before calling either decision rule, the round number is given to the \texttt{scheduler} to determine which mode of operation the decision rule should follow. The \textsc{TryCommit} process will then return a sequence of leaders, and commit as many in the sequence as possible. The returned sequence is truncated into two scenarios: one is, when the first undecided leader is encountered, and the other is, when one of the committed leaders is an asynchronous leader (i.e., the leader's round equals the current asynchronous rate). In the latter case, the sequence is cut short and we invoke process \textsc{addSequence} from Algorithm~\ref{alg:dual_mode}. In all cases, the sequence of leaders is also given to \tarpon's dual-mode scheduler, for later analysis, by calling the process \textsc{addSequence} as defined in Algorithm~\ref{alg:dual_mode}.
\begin{algorithm}[H]
\caption{Dual Mode Protocol Main Function}
\label{alg:main}
\scriptsize
\begin{algorithmic}[1]
\State $\texttt{waveLength}$ \Comment{See Section~\ref{subsection:roundandwave}}
\State $\texttt{proposersPerRound}$ \Comment{See Section~\ref{sec:prop_slot_leader}}
\State $\texttt{scheduler}$ \Comment{Algorithm \ref{alg:dual_mode}}
\State
\Procedure{TryCommit}{$r_{\text{committed}}, r_{\text{highest}}$}
\State $L \gets$ \textsc{TryDecide}($r_{\text{committed}}, r_{\text{highest}}$) 
\State $L_{\text{commit}} \gets [ ]$ \Comment{Hold decided leader sequence}
\For{status $\in L$}
\If{status $= \bot$}
\State \textbf{break} \Comment{Stop at first undecided leader}
\EndIf
\If{status $= $ \textsc{commit}($b_{\text{leader}}$)}
\If{\texttt{scheduler}.\textsc{isAsync}($r_{b_{\text{leader}}}$)} \Comment{Stop if committed leader is asynchronous}
\State $L_{\text{commit}} \gets L_{\text{commit}} \parallel b_{\text{leader}}$
\State \texttt{scheduler}.\textsc{addSequence}($L_{\text{commit}}$) \Comment{Add leader sequence to \texttt{scheduler}}
\State \texttt{scheduler}.\textsc{UpdateAsyncSchedule} \Comment{Update Asynchronous Schedule}
\State \textbf{return} $L_{\text{commit}}$ \Comment{Stop processing}
\Else
\State $L_{\text{commit}} \gets L_{\text{commit}} \parallel b_{\text{leader}}$ \Comment{Keep going}
\EndIf
\EndIf
\EndFor
\State \texttt{scheduler}.\textsc{addSequence}($L_{\text{commit}}$) \Comment{Add leader sequence to \texttt{scheduler}}
\State \textbf{return} $L_{\text{commit}}$
\EndProcedure
\State
\Procedure{TryDecide}{$r_{\text{committed}}, r_{\text{highest}}$}
\State $L \gets [ ]$ \Comment{Leader decision sequence}
\For{$r \gets r_{\text{highest}}$ \textbf{down to} $r_{\text{committed}} + 1$}
\State $\text{async} \gets \texttt{scheduler}.\textsc{isAsync($r$)}$ \Comment{Determine mode of operation for $r$}
\For{$l \gets 0$ \textbf{to} \texttt{proposerPerRound} $- 1$}
\State $D \gets$ \textsc{GetWaveData}($r$, $l$) 
\State $w \gets$ D.\textsc{WaveNumber}($r$)
\If{D.\textsc{ProposeRound}($w$) $= r$}
\State $\text{status} \gets$ D.\textsc{TryDirectDecide}($w$, \text{async}) \Comment{Apply direct decision rule}
\If{$\text{status} = \bot$}
\State $\text{status} \gets$ D.\textsc{TryIndirectDecide}($w$, \text{async}) \Comment{Apply indirect decision rule}
\EndIf
\State $L \gets \text{status} \parallel L$
\EndIf
\EndFor
\EndFor
\State \textbf{return} $L$ \Comment{May still contain undecided leaders}
\EndProcedure
\end{algorithmic}
\end{algorithm}

\quad Algorithm~\ref{alg:decider} represents the decider functions, functions that handle the logic for proposer slots. These functions are, for the most part, very similar to the decider functions found within \textsf{Mysticeti}\xspace and \textsf{Mahi-Mahi}\xspace \cite{MYSTICETI,MAHIMAHI}. However, they differ specifically in the \textsc{CertifyRound} and \textsc{LeaderBlock} processes. 
\begin{itemize}
\item \textsc{CertifyRound} (Line 12): This function takes two arguments: $w$ represents the wave in question, and $async$ represents whether the wave should be interpreted using asynchronous mode or partially-synchronous mode.
\item \textsc{LeaderBlock} (Line 21): This procedure combines the logic of \textsf{Mysticeti}\xspace's and \textsf{Mahi-Mahi}\xspace's leader election systems. If the wave is to be run in asynchronous mode, an additional offset, obtained from the combined coin shares, is added before electing the leader, introducing randomness to the leader election. If the wave is to be run under the partially-synchronous mode, then the  deterministic process \textsc{GetLeader} will be used, just like Mysticeti \cite{MAHIMAHI, MYSTICETI}.
\end{itemize}
\quad The certify round will change depending on the mode of execution of the specific round. For example, if we are checking the certify round for round $r$, utilising the partially-synchronous wave length, we will have $r+1$, while within an asynchronous wave we might have either $r+3$ or $r+4$ depending on the chosen asynchronous wave length. Similarly, in the \textsc{LeaderBlock} process, if \textsc{async}, then when determining the leader, we also release the common coin. However, it is important to note, the implementation of the common coin in asynchronous consensus protocols is an open research question. This means that \tarpon, like all other asynchronous protocols we have seen, does not actually implement the mechanism, but adoption of such mechanism can be easily inserted. Lastly, regarding \textsc{WaveNumber}, since \tarpon's partially-synchronous mode is the "base" mode for \tarpon, we do not need to change how the wave number is calculated. \tarpon's asynchronous mode will always begin after a partially-synchronous wave. Thus, to maintain simplicity, we utilize the partially-synchronous mode as baseline counting.
\\
\\
\null \quad Lastly, we have Algorithm~\ref{alg:dual_mode}. This represents \tarpon's dual-mode scheduler, and contains all the logic for the scheduling of \tarpon's asynchronous mode. Of note, decided leaders obtained from \textsc{TryCommit} are stored inside a schedule instance (added through the \textsc{addSequence} process).
\begin{itemize}
    \item \textsc{UpdateAsyncSchedule} (Line 7): This process is called by the\textsc{TryCommit} process, and initiates the asynchronous scheduling update. It will always be the case, when this process is called, that the newest leader block in \texttt{decidedLeaders} is a leader which has been committed through the asynchronous mode. This leader is popped, and its subDAG is obtained via \textsc{getSubDag}, which is then used to count the direct commits that have happened. Lastly, the schedule is updated accordingly, \texttt{decidedLeaders} is drained, \texttt{lastAsyncRound} and \texttt{asyncInterval} are updated.  
    \item \textsc{getSubDag} (Line 15): This process collects all the blocks from an anchor $v$, till a specified minimum round $r_{min}$. $r_{min}$ is the oldest leader block found in \texttt{decidedLeaders}  because we want the subDAG to find if leaders in \texttt{decidedLeaders} have been directly committed. We then collect all blocks that are reachable from $v$, including ones from $r_{min}$. The process then returns a topological linearization of $v$'s subDAG. 
    \item \textsc{CountDirectCommits} (Line 30): counts direct commits by first finding and storing blocks that have voted for a leader $\ell$ inside of list $V_{\ell}$. This is done by looking at blocks found in round $r_\ell +1$ (i.e., the voting round). Then, in round $r_\ell + 2$, we store, in $C_\ell$, how many blocks reference at least $2f+1$ of the blocks found in $V_\ell$. If $C_\ell$ has a magnitude of at least $2f+1$, then we know leader block $\ell$ has been directly committed within this specific subDAG. We repeat this process for all leader blocks marked as \texttt{to-commit} in \texttt{decidedLeaders}.
\end{itemize}
\quad \textsc{CountDirectCommits} thus returns the count of direct commits in the subDAG, allowing us to obtain a ratio of directly committed leaders against the expected number of directly committed leaders (\textsc{decidedLeaders.len}). This approach provides a deterministic decision regarding the network state, as all leaders are processed in order, and the first encountered asynchronous leader will trigger a possible schedule update.
\begin{algorithm}[H]
\caption{Decider Functions}
\label{alg:decider}
\begin{algorithmic}[1]
\scriptsize
\State $\texttt{waveLength}$ \Comment{Set to 3}
\State $\texttt{waveLengthAsync}$ \Comment{Set to 4 or 5}
\State $\texttt{waveOffset}$ \Comment{for pipeline}
\State $\texttt{leaderOffset}$ \Comment{base\_committer}

\State
\Procedure{WaveNumber}{$r$}
\State \textbf{return} $(r - \texttt{waveOffset})/\texttt{waveLength}$
\EndProcedure

\State
\Procedure{ProposeRound}{$w$}
\State \textbf{return} $w \cdot \texttt{waveLength} + \texttt{waveOffset}$
\EndProcedure

\State
\Procedure{CertifyRound}{$w$, async}
\If{async}
\State \textbf{return} $w \cdot \texttt{waveLengthAsync} + \texttt{waveLength} - 1 + \texttt{waveOffset}$ \Comment{Calculate certify round via \texttt{waveLengthAsync}}
\Else
\State \textbf{return} $w \cdot \texttt{waveLength} + \texttt{waveLength} - 1 + \texttt{waveOffset}$ \Comment{Calculate certify round via \texttt{waveLength}}
\EndIf
\EndProcedure

\State
\Procedure{VoteRound}{$w$}
\State \textbf{return} \textsc{CertifyRound}($w$) $- \;1$ 
\EndProcedure

\State
\Procedure{LeaderBlock}{$w$, async}
\State $r_{\text{propose}}, r_{\text{certify}} \gets$ \textsc{ProposeRound}($w$), \textsc{CertifyRound}($w$)
\If{async}
\State $c \gets$ \textsc{CombineCoinShares}($\{b_{\text{share}} :$ $b \in \text{DAG}[r_{\text{certify}}, *]\}$) \Comment{Common coin for async leader election}
\State $l \gets c +$ \texttt{leaderOffset} \Comment{Modulo committee size}
\Else
\State $l \gets$ \textsc{GetLeader}($r_{\text{propose}} +$ \texttt{leaderOffset}) \Comment{Deterministic}
\EndIf
\State \textbf{return} $\text{DAG}[r_{\text{propose}}, l]$ \Comment{May return more than one block in case of equivocations}
\EndProcedure

\State
\Procedure{SkippedLeader}{$w$, $b_{\text{leader}}$}
\State $r_{\text{vote}} \gets$ \textsc{VoteRound}($w$)
\State \textbf{return} $|\{\neg$\textsc{IsVote}($b$, $b_{\text{leader}}$) $: b \in \text{DAG}[r_{\text{vote}}, *]\}| \geq 2f + 1$
\EndProcedure

\State
\Procedure{SupportedLeader}{$w$, $b_{\text{leader}}$}
\State $r_{\text{certify}} \gets$ \textsc{CertifyRound}($w$)
\State \textbf{return} $|\{$\textsc{IsCert}($b$, $b_{\text{leader}}$) $:b \in \text{DAG}[r_{\text{certify}}, *]\}| \geq 2f + 1$
\EndProcedure

\State
\Procedure{TryDirectDecide}{$w$}
\For{$b_{\text{leader}} \in$ LeaderBlock($w$)} \Comment{Loop over equivocations}
\If{\textsc{SkippedLeader}($w$, $b_{\text{leader}}$)}
\State \textbf{return} skip($w$)
\EndIf
\If{\textsc{SupportedLeader}($w$, $b_{\text{leader}}$)}
\State \textbf{return} commit($b_{\text{leader}}$)
\EndIf
\EndFor
\State \textbf{return} $\bot$
\EndProcedure

\State
\Procedure{TryIndirectDecide}{$w$, $S$}
\State $s_{\text{anchor}} \gets$ $\min\{$$s \in S:$ $r_{\text{certify}} < s_{\text{round}} \land s \neq$ \textsc{skip}($w$)$\}$\If{$s_{\text{anchor}} =$ \textsc{commit}($b_{\text{anchor}}$)}
\If{$\exists b_{\text{leader}} \in$ \textsc{LeaderBlock}($w$) $:$ \textsc{IsCertifiedLink}($b_{\text{anchor}}$, $b_{\text{leader}}$)}
\State \textbf{return} \textsc{commit}($b_{\text{leader}}$)
\Else
\State \textbf{return} \textsc{skip}($w$)
\EndIf
\EndIf
\State \textbf{return} $\bot$ \Comment{The anchor is undecided or not found}
\EndProcedure

\end{algorithmic}
\end{algorithm}
\begin{algorithm}[H]
\caption{Dual Mode Scheduler Algorithm}
\label{alg:dual_mode}
\scriptsize
\begin{algorithmic}[1]
\State $\texttt{decidedLeaders} \gets \{\}$ \Comment{Stored sequence of decided leaders}
\State $\texttt{lastAsyncRound} \gets 0$ \Comment{Last asynchronous round, begins at 0}
\State $\texttt{asyncInterval}$ \Comment{Current asynchronous interval frequency}
\State $\texttt{targetRatio}$\Comment{Target ratio between direct and expected commits}
\State $\texttt{scalingFactor}$\Comment{Scaling Factor for changes to \texttt{asyncInterval}}

\State
  \Procedure{UpdateAsyncSchedule}{}
  \State $v \gets \textsc{pop}(\texttt{decidedLeaders})$
  \State $O \gets \text{\textsc{getSubdag}($v$)}$
  \State $\mathcal{R} \gets \frac{\textsc{CountDirectCommits}(O)}{\texttt{decideLeaders.len}}$
  \State $\texttt{asyncInterval} \gets \textsc{changeAsyncRate}(\mathcal{R})$ \Comment{Deterministic function that changes the interval}
  \State $\texttt{lastAsyncRound} \gets v_\text{round}$ 
  \State $\texttt{decidedLeaders} \gets \textsc{\{\}}$
  \EndProcedure
  
\State
\Procedure{GetSubDag}{$v$}
\State $r_{\text{min}} \gets \min\{\ell.\text{round} : \ell \in \texttt{decidedLeaders}\}$
\State $O \gets []$
\For{$b_{\text{vertex}} \in v$}
\State $B \gets \{b \in \bigcup_{r \geq r_{\text{min}}} DAG[r, *] : \textsc{IsLink}(b, b_{\text{vertex}}) \land b \notin O\}$ \Comment{Get leaders from $r_\text{min}$ to $r$}
\For{$b \in B$}
\State $O \gets O \parallel b$
\EndFor
\EndFor
\State \textbf{return} $O$
\EndProcedure

\State
\Procedure{ChangeAsyncRate}{$\mathcal{R}$}
 \Comment{Compute direct commit ratio}
\If{$\mathcal{R} \geq \texttt{targetRatio}$}
\State $\texttt{asyncInterval} \gets \texttt{asyncInterval} / \texttt{scalingFactor}$  \Comment{Decrease asynchronous interval frequency}
\Else
\State $\texttt{asyncInterval} \gets \texttt{asyncInterval} \cdot \texttt{scalingFactor}$  \Comment{Increase asynchronous interval frequency}
\EndIf
\EndProcedure

\State
\Procedure{CountDirectCommits}{$O$}
\State count $\gets 0$
\For{$\ell \in \{\texttt{decidedLeaders} \mid \texttt{decidedLeaders}.\texttt{decision} \text{ == commit}\}$}
\State $V_\ell \gets \{b \in \text{$O$}[r_\ell + 1] : \ell \in b\}$ \Comment{Get blocks that voted for $\ell$}
\State $C_\ell \gets \{c \in \text{$O$}[r_\ell + 2] : |V_\ell \cap c| \geq 2f+1\}$ \Comment{Count blocks that contain $2f+1$ blocks in $V_{\ell}$}
\If{$|C_\ell| \geq 2f+1$}
\State count $\gets$ count $+ 1$
\EndIf
\EndFor
\State \textbf{return} count
\EndProcedure
 \State
  \Procedure{addSequence}{$\text{leaders}$}
  \For{$\ell \in \text{leaders}$}
  \State $\texttt{decidedLeaders} \gets \texttt{decidedLeaders} \parallel \ell$
  \EndFor
  \EndProcedure
 \State
\Procedure{isAsync}{$r$}
\State \textbf{return} $r \; \% \; \texttt{lastAsyncRound}+\texttt{asyncInterval}==0$
 \EndProcedure
  
\end{algorithmic}
\end{algorithm}
\newpage
\quad Algorithm~\ref{alg:DAG_helper} represents some helper functions that are utilised by Algorithm~\ref{alg:main} and Algorithm~\ref{alg:decider} to utilise the DAG. These have been mainly taken from \textsf{Mahi-Mahi}\xspace and \textsf{Mysticeti}\xspace, as the operations performed on the DAG itself remain the same \cite{MAHIMAHI,MYSTICETI}.
\begin{algorithm}[H]
\caption{DAG Helper Functions}
\label{alg:DAG_helper}
\footnotesize
\begin{algorithmic}[1]
\Procedure{IsVote}{$b_{vote}, b_{leader}$}
\State \textbf{function} \textsc{VotedBlock}($b, id, r$)
\If{$r \geq b.round$} \textbf{return} $\perp$
\EndIf
\For{$b' \in b_\text{parents}$}
\If{$(b'_\text{author}, b'_\text{round}) = (id, r)$} \textbf{return} $b'$
\EndIf
\State $res \gets$ \textsc{VotedBlock}$(b', id, r)$
\If{$res \neq \perp$} \textbf{return} $res$
\EndIf
\EndFor
\State \textbf{return} $\perp$
\State $(id, r) \gets (b_{leader}.author, b_{leader}.round)$
\State \textbf{return} VotedBlock$(b_{vote}, id, r) = b_{leader}$
\EndProcedure
\State
\Procedure{IsCert}{$b_{cert}, b_{leader}$}
\State $res \gets |\{b \in b_{cert}.parents : \text{IsVote}(b, b_{leader})\}|$
\State \textbf{return} $res \geq 2f + 1$
\EndProcedure
\State
\Procedure{IsLink}{$b_{old}, b_{new}$}
\State \textbf{return} $\exists$ path $b_1, \ldots, b_k:$ $b_1 = b_{old}, b_k = b_{new}, b_{j-1} \in b_j.parents$
\EndProcedure
\State
\Procedure{IsCertifiedLink}{$b_{anchor}, b_{leader}$}
\State $w \gets$ WaveNumber$(b_{leader}.round)$
\State $B \gets$ GetDecisionBlocks$(w)$
\State \textbf{return} $\exists b \in B :$ IsCert$(b, b_{leader}) \land$ IsLink$(b, b_{anchor})$
\EndProcedure
\State
\end{algorithmic}
\end{algorithm}
\newpage
\chapter{Research Assistance Methods}
\quad I acknowledge the use of Anthropic's Claude (version 3.5 and 4 of Claude Sonnet, Anthropic, \url{https://www.anthropic.com/}) for guidance through the \textsf{Mysticeti}\xspace's codebase and assistance in learning the \texttt{Rust} programming language \cite{MYSTICETI, RUST}. Furthermore, I also acknowledge Claude for technical guidance in \LaTeX{} formatting and figure creation.
\end{document}